\documentclass[10pt,twoside,reqno,a4paper]{article}
\usepackage[T1]{fontenc}
\usepackage{a4wide}

\usepackage{amsfonts, amsmath, amssymb,latexsym,nccmath}
\usepackage[activeacute,english]{babel}
\usepackage{epsfig}
\usepackage{graphicx}
\usepackage[font=sf,sc]{caption}
\usepackage{float}
\usepackage{cancel}
\usepackage{color}
\usepackage[utf8]{inputenc}
\usepackage{amsthm}
\usepackage{bm}
\usepackage[all]{xy}
\usepackage{enumerate}
\usepackage{comment}
\setlength {\marginparwidth }{2cm}
\usepackage[colorinlistoftodos]{todonotes} 
\usepackage{bbm}

\usepackage[colorlinks=true,linkcolor=blue,urlcolor=blue,citecolor=blue]{hyperref}

\newcommand{\R}{{\mathbb {R}}}

\renewcommand{\leq}{\leqslant}
\renewcommand{\geq}{\geqslant}

\newcommand{\RR}{\mathbb{R}}

\newtheorem{theorem}{Theorem}
\newtheorem{thm}[theorem]{Theorem}
\theoremstyle{plain}

\newtheorem{definition}[theorem]{Definition}

\newtheorem{lem}[theorem]{Lemma}

\newtheorem{prop}[theorem]{Proposition}
\newtheorem{remark}[theorem]{Remark}
\newtheorem{obs}[theorem]{Observation}

\numberwithin{equation}{section}
\numberwithin{theorem}{section}

\allowdisplaybreaks

\theoremstyle{definition}

\usepackage{enumerate}


\usepackage[
backend=bibtex,
style=numeric,
sorting=nyt,
giveninits=true, 
doi=false,isbn=false,url=false,eprint=true,sortcites=true,maxbibnames=99,
]{biblatex}
\DeclareNameAlias{author}{family-given}
\appto{\bibsetup}{\sloppy}
\addbibresource{bibliographyThiele.bib}

\newenvironment{manuallemma}[1]{%
  \manuallemmainner
}{\endmanuallemmainner}

\newenvironment{manualprop}[1]{%
  \manualpropinner
}{\endmanualpropinner}

\usepackage{csquotes}
\RequirePackage{dsfont} 
\newcommand{\EE}{\mathbb{E}} 
\newcommand{\PP}{\mathbb{P}} 
\newcommand{\QQ}{\mathbb{Q}} 
\newtheorem{ass}{Assumption}

\title{Thiele's PIDE for unit-linked policies in the Heston-Hawkes stochastic volatility model}

\author{David R. Ba\~{n}os\thanks{Department of Mathematics, University of Oslo, P.O. Box 1053 Blindern, N-0316 Oslo, Norway, Email: davidru@math.uio.no}, 
Salvador Ortiz-Latorre\thanks{Department of Mathematics, University of Oslo, P.O. Box 1053 Blindern, N-0316 Oslo, Norway, Email: salvadoo@math.uio.no}, 
and 
Oriol Zamora Font\thanks{Department of Mathematics, University of Oslo, P.O. Box 1053 Blindern, N-0316 Oslo, Norway, Email: oriolz@math.uio.no}}
\date{\today}

 \begin{document}
\maketitle

\begin{abstract}
The main purpose of the paper is to derive Thiele's differential equation for unit-linked policies in the Heston-Hawkes stochastic volatility model introduced in \cite{arxiv}. This model is an extension of the well-known Heston model that incorporates the volatility clustering feature by adding a compound Hawkes process in the volatility. Since the model is arbitrage-free, pricing unit-linked policies via the equivalence principle under a risk neutral probability measure is possible. Studying the moments of the variance and certain stochastic exponentials, a suitable family of risk neutral probability measures is found. The established and practical method to compute reserves in life insurance is by solving Thiele's equation, which is crucial to guarantee the solvency of the insurance company. 

{\it Keywords}: Thiele's equation, reserve, unit-linked policy, life insurance policy, equivalence principle, stochastic volatility, risk neutral measure, Hawkes process, volatility with jumps. 

{\it AMS classification MSC2020}: 60G55, 60H30, 91G05, 91G15. 

{\it Declarations of interest}: none.
\end{abstract}

\section{Introduction}\label{Intro}

A life insurance policy is a contract that defines the evolution of the cash flow between the insurance company and the insured. In traditional life insurance contracts, the benefit provided to the insured if the pre-agreed conditions are met is fixed. In contrast, the distinctive property of unit-linked life insurance policies is that the benefit is linked to the market value of some fixed portfolio composed of stocks, bonds or other financial assets. Consequently, two types of risk prevail in unit-linked policies, the mortality risk that controls the future flow of payments between the two parties and the financial risk arising from the future returns on the financial investments.

The computation of the so-called reserve is the basis to guarantee the solvency of the insurance company against the two aforementioned risks. Namely, the reserve is the present value of future potential payments from the insurance company to the insured and is the ordinary procedure to determine the premiums. The established method to calculate the reserve is by solving the celebrated Thiele's differential equation, which dates back to 1875. It was first published in the obituary on Thiele in \cite{gram1910professor} and presented in the scientific paper \cite{germanpaper} in 1913.

Literature on pricing unit-linked policies is vast and we mention here a non-exhaustive list. Brennan and Schwartz \cite{BRENNAN1976195,f7a4f9e4-b3a8-3df5-afdf-dc140a564965} and Boyle and Schwartz \cite{eb7c44a5-55b7-3c26-944a-604759db8bcf} conducted, to the best of our knowledge, the first studies of unit-linked contracts using modern financial techniques. Delbaen \cite{delbaen} and Bacinello and Ortu \cite{bacinello} examined unit-linked contracts employing the martingale-based theory developed by Harrison and Kreps \cite{Harrison}. See also \cite{svein} where Aase and Persson revisit the principle of equivalence under a risk neutral probability measure for the pricing of unit-linked policies after presenting some historical context on the topic. 

It is acknowledged in life insurance that in the case of long-term contracts interest rate risk plays an important role. To incorporate that risk into the valuation of life insurance policies, Norberg and Møller obtained Thiele's differential equation assuming stochastic interest rates in \cite{norberg} and, later on, Persson derived the risk adjusted version of it in \cite{srate6}. For more fundamental references about interest rate risk in unit-linked policies see \cite{Persson1994PricingLI,srate1,srate3,srate5} and some further extensions \cite{baños2020life,Ahmad_Bladt_2023,srate2,srate4}.

It is worth pointing out that also policyholder behaviour has a significant impact in life insurance. For instance, the insured may surrender the contract, cancel all future cash flows and receive just a single payment. Similarly, the policyholder may cancel the future premiums and accept a reduction in the benefits. See \cite{beh1,beh2,beh3,beh4,beh5,beh7,beh6,beh8} for references where behavioural risk in life insurance policies is studied.

The scope of this paper is to study the financial risk in unit-linked policies coming from stock returns. Precisely, we derive Thiele's differential equation under the assumption that the benefit is linked to a stock that follows the Heston-Hawkes stochastic volatility model presented in \cite{arxiv}. This model is an extension of the well-known Heston model that incorporates the volatility clustering feature by adding a compound Hawkes process component in the volatility process. A Hawkes process is a self-exciting point process introduced in 1917 by Hawkes, see \cite{hawkes1,hawkes2}, with many applications in high-frequency finance, insurance, seismology and other disciplines. By exploiting the tractability of the model, it is shown in \cite{arxiv} that the Heston-Hawkes model is arbitrage-free and incomplete. Furthermore, the passage from the historical probability to the risk neutral ones is made explicit via a family of equivalent martingale measures. 

There is an extensive literature supporting the inclusion of jumps in the volatility, for instance, see \cite{Pan02,eraker,Erakeretal03,Cont07,Bates00,bates2}. In particular, the authors in \cite[Section 5.4]{Pan02} provide an empirical study on the higher moments of the volatility process searching for evidence of jumps in volatility, as it was conjectured in \cite{Bates00}. 

It is a common practice in actuarial science to consider two independent probability spaces, one to describe the states of the insured and the other one to describe the evolution of the financial investments. As explained in \cite{svein}, the right methodology to price unit-linked policies is through the principle of equivalence under $\QQ$, a risk neutral probability measure in the financial probability space. Therefore, the arbitrage-free property of the stochastic volatility model proven in \cite{arxiv} is required to make sense of the pricing of unit-linked policies.

The paper is organized as follows. In Section \ref{pre} we present the Heston-Hawkes stochastic volatility model introduced in \cite{arxiv}. In addition, we give some preliminary lemmas required to derive Thiele's differential equation. Specifically, we provide several technical results regarding the compensator of the Hawkes process under the risk neutral probability measures. The proofs are postponed in the Appendix for the sake of clarity. Finally, in Section \ref{main} we obtain Thiele's differential equation for unit-linked policies under this stochastic volatility model.

\section{Preliminary results}\label{pre}
The objective of this paper is to derive Thiele's differential equation for unit-linked policies under the Heston-Hawkes stochastic volatility model. First, we present the Heston-Hawkes stochastic volatility model introduced in \cite{arxiv}. Then, we summarize results proven in \cite{arxiv} regarding the arbitrage-free property of the model, which is necessary for the pricing of unit-linked policies. Additionally, we investigate the existence of (positive and negative) moments of the variance and (positive) moments of the Radon–Nikodym derivative of the change of measure. The existence of those moments will be used to prove that the compensated Hawkes process is a martingale under a suitable family of risk neutral probability measures. For the sake of clarity, all the proofs of this section are postponed to the Appendix. 

\subsection{Stochastic volatility model}
We outline the Heston-Hawkes stochastic volatility model introduced in \cite{arxiv}. This model is an extension of the Heston model that incorporates the volatility clustering effect by adding an independent compound Hawkes process to the variance. 

Let $T>0$ be a fixed time horizon. On a complete probability space $(\Omega,\mathcal{A},\PP)$, we consider a two-dimensional standard Brownian motion $(B,W)=\left\{\left(B_t,W_t\right), t\in[0,T]\right\}$ and its minimally augmented filtration $\mathcal{F}^{(B,W)}=\{\mathcal{F}^{(B,W)}_t, t\in[0,T]\}$. On $(\Omega,\mathcal{A},\PP)$, we also consider a Hawkes process $N=\{N_t, t\in[0,T]\}$, this is a counting process with stochastic intensity given by
\begin{align*}
    \lambda_t=\lambda_0+\alpha\int_0^t e^{-\beta(t-s)}dN_s,
\end{align*}
or, equivalently, 
\begin{align}\label{dynlambda}
    d\lambda_t=-\beta(\lambda_t-\lambda_0)dt+\alpha dN_t,
\end{align}
where $\lambda_0>0$ is the initial intensity, $\beta>0$ is the speed of mean reversion and $\alpha\in(0,\beta)$ is the self-exciting factor. Note that the stability condition holds because $\alpha<\beta$. For further details on Hawkes processes see \cite[Section 2]{Hawkesfinance} and \cite[Section 3.1.1]{article1}. 
Then, we consider a sequence of i.i.d., strictly positive and integrable random variables $\{J_i\}_{i\geq 1}$ and the compound Hawkes process $L=\{L_t, t\in[0,T]\}$ given by
\begin{align*}
    L_t=\sum_{i=1}^{N_t}J_i.
\end{align*}  
We make an assumption on the moment generating function $M_J(t)=\EE[\exp(tJ_1)]$ of $J_1$. Note that since $J_1$ is strictly positive, $M_J$ is well defined at least on the interval $(-\infty,0]$.
\begin{ass}\label{as}
   There exists $\epsilon_J>0$ such that $M_J$ is well defined in $(-\infty,\epsilon_J)$. Moreover, $(-\infty,\epsilon_J)$ is the maximal domain in the sense that
\begin{align*}
    \lim_{t\rightarrow\epsilon_J^-}M_J(t)=\infty.
\end{align*}
Since $\epsilon_J>0$, all positive moments of $J_1$ are finite.
\end{ass}
 We assume that $(B,W), N$ and $\{J_i\}_{i\geq 1}$ are independent of each other. We write $\mathcal{F}^L=\left\{\mathcal{F}^L_t, t\in[0,T]\right\}$ for the minimally augmented filtration generated by $L$ and
\begin{align*}
    \mathcal{F}=\{\mathcal{F}_t=\mathcal{F}_t^{(B,W)}\vee\mathcal{F}_t^L, t\in[0,T]\},
\end{align*}
for the joint filtration. We assume that $\mathcal{A}=\mathcal{F}_T$ and we will work with $\mathcal{F}$. Since $(B,W)$ and $L$ are independent processes, $(B,W)$ is also a two-dimensional $(\mathcal{F},\PP)$-Brownian motion. 

We assume that the interest rate is deterministic and constant equal to $r$. Finally, with all these ingredients, we introduce the Heston-Hawkes model. The stock price $S=\{S_t, t\in[0,T]\}$ and its variance $v=\{v_t, t\in[0,T]\}$ are given by
\begin{align}\label{2}
    \frac{dS_t}{S_t} & =\mu_tdt+\sqrt{v_t}\left(\sqrt{1-\rho^2} dB_t+\rho dW_t\right), \\
    \label{21} dv_t & =-\kappa\left(v_t-\Bar{v}\right)dt+\sigma\sqrt{v_t}dW_t+\eta dL_t,
\end{align}
where $S_0>0$ is the initial price of the stock, $\mu: [0,T] \rightarrow \R$ is a measurable and bounded function, $\rho\in(-1,1)$ is the correlation factor, $v_0>0$ is the initial value of the variance, $\kappa>0$ is the variance's mean reversion speed, $\Bar{v}>0$ is the long-term variance, $\sigma>0$ is the volatility of the variance and $\eta>0$ is a scaling factor. We assume that the Feller condition $2\kappa\Bar{v}\geq\sigma^2$ is satisfied, see \cite[Proposition 1.2.15]{AlfonsiAurélien2015ADaR}. For more details and results on the model see \cite{arxiv}.

\subsection{Arbitrage-free property}
We recall the arbitrage-free property of the Heston-Hawkes stochastic volatility model, as proved in \cite{arxiv}. The main result is Theorem \ref{risk}, and prior to that, we give results about the existence and positivity of the variance process as well as the expectation of the exponential of the integrated variance. As described in \cite{svein}, the right procedure to price unit-linked policies is through the principle of equivalence under  a risk neutral probability measure. Therefore, the arbitrage-free property of the stochastic volatility model proven in \cite{arxiv} is required to derive Thiele's differential equation. 

\begin{prop}\label{p2}
Equation \eqref{21} has a pathwise unique strong solution. 
\end{prop}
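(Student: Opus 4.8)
The plan is to treat \eqref{21} as an SDE driven by the Brownian motion $W$ together with the finite-activity jump process $L$, and to build the solution piecewise between the jump times of the Hawkes process $N$. Write $0=\tau_0<\tau_1<\tau_2<\cdots$ for the successive jump times of $N$; since $N$ is a Hawkes process with bounded intensity on $[0,T]$, only finitely many of these lie in $[0,T]$ almost surely, and the jumps of $L$ occur exactly at the $\tau_i$ with sizes $\eta J_i>0$. On each stochastic interval $[\tau_i,\tau_{i+1})$ the process $v$ solves the driftless-jump-free equation
\begin{align*}
    dv_t=-\kappa\left(v_t-\bar v\right)dt+\sigma\sqrt{v_t}\,dW_t,
\end{align*}
which is precisely the Cox--Ingersoll--Ross (CIR) dynamics. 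First I would invoke the classical Yamada--Watanabe argument for the CIR equation: the drift is Lipschitz and the diffusion coefficient $x\mapsto\sigma\sqrt{x}$ is $\tfrac12$-Hölder, so pathwise uniqueness holds on $[0,\infty)$ by the Yamada--Watanabe condition, and a (nonnegative) strong solution exists by standard results on the CIR process (the Feller condition $2\kappa\bar v\ge\sigma^2$, already assumed, even guarantees strict positivity, cf. \cite[Proposition 1.2.15]{AlfonsiAurélien2015ADaR}). This handles the first interval $[\tau_0,\tau_1)=[0,\tau_1)$ with initial datum $v_0>0$.

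Next I would splice the pieces together. Having constructed $v$ on $[0,\tau_i)$ with a well-defined left limit $v_{\tau_i-}\ge 0$, set $v_{\tau_i}=v_{\tau_i-}+\eta J_i$, which is $\mathcal F_{\tau_i}$-measurable and strictly positive, and then restart the CIR equation on $[\tau_i,\tau_{i+1})$ from this new initial condition, using the shifted Brownian motion $W_{\tau_i+\cdot}-W_{\tau_i}$ (which is again a Brownian motion independent of $\mathcal F_{\tau_i}$) and the strong Markov property of $W$. Because there are a.s.\ finitely many jump times in $[0,T]$, iterating this construction finitely many times produces a process $v$ on all of $[0,T]$. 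Measurability with respect to $\mathcal F$ is clear since at each stage we only use the Brownian increments and the already-revealed jump data $(\tau_j,J_j)_{j\le i}$; and by construction $v$ is \cadlag, adapted, and satisfies \eqref{21} in integral form on $[0,T]$. Pathwise uniqueness for the full equation then follows because any two solutions must agree on $[0,\tau_1)$ by CIR uniqueness, hence have the same left limit at $\tau_1$, hence the same value at $\tau_1$ after adding the common jump $\eta J_1$, and inductively agree on every $[\tau_i,\tau_{i+1})$.

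The main obstacle I anticipate is not the gluing — that is routine once the jump structure is understood — but rather making the interlacing argument fully rigorous at the level of stochastic integrals: one must check that the CIR solution built on each random interval is genuinely a strong solution of the stopped/restarted equation with respect to the original filtration $\mathcal F$ (not just the Brownian filtration), that the left limits $v_{\tau_i-}$ are finite a.s.\ (which follows from non-explosion of CIR), and that concatenating the integral equations across the $\tau_i$ reproduces \eqref{21} with the jump term $\eta\,dL_t=\eta\sum_i J_i\,\delta_{\tau_i}(dt)$. A clean way to organize this is to note that $L$ has finite variation and finitely many jumps, so the stochastic integral against $dL$ reduces to a finite sum, and then apply a standard interlacing/patching theorem for SDEs with jumps of finite activity (e.g.\ the construction in Ikeda--Watanabe or Protter); I would state this explicitly and reduce Proposition \ref{p2} to the CIR case plus that patching principle.
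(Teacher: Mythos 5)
The paper delegates this proposition entirely to \cite{arxiv}, so there is no in-paper argument to compare against; your interlacing construction is the standard method for SDEs driven by a Brownian motion plus a finite-activity jump process and is almost certainly the route the cited reference takes as well. One inaccuracy worth correcting: the Hawkes intensity $\lambda_t=\lambda_0+\alpha\int_0^t e^{-\beta(t-s)}dN_s$ is \emph{not} bounded on $[0,T]$ — it increments by $\alpha$ at every jump and can exceed any level with positive probability. The correct justification for $N$ having a.s.\ finitely many jumps on $[0,T]$ is non-explosion, which holds because each jump raises the intensity only by the fixed amount $\alpha$, so $N$ is dominated by a linear (Yule-type) birth process, which does not explode in finite time; this does not even require the stability condition $\alpha<\beta$. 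With that fix, the rest of your argument — Yamada--Watanabe for pathwise uniqueness of the CIR dynamics on each $[\tau_i,\tau_{i+1})$ (Lipschitz drift, $\tfrac12$-H\"older diffusion), adding the strictly positive jump $\eta J_i$ at $\tau_i$, appealing to the strong Markov property of $W$ with respect to the full filtration $\mathcal F$ (valid because $W$ is an $(\mathcal F,\PP)$-Brownian motion by the independence assumptions), and finite induction over the jump times — is sound and delivers both existence and pathwise uniqueness of a strong solution to \eqref{21}.
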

\begin{proof}
    See \cite[Proposition 2.1]{arxiv}.
\end{proof}

\begin{prop}\label{p1}
Let $\widetilde{v}=\{\widetilde{v}_t, t\in[0,T]\}$ be the pathwise unique strong solution of 
\begin{align}\label{sde10}
    \widetilde{v}_t=v_0-\kappa\int_0^t\left(\widetilde{v}_s-\Bar{v}\right)ds+\sigma\int_0^t\sqrt{\widetilde{v}_s}dW_s.
\end{align}
Then,
\begin{align*}
    \PP\left(\left\{\omega\in\Omega : \widetilde{v}_t(\omega)\leq v_t(\omega) \ \forall t\in[0,T]\right\}\right)=1,
\end{align*}
where $v$ is the pathwise unique strong solution of \eqref{21}. In particular, $v$ is a strictly positive process.
\end{prop}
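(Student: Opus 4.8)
The plan is to obtain the pathwise comparison $\widetilde v_t \le v_t$ from a comparison theorem for one-dimensional stochastic differential equations, treating the jump term $\eta\, dL_t$ in \eqref{21} as a nonnegative perturbation that only pushes $v$ upward. First I would observe that $L$ is a pure-jump, nondecreasing process (a finite sum of strictly positive jumps $J_i$ over the finitely many jump times of $N$ on $[0,T]$), so between consecutive jumps of $N$ both $v$ and $\widetilde v$ solve the \emph{same} continuous SDE $dy_t = -\kappa(y_t-\bar v)\,dt + \sigma\sqrt{y_t}\,dW_t$, driven by the same Brownian motion $W$, and at each jump time $\tau$ of $N$ the process $v$ jumps up by $\eta J_i>0$ while $\widetilde v$ is continuous there. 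The idea is then to run a standard pathwise-comparison argument on each interjump interval and use the upward jumps to preserve (indeed reinforce) the inequality across jump times.

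The key steps, in order, would be: (i) enumerate the jump times $0=\tau_0<\tau_1<\cdots<\tau_{N_T}<\tau_{N_T+1}=T$ of $N$; on $[\tau_0,\tau_1)$ we have $v_{\tau_0}=\widetilde v_{\tau_0}=v_0$ and both processes solve the same continuous SDE, so by the classical one-dimensional comparison theorem (e.g.\ the Yamada–Watanabe / Ikeda–Watanabe comparison theorem, applicable since the diffusion coefficient $x\mapsto\sigma\sqrt{x}$ is $\tfrac12$-Hölder and the drift is Lipschitz) they coincide pathwise a.s.\ on that interval, hence $\widetilde v_t\le v_t$ there; (ii) at $\tau_1$, $v_{\tau_1}=v_{\tau_1-}+\eta J_1\ge \widetilde v_{\tau_1-}=\widetilde v_{\tau_1}$, so the inequality holds at the left endpoint of the next interval; (iii) on $[\tau_1,\tau_2)$ both processes again solve the same continuous SDE but now with $v_{\tau_1}\ge \widetilde v_{\tau_1}$, so the comparison theorem yields $\widetilde v_t\le v_t$ on the whole interval; (iv) iterate over the finitely many jump intervals up to $T$. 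Since $N_T<\infty$ a.s., finitely many applications of the comparison theorem suffice, and taking the intersection of the corresponding almost-sure events (a countable intersection, after conditioning on $N$ or working on the event $\{N_T=n\}$ for each $n$) gives $\PP(\widetilde v_t\le v_t\ \forall t\in[0,T])=1$. Finally, since $\widetilde v$ is the Feller square-root diffusion and the Feller condition $2\kappa\bar v\ge\sigma^2$ is assumed, $\widetilde v$ is strictly positive a.s.\ (see \cite[Proposition 1.2.15]{AlfonsiAurélien2015ADaR}), and hence $v_t\ge \widetilde v_t>0$ for all $t$, proving strict positivity of $v$.

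A technical point to handle carefully is the application of the comparison theorem on an interval whose left endpoint carries a strict inequality $v_{\tau_k}\ge\widetilde v_{\tau_k}$ (rather than equality): the comparison theorem still applies, comparing the solution started from the larger initial condition with the one started from the smaller, both driven by the same $W$; one may invoke it in the form that allows ordered (not necessarily equal) initial data, or alternatively compare each with the solution started from $\max(v_{\tau_k},\widetilde v_{\tau_k})$. The main obstacle, and the step deserving the most care, is precisely verifying the hypotheses of the one-dimensional comparison theorem in the square-root setting and stitching the pathwise inequalities across the random jump times into a single almost-sure statement — this is routine but must be done on the event $\{N_T=n\}$ and then summed over $n\in\N$, using that $\PP(N_T<\infty)=1$ for the Hawkes process under the stability condition $\alpha/\beta<1$; everything else is a direct consequence of $L$ being nondecreasing and pure-jump.
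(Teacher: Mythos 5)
The paper itself offers no proof of Proposition~\ref{p1}: it simply cites \cite[Proposition 2.2]{arxiv}, so a line-by-line comparison is not possible. Judged on its own merits, your proposal is sound: decomposing $[0,T]$ at the finitely many jump times of $N$, observing that $v$ and $\widetilde v$ obey the same CIR dynamics between jumps, invoking the Yamada--Watanabe/Ikeda--Watanabe comparison theorem (whose hypotheses you correctly verify --- Lipschitz drift, $\tfrac12$-H\"older diffusion), and noting that the upward jumps $\eta J_i>0$ only widen the gap, is a legitimate route to the pathwise inequality, and the closing appeal to the Feller condition for strict positivity of $\widetilde v$ is exactly right.

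One point in the ``stitching'' step is stated loosely. Conditioning on $\{N_T=n\}$ does not by itself reduce the argument to deterministic time intervals, since the jump times $\tau_1,\dots,\tau_n$ remain random. What is actually needed is the comparison theorem applied conditionally at the stopping time $\tau_k$: given $\mathcal{F}_{\tau_k}$, the increments $(W_{\tau_k+s}-W_{\tau_k})_{s\ge 0}$ form a Brownian motion, the ordered $\mathcal{F}_{\tau_k}$-measurable initial data $v_{\tau_k}\ge\widetilde v_{\tau_k}$ feed into the same SDE on $[\tau_k,\tau_{k+1})$, and the classical comparison theorem then yields $v\ge\widetilde v$ on that stochastic interval $\PP$-a.s.; iterating over $k=0,\dots,N_T$ and using $\PP(N_T<\infty)=1$ finishes the argument. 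An alternative, and arguably cleaner, one-shot proof applies the Meyer--It\^o (Tanaka) formula directly to $(\widetilde v_t-v_t)^+$: the local-time term at $0$ vanishes by the same $\tfrac12$-H\"older estimate on $\sqrt{\cdot}$, the drift term is nonpositive by Gr\"onwall, and the jump correction is nonpositive because $v$ only jumps upward; this avoids handling random interjump intervals altogether and is likely closer to what \cite{arxiv} does. Either way, your key ideas are correct.
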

\begin{proof}
    See \cite[Proposition 2.2]{arxiv}.
\end{proof}
\begin{prop}\label{novikov}
For $c\leq\frac{\kappa^2}{2\sigma^2}$, define $D(c):=\sqrt{\kappa^2-2\sigma^2c}$, $\Lambda(c):=\frac{2\eta c\left(e^{D(c)T}-1\right)}{D(c)-\kappa+\left(D(c)+\kappa\right)e^{D(c)T}}$ and
\begin{align*}
    c_l:=\sup\left\{c\leq\frac{\kappa^2}{2\sigma^2}: \Lambda(c)< \epsilon_J \hspace{0.3cm}\text{and}\hspace{0.3cm} M_J\left(\Lambda(c)\right)\leq\frac{\beta}{\alpha}\exp\left(\frac{\alpha}{\beta}-1\right) \right\}.
\end{align*}
Then, $c_l>0$ and for $c<c_l$
\begin{align*}
    \EE\left[\exp\left(c\int_0^Tv_udu\right)\right]<\infty.
\end{align*}
\end{prop}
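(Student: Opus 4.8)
The plan is to reduce the bound to two classical exponential-moment facts---one for the Cox--Ingersoll--Ross (CIR) diffusion driving $v$, one for the Hawkes process $N$---linked by conditioning on the jump configuration of $L$. If $c\leq 0$ there is nothing to prove, since $v\geq 0$ (Proposition \ref{p1}) gives $\exp(c\int_0^Tv_u\,du)\leq 1$; so assume $0<c<c_l$. I would first recall the exponential-affine formula for the pure CIR dynamics $dy_t=-\kappa(y_t-\bar v)\,dt+\sigma\sqrt{y_t}\,dW_t$: for $c\leq\kappa^2/(2\sigma^2)$ and $\theta\in[0,(\kappa-D(c))/\sigma^2]$,
\[
\EE\!\left[\exp\!\left(c\int_s^Ty_u\,du+\theta y_T\right)\Big|\,y_s=x\right]=\exp\!\bigl(\phi_{T-s}(c,\theta)+\psi_{T-s}(c,\theta)x\bigr),
\]
where $\psi_r$ solves $\psi'=\tfrac{\sigma^2}{2}\psi^2-\kappa\psi+c$, $\psi_0=\theta$, and $\phi_r=\kappa\bar v\int_0^r\psi_u\,du$; in this range the solution is global, $r\mapsto\psi_r(c,0)$ is nondecreasing, $\eta\,\psi_T(c,0)=\Lambda(c)$, and the flow identities $\psi_{a+b}(c,0)=\psi_a(c,\psi_b(c,0))$ and $\phi_{a+b}(c,0)=\phi_a(c,\psi_b(c,0))+\phi_b(c,0)$ hold.

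The key step is to condition on $\mathcal{F}^L_T$. Since $W$ is independent of $L$, conditionally on the jump times $0<\tau_1<\dots<\tau_{N_T}\leq T$ and marks $J_1,\dots,J_{N_T}$ the process $v$ is, by Proposition \ref{p2}, the CIR diffusion with prescribed deterministic jumps $v_{\tau_i}=v_{\tau_i^-}+\eta J_i$, hence Markov. Splitting $\int_0^Tv_u\,du$ over the inter-jump intervals, applying the affine formula on each one with the terminal weight inherited from its right endpoint, and telescoping the coefficients via the flow identities, I expect the clean identity
\[
\EE\!\left[\exp\!\left(c\int_0^Tv_u\,du\right)\Big|\,\mathcal{F}^L_T\right]=\exp\!\left(\phi_T(c,0)+\psi_T(c,0)v_0+\eta\sum_{i=1}^{N_T}\psi_{T-\tau_i}(c,0)J_i\right).
\]
Since $r\mapsto\psi_r(c,0)$ is nondecreasing and $J_i>0$, the sum is at most $\eta\psi_T(c,0)\sum_iJ_i=\Lambda(c)L_T$, so
\[
\EE\!\left[\exp\!\left(c\int_0^Tv_u\,du\right)\right]\leq e^{\phi_T(c,0)+\psi_T(c,0)v_0}\,\EE\!\bigl[e^{\Lambda(c)L_T}\bigr].
\]

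To finish, I would bound $\EE[e^{\Lambda(c)L_T}]$. Conditioning on $N$ and using that the marks are i.i.d., independent of $N$, with $\Lambda(c)<\epsilon_J$, this equals $\EE[M_J(\Lambda(c))^{N_T}]$, i.e. $\EE[e^{uN_T}]$ with $u=\ln M_J(\Lambda(c))$. For the linear exponential-kernel Hawkes process $\EE[e^{uN_T}]=\exp(A_T+B_T\lambda_0)$, where $B$ solves $B'=e^{u+\alpha B}-1-\beta B$, $B_0=0$, and $A'=\beta\lambda_0B$, $A_0=0$; and $B$ stays bounded on $[0,T]$ precisely while $\min_{b\geq0}(e^{u+\alpha b}-1-\beta b)\leq0$, i.e. $e^u\leq\tfrac{\beta}{\alpha}\exp(\tfrac{\alpha}{\beta}-1)$---with $e^u=M_J(\Lambda(c))$ this is exactly the second defining condition of $c_l$, so the bound is finite for all $c<c_l$. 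Lastly $c_l>0$: at $c=0$, $\Lambda(0)=0<\epsilon_J$ and $M_J(0)=1<\tfrac{\beta}{\alpha}\exp(\tfrac{\alpha}{\beta}-1)$ (equivalently $\ln(\alpha/\beta)+1-\alpha/\beta<0$, true as $\alpha/\beta\in(0,1)$), while $c\mapsto\Lambda(c)$ and $c\mapsto M_J(\Lambda(c))$ are continuous and nondecreasing near $0$, so both conditions persist on a neighbourhood of $0$.

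The delicate step is the conditional identity: one must check that the terminal weights $\psi_{T-\tau_i}(c,0)$ fed into the affine formula all lie in $[0,(\kappa-D(c))/\sigma^2]$ where the Riccati solution is global, run the backward induction over the random jump times rigorously, and obtain constants independent of $N_T$---the telescoping via the flow identities is what both produces the factor $e^{\phi_T(c,0)}$ and keeps the estimate uniform in the number of jumps. A more routine point is the localization needed to upgrade the CIR and Hawkes exponential-affine ans\"atze from (super)martingale identities to genuine expectation formulas; the CIR part can be cited (e.g. \cite{AlfonsiAurélien2015ADaR}).
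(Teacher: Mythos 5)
The paper defers this proof to the cited reference (Lemma~3.1 and Proposition~3.5 of \cite{arxiv}), so there is no in-text argument to compare against; but the very form of $\Lambda(c)=\eta\,\psi_T(c,0)$ (the CIR Riccati coefficient started from $0$) and of the threshold $M_J(\Lambda(c))\leq\frac{\beta}{\alpha}e^{\alpha/\beta-1}$ (the exponential-kernel Hawkes moment criterion) show that your two-stage reduction---condition on $\mathcal{F}^L_T$, telescope the CIR exponential-affine formula over the inter-jump intervals via the flow identities, bound the resulting exponent by $\Lambda(c)L_T$ using monotonicity of $r\mapsto\psi_r(c,0)$, then close with $\EE[M_J(\Lambda(c))^{N_T}]$ and the Hawkes Riccati---is the intended decomposition, and your derivation of the conditional identity and both bounds is correct. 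One point worth spelling out: $c\mapsto\psi_T(c,0)$, hence $c\mapsto\Lambda(c)$, is nondecreasing on $(-\infty,\kappa^2/(2\sigma^2)]$ by comparison for the Riccati ODE, and $M_J$ is nondecreasing on $(-\infty,\epsilon_J)$, so the defining set of $c_l$ is a half-line and the two conditions hold for \emph{every} $c<c_l$; your closing paragraph only verifies them on a neighbourhood of $0$ in the course of showing $c_l>0$.
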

\begin{proof}
    See \cite[Lemma 3.1]{arxiv} and \cite[Proposition 3.1]{arxiv}.
\end{proof}
\begin{thm}\label{risk}Let $a\in\RR$ and define $\theta_t^{(a)}:=\frac{1}{\sqrt{1-\rho^2}}\left(\frac{\mu_t-r}{\sqrt{v_t}}-a\rho\sqrt{v_t}\right)$, 
\begin{align*}
    Y_t^{(a)} &:=\exp\left(-\int_0^t\theta_u^{(a)}dB_u-\frac{1}{2}\int_0^t(\theta_u^{(a)})^2du\right), \\
    Z_t^{(a)} &:=\exp\left(-a\int_0^t\sqrt{v_u}dW_u-\frac{1}{2}a^2\int_0^tv_udu\right)
\end{align*}
and $X_t^{(a)}:=Y_t^{(a)}Z_t^{(a)}$. Recall the definition of $c_l$ in Proposition \ref{novikov}. \begin{enumerate}
    \item $X^{(a)}$ is a $(\mathcal{F},\PP)$-martingale for $|a|<\sqrt{2c_l}$.
    \item The set 
\begin{align}\label{elmmset}
    \mathcal{E}:=\left\{\QQ(a) \hspace{0.2cm}\text{given by}\hspace{0.2cm} \frac{d\QQ(a)}{d\PP}=X_T^{(a)} \hspace{0.2cm}\text{with}\hspace{0.2cm} |a|<\sqrt{2c_l}\right\} 
\end{align}
is a set of equivalent local martingale measures.
\item Let $\QQ(a)\in\mathcal{E}$, the process $(B^{\QQ(a)},W^{\QQ(a)})$ defined by
    \begin{align}\label{BMQ}
    dB_t^{\QQ(a)}&:=dB_t+\theta_t^{(a)}dt, \notag\\
    dW_t^{\QQ(a)}&:=dW_t+a\sqrt{v_t}dt
\end{align}
is a a two-dimensional standard $(\mathcal{F},\QQ(a))$-Brownian motion.
\item Let $\QQ(a)\in\mathcal{E}$, the dynamics of $S$ and $v$ are given by
\begin{align}\label{stockqa}
    \frac{dS_t}{S_t} &=rdt+\sqrt{v_t}\left(\sqrt{1-\rho^2} dB_t^{\QQ(a)}+\rho dW_t^{\QQ(a)}\right), \\
     \label{volqa}dv_t &= -\kappa^{(a)}(v_t-\Bar{v}^{(a)})dt+\sigma\sqrt{v_t}dW_t^{\QQ(a)}+\eta dL_t.
\end{align}
where $\kappa^{(a)}=\kappa+a\sigma$ and $\Bar{v}^{(a)}=\frac{k\Bar{v}}{k+a\sigma}$.
\item If $\rho^2<c_l$, the set 
\begin{align}\label{choicea}
\mathcal{E}_m:=\left\{\QQ(a)\in\mathcal{E}: |a|<\min\left\{\frac{\sqrt{2c_l}}{2},\sqrt{c_l-\rho^2}\right\}\right\}
\end{align}
is a set of equivalent martingale measures.
\end{enumerate}
\end{thm}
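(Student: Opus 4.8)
Parts (1), (3), (4) are the classical Girsanov chain and (2) is read off from them; the only genuinely delicate point is (5). For (1), observe first that since $\mu$ is bounded and, by Proposition \ref{p1}, $v$ is strictly positive and continuous on $[0,T]$, the integrand $\theta^{(a)}$ is pathwise bounded, so $\int_0^T(\theta_u^{(a)})^2\,du<\infty$ and $\int_0^T v_u\,du<\infty$ a.s.; hence $Y^{(a)},Z^{(a)}$ are well-defined continuous local martingales and, as $\langle B,W\rangle\equiv 0$, $X^{(a)}=Y^{(a)}Z^{(a)}$ is the stochastic exponential of $M_t:=-\int_0^t\theta_u^{(a)}\,dB_u-a\int_0^t\sqrt{v_u}\,dW_u$, in particular a nonnegative continuous local martingale, hence a supermartingale. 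To upgrade it to a true $(\mathcal{F},\PP)$-martingale it suffices to check $\EE[X_T^{(a)}]=1$, and here the key device is to condition on $\mathcal{G}_T:=\mathcal{F}_T^{W}\vee\mathcal{F}_T^{L}$, the augmented $\sigma$-algebra generated by $W$ and $L$: since $B$ is independent of $(W,L)$, conditionally on $\mathcal{G}_T$ the process $B$ is still a Brownian motion while $\theta^{(a)}$ is a fixed path of finite energy, whence $\EE[Y_T^{(a)}\mid\mathcal{G}_T]=1$ a.s.; as $Z_T^{(a)}$ is $\mathcal{G}_T$-measurable this gives $\EE[X_T^{(a)}]=\EE[Z_T^{(a)}]$, and $Z^{(a)}$ satisfies Novikov's criterion because $\EE[\exp(\tfrac12a^2\int_0^T v_u\,du)]<\infty$ by Proposition \ref{novikov} as soon as $\tfrac12a^2<c_l$, i.e. $|a|<\sqrt{2c_l}$.

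Granting (1), $X^{(a)}$ is a genuine martingale density, so each $\QQ(a)$ in \eqref{elmmset} is equivalent to $\PP$ and Girsanov's theorem applies. Computing $\langle B,M\rangle=-\int_0^\cdot\theta_u^{(a)}\,du$ and $\langle W,M\rangle=-a\int_0^\cdot\sqrt{v_u}\,du$ shows that $B^{\QQ(a)},W^{\QQ(a)}$ of \eqref{BMQ} are continuous $\QQ(a)$-local martingales with $\langle B^{\QQ(a)}\rangle_t=\langle W^{\QQ(a)}\rangle_t=t$ and $\langle B^{\QQ(a)},W^{\QQ(a)}\rangle\equiv 0$, so by Levy's characterisation they form a standard two-dimensional $\QQ(a)$-Brownian motion, which is (3). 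For (4) I would substitute $dB_t=dB_t^{\QQ(a)}-\theta_t^{(a)}\,dt$ and $dW_t=dW_t^{\QQ(a)}-a\sqrt{v_t}\,dt$ into \eqref{2}--\eqref{21}: the drift of $dS_t/S_t$ collapses to $r\,dt$ precisely because $\sqrt{1-\rho^2}\sqrt{v_t}\,\theta_t^{(a)}=\mu_t-r-a\rho v_t$ by definition of $\theta^{(a)}$, while $-\kappa(v_t-\bar v)-a\sigma v_t=-(\kappa+a\sigma)\big(v_t-\tfrac{\kappa\bar v}{\kappa+a\sigma}\big)$ identifies $\kappa^{(a)},\bar v^{(a)}$ (note $\kappa+a\sigma>0$ since $|a|<\sqrt{2c_l}$ and $c_l\le\kappa^2/(2\sigma^2)$). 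Part (2) follows at once: by (4), $e^{-rt}S_t$ is $S_0$ times the stochastic exponential of $\int_0^\cdot\sqrt{v_u}(\sqrt{1-\rho^2}\,dB_u^{\QQ(a)}+\rho\,dW_u^{\QQ(a)})$, hence a $\QQ(a)$-local martingale.

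For (5), $e^{-rt}S_t$ is a nonnegative continuous $\QQ(a)$-local martingale, hence a $\QQ(a)$-supermartingale, and therefore a true $\QQ(a)$-martingale on $[0,T]$ as soon as $\EE^{\QQ(a)}[e^{-rT}S_T]=S_0$, which I would verify under $\PP$. A direct computation, again using $\sqrt{1-\rho^2}\sqrt{v_u}\,\theta_u^{(a)}=\mu_u-r-a\rho v_u$, gives
\[
X_T^{(a)}\,e^{-rT}S_T=S_0\,U_T\,V_T,
\]
with $U_T=\exp\!\big(\int_0^T(\sqrt{1-\rho^2}\sqrt{v_u}-\theta_u^{(a)})\,dB_u-\tfrac12\int_0^T(\sqrt{1-\rho^2}\sqrt{v_u}-\theta_u^{(a)})^2\,du\big)$ and $V_T=\exp\!\big((\rho-a)\int_0^T\sqrt{v_u}\,dW_u-\tfrac12(\rho-a)^2\int_0^T v_u\,du\big)=Z_T^{(a-\rho)}$. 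Conditioning on $\mathcal{G}_T$ as in (1) kills the $B$-factor ($\EE[U_T\mid\mathcal{G}_T]=1$), so $\EE^{\QQ(a)}[e^{-rT}S_T]=S_0\,\EE^{\PP}[Z_T^{(a-\rho)}]$; and $Z^{(a-\rho)}$ satisfies Novikov's criterion because $(a-\rho)^2\le 2(a^2+\rho^2)<2c_l$ — the middle step being $2|a\rho|\le a^2+\rho^2$, the last being exactly $|a|<\sqrt{c_l-\rho^2}$ from \eqref{choicea} — so by Proposition \ref{novikov} $\EE^{\PP}[Z_T^{(a-\rho)}]=1$ and $\EE^{\QQ(a)}[e^{-rT}S_T]=S_0$ as desired. (The extra bound $|a|<\sqrt{2c_l}/2$ in \eqref{choicea} is not needed for this route and presumably reflects an alternative derivation via $L^p(\QQ(a))$-boundedness of $e^{-rt}S_t$, combining moment estimates on $X_T^{(a)}$ and on $v$.)

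The main obstacle is that every martingale property here reduces to finiteness of $\EE[\exp(c\int_0^T v_u\,du)]$ for an appropriate constant $c$, and the admissible ranges of $a$ in (1) and (5) are precisely those keeping that $c$ strictly below $c_l$, so Proposition \ref{novikov} carries the analytic weight. The single non-bookkeeping step is the conditioning on $\mathcal{G}_T$: it decouples the singular integrand $\theta^{(a)}$ — which contains $1/\sqrt{v_t}$ and would otherwise ruin any Novikov estimate — from the variance functional, and is what lets one avoid imposing $c_l>1$.
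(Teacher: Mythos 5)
Your argument is sound as a self-contained proof; note however that the paper does not prove this theorem internally but refers to \cite{arxiv}, so there is no in-text derivation to compare against line by line. Parts (1)--(4) are the standard Girsanov chain, and the conditional-Gaussian device $\EE[Y_T^{(a)}\mid\mathcal{F}_T^W\vee\mathcal{F}_T^L]=1$ is exactly right because $B$ is independent of $\sigma(W)\vee\sigma(N)\vee\sigma(\{J_i\})\supset\mathcal{F}_T^W\vee\mathcal{F}_T^L$ and $\theta^{(a)}$ is measurable with respect to the latter. Your part (5) is the substantive contribution: the factorisation $X_T^{(a)}e^{-rT}S_T=S_0U_TV_T$ with $V_T=Z_T^{(a-\rho)}$, followed by the same conditioning to kill $U_T$ and Novikov (Proposition \ref{novikov}) on $V_T$, gives $\EE^{\QQ(a)}[e^{-rT}S_T]=S_0$ under the single condition $(a-\rho)^2<2c_l$; your sufficient bound $|a|<\sqrt{c_l-\rho^2}$ indeed implies this, so the theorem follows without ever invoking the cap $|a|<\sqrt{2c_l}/2$ of \eqref{choicea}. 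Since the statement only claims that the set defined via the \emph{minimum} consists of EMMs, you have proved it (and a formally stronger statement); the redundant cap most plausibly stems from a Cauchy--Schwarz or $L^p$ split of the Dol\'eans--Dade exponential in the cited proof, which is necessarily lossier than your factorise-and-condition scheme. One cosmetic slip: $v$ is not continuous on $[0,T]$ (it jumps by $\eta\,\Delta L$), only c\`adl\`ag; the property you actually use is that $v\geq\tilde v>0$ with $\tilde v$ continuous (Proposition \ref{p1}), so $v$ is pathwise bounded away from zero and, being c\`adl\`ag, pathwise bounded above, which is what keeps $\theta^{(a)}$ in $L^2([0,T])$ almost surely.
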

\begin{proof}
    See \cite[Theorem 3.1]{arxiv}, \cite[Observation 3.3]{arxiv} and \cite[Theorem 3.2]{arxiv}.
\end{proof}

\subsection{Moments of the variance and the Radon-Nikodym derivative}
The existence of the two following moments will be used to prove that the compensated Hawkes process is a martingale under a suitable family of risk neutral probability measures:
\begin{align}\label{expectations}
    \EE[(X_T^{(a)})^{2+\varepsilon_1}]<\infty \hspace{0.3cm}\text{and}\hspace{0.3cm}\EE\left[\left(\frac{1}{v_t}\right)^{1+\varepsilon_2}\right]<\infty,
\end{align}
where $\varepsilon_1,\varepsilon_2>0$ are arbitrarily small.

First, we prove the existence and integrability of all positive moments of the variance process. Essentially, this is a consequence of the fact that all positive moments of the standard Heston variance and the compound Hawkes process exist (see Lemma \ref{Anmomentcompound} in the Appendix).
   \begin{lem}\label{lemintegr}  Let $s\geq1$. Then, $\EE\left[v_t^s\right]<\infty$ for all $t\in[0,T]$ and $\int_0^T\EE\left[v_t^s\right]dt<\infty$.
   \end{lem}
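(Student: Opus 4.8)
The plan is to bound $v_t$ pathwise by a more tractable process and then exploit the independence structure. First I would recall from Proposition \ref{p1} that $0<\widetilde v_t\le v_t$, but since here we want an \emph{upper} bound we instead compare $v$ with the solution of the SDE obtained by dropping the jump term; more precisely, writing \eqref{21} in integral form, $v_t=v_0-\kappa\int_0^t(v_s-\bar v)ds+\sigma\int_0^t\sqrt{v_s}dW_s+\eta L_t$, the idea is to split $v_t=\widehat v_t + \eta L_t'$ where $\widehat v$ absorbs the diffusive, mean-reverting part driven by $W$ and $L_t'$ collects the cumulative effect of the jumps propagated through the linear drift. Concretely, by variation of constants the mean-reversion can be handled: set $u_t:=e^{\kappa t}v_t$ so that $du_t=\kappa\bar v e^{\kappa t}dt+\sigma e^{\kappa t}\sqrt{v_t}dW_t+\eta e^{\kappa t}dL_t$, hence $v_t = e^{-\kappa t}v_0 + \bar v(1-e^{-\kappa t}) + \sigma\int_0^t e^{-\kappa(t-s)}\sqrt{v_s}\,dW_s + \eta\int_0^t e^{-\kappa(t-s)}dL_s$.

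The key step is then to estimate $\EE[v_t^s]$ by Minkowski's (or simply the convexity) inequality applied to this decomposition: it suffices to bound the $L^s(\PP)$ norms of each of the four terms. The first two are deterministic and bounded uniformly on $[0,T]$. For the stochastic integral term, one wants to invoke the Burkholder--Davis--Gundy inequality, which reduces the problem to controlling $\EE\big[\big(\int_0^t e^{-2\kappa(t-s)}v_s\,ds\big)^{s/2}\big]$, and after a further application of Jensen/Hölder in the time variable this is controlled by $\int_0^t\EE[v_s^{s}]\,ds$ (possibly up to constants and a bounded weight), yielding a Gronwall-type self-improving inequality once we know the quantity is \emph{a priori} finite. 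For the jump term, $\eta\int_0^t e^{-\kappa(t-s)}dL_s \le \eta L_t = \eta\sum_{i=1}^{N_t}J_i$, and here I would use Lemma \ref{Anmomentcompound} from the Appendix (the statement that all positive moments of the compound Hawkes process $L$ are finite), which is precisely the ingredient flagged in the paragraph preceding the lemma.

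To make the Gronwall argument rigorous one must first establish finiteness of $\EE[v_t^s]$ before bootstrapping the constant, so I would run a standard localization/stopping-time argument: define $\tau_n:=\inf\{t:v_t\ge n\}$, derive the integral inequality for $v_{t\wedge\tau_n}^s$ where all terms are manifestly integrable, apply Gronwall on $[0,T]$ to get a bound uniform in $n$, and then let $n\to\infty$ by Fatou. Alternatively, since $v$ is nonnegative and the jump sizes and the Heston part are independent, one can bound $v_t \le C_T(1+\widetilde v_t + L_t)$ for a suitable deterministic $C_T$ (where $\widetilde v$ is the pure-Heston variance of \eqref{sde10}), raise to the $s$-th power, and conclude from the known finiteness of all positive moments of the Feller/CIR-type process $\widetilde v$ together with Lemma \ref{Anmomentcompound}; this avoids Gronwall entirely and is cleaner. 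Either way, once $\sup_{t\in[0,T]}\EE[v_t^s]<\infty$ is in hand, the second claim $\int_0^T\EE[v_t^s]\,dt<\infty$ follows immediately by integrating over the compact interval $[0,T]$.

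The main obstacle I anticipate is the self-referential nature of the estimate for the diffusion term: $\EE[v_t^s]$ appears on both sides, so one genuinely needs either the localization argument to license Gronwall or the explicit pathwise domination by $\widetilde v + L$. I would favor the domination route, since the moment bounds for the CIR process $\widetilde v$ are classical and the compound Hawkes moments are supplied by Lemma \ref{Anmomentcompound}, making the proof short; the only mildly technical point there is justifying the pathwise inequality $v_t\le C_T(1+\widetilde v_t+L_t)$, which again comes from the variation-of-constants representation above together with the positivity of all the relevant processes and the nonnegativity of $\sqrt{v_s}$ in the Itô integrand (so that, after a comparison/Gronwall argument at the level of the Heston part alone, the jump contribution only adds to the bound).
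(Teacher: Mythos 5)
Your proposal contains two serious gaps, one in each of the two routes you sketch, and you miss the ingredient the paper actually leans on.

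First, the ``cleaner'' domination route fails. Proposition~\ref{p1} gives only the pathwise \emph{lower} bound $\widetilde v_t\le v_t$, not an upper bound, and there is no justification for $v_t\le C_T(1+\widetilde v_t + L_t)$. Indeed, subtracting the two variation-of-constants representations gives
\begin{align*}
v_t-\widetilde v_t=\sigma\int_0^t e^{-\kappa(t-u)}\bigl(\sqrt{v_u}-\sqrt{\widetilde v_u}\bigr)\,dW_u+\eta\int_0^t e^{-\kappa(t-u)}\,dL_u,
\end{align*}
and while the jump term is dominated by $\eta L_t$, the stochastic integral is a signed process with no pathwise deterministic bound, even though its integrand is nonnegative. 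So you cannot split $v_t$ into the CIR part plus the compound-Hawkes part and quote moments of $\widetilde v$.

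Second, the Gr\"onwall route as you describe it does not close. After Burkholder--Davis--Gundy the stochastic-integral term is controlled by $\EE\bigl[\bigl(\int_0^t e^{-2\kappa(t-u)}v_u\,du\bigr)^{s/2}\bigr]$, and Jensen/H\"older in time controls this by $\int_0^t\EE[v_u^{s/2}]\,du$ (for $s\ge2$), not $\int_0^t\EE[v_u^s]\,du$. The power $s/2$ does not match $s$, so the integral inequality is not self-improving without an additional step (a Young-type bound $v^{s/2}\le \tfrac12 v^s+\tfrac12$, or an induction on the moment order). You do not supply this step.

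What the paper actually does is structurally different in one crucial respect: it invokes Proposition~\ref{novikov}, which gives $\EE[\exp(c\int_0^Tv_u\,du)]<\infty$ for small $c>0$ and hence \emph{all} positive moments of $\int_0^Tv_u\,du$. This lets them bound the BDG term $\EE\bigl[\bigl(\int_0^Tv_u\,du\bigr)^{s/2}\bigr]$ directly and finitely, so that the stochastic integral is absorbed into the inhomogeneous term $A(t)$ of the Gr\"onwall inequality rather than into the integral term. Gr\"onwall is then applied \emph{pathwise} (using $\int_0^Tv_u^s\,du<\infty$ a.s.\ from c\`adl\`ag paths with finitely many jumps) only to handle the drift $|\int_0^t(v_u-\bar v)\,du|^s\lesssim\int_0^Tv_u^s\,du+\text{const}$, yielding $v_t^s\le A(t)+Be^{BT}\int_0^TA(u)\,du$ pathwise; one then takes expectations and bounds $\EE[A(t)]$ via Proposition~\ref{novikov}, BDG, and Lemma~\ref{Anmomentcompound}. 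No localization is needed and there is no self-referential estimate on $\EE[v_t^s]$ at all. Your proposal never uses Proposition~\ref{novikov}, which is precisely what makes the argument go through.
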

   \begin{proof}
     See Lemma \ref{lemAintegrnpower} in the Appendix. 
   \end{proof}

Now, we check the existence and integrability of the $s$th negative moment of the variance process under the condition $2\kappa \Bar{v}>s\sigma^2$. Note that the larger the $s$, the larger the fraction $\frac{\kappa \Bar{v}}{\sigma^2}$ must be, that is, the product of the mean reversion speed and the long-term variance divided by the volatility of the variance. Informally, the larger the fraction $\frac{\kappa \Bar{v}}{\sigma^2}$ is, the less likely is that the variance will approach zero and that contributes to the existence of negative moments. 
 \begin{lem}\label{leminver}Let $s\geq1$. If $2\kappa \Bar{v}>s\sigma^2$, $\EE\left[\frac{1}{v_t^s}\right]<\infty$ for all $t\in[0,T]$ and $\int_0^T\EE\left[\frac{1}{v_t^s}\right]dt<\infty$. 
   \end{lem}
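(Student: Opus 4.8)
The plan is to reduce the claim to a moment estimate for the comparison process $\widetilde{v}$ from Proposition \ref{p1}. Since $0 < \widetilde{v}_t \le v_t$ almost surely for all $t\in[0,T]$, we have $\frac{1}{v_t^s} \le \frac{1}{\widetilde{v}_t^s}$ pointwise, so it suffices to prove $\EE\big[\widetilde{v}_t^{-s}\big] < \infty$ for each $t$ and $\int_0^T \EE\big[\widetilde{v}_t^{-s}\big]\,dt < \infty$. This is advantageous because $\widetilde{v}$ is a genuine CIR (square-root) diffusion with parameters $\kappa,\bar v,\sigma$, for which the law of $\widetilde{v}_t$ is an explicit (scaled) noncentral chi-square distribution. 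The negative moments of the noncentral chi-square are classical: writing $\widetilde{v}_t$ as a constant times a noncentral $\chi^2$ with $d = 4\kappa\bar v/\sigma^2$ degrees of freedom, one has $\EE\big[\widetilde{v}_t^{-s}\big] < \infty$ exactly when $d/2 > s$, i.e.\ $2\kappa\bar v > s\sigma^2$, which is precisely the standing hypothesis.

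First I would recall (citing \cite{AlfonsiAurélien2015ADaR} or a standard reference on the CIR process) the explicit transition law: for $t>0$, $\widetilde{v}_t \stackrel{d}{=} \frac{\sigma^2(1-e^{-\kappa t})}{4\kappa}\,\chi'^2_d(\zeta_t)$ where $d = \frac{4\kappa\bar v}{\sigma^2}$ and the noncentrality parameter is $\zeta_t = \frac{4\kappa e^{-\kappa t}}{\sigma^2(1-e^{-\kappa t})}v_0$. Then I would invoke the known formula/bound for negative moments of a noncentral chi-square: for a $\chi'^2_d(\zeta)$ variable $U$ and any $s \ge 1$ with $d > 2s$, $\EE[U^{-s}] < \infty$, and in fact $\EE[U^{-s}] \le \EE[V^{-s}]$ where $V \sim \chi^2_d$ is central (the noncentral density is a Poisson mixture of central $\chi^2_{d+2k}$ densities, and $\EE\big[(\chi^2_{d+2k})^{-s}\big]$ is decreasing in $k$), with $\EE\big[(\chi^2_d)^{-s}\big] = \frac{\Gamma(d/2 - s)}{2^s\,\Gamma(d/2)} < \infty$ precisely when $d/2 > s$. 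Combining these gives a bound
\begin{align*}
    \EE\left[\widetilde{v}_t^{-s}\right] \le \left(\frac{4\kappa}{\sigma^2(1-e^{-\kappa t})}\right)^{s}\frac{\Gamma(d/2 - s)}{2^s\,\Gamma(d/2)}
\end{align*}
valid for $t\in(0,T]$, hence finiteness of $\EE\big[v_t^{-s}\big]$ for each such $t$; the case $t=0$ is trivial since $v_0>0$ is deterministic.

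For the integrability in time, I would examine the $t\to 0^+$ behaviour of the bound above: $1-e^{-\kappa t} \sim \kappa t$, so the right-hand side behaves like a constant times $t^{-s}$, which is \emph{not} integrable near $0$ for $s\ge 1$. So the crude bound is not enough, and the main obstacle is handling small $t$. I expect the fix to be that for small $t$ the noncentrality parameter $\zeta_t \sim \frac{4v_0}{\sigma^2 t}$ blows up, and a noncentral $\chi^2$ with large noncentrality concentrates away from $0$, killing the apparent singularity: more precisely $\widetilde{v}_t \to v_0 > 0$ in probability (indeed in $L^p$) as $t\to0^+$, and one can get a uniform-in-$t$ bound on $\EE[\widetilde{v}_t^{-s}]$ for $t$ in a neighbourhood of $0$ by a direct estimate using the representation $\widetilde{v}_t \stackrel{d}{=} \frac{\sigma^2(1-e^{-\kappa t})}{4\kappa}\sum_{j=1}^{N+d'} Z_j^2$ with $N\sim\mathrm{Poisson}(\zeta_t/2)$ and appropriate centering, or alternatively by the Laplace-transform/Gronwall argument used for negative moments of CIR in \cite{AlfonsiAurélien2015ADaR}. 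Once $\sup_{t\in[0,T]}\EE[\widetilde{v}_t^{-s}] < \infty$ is established, both conclusions follow immediately by the comparison $\widetilde{v}_t\le v_t$ and $\int_0^T \EE[v_t^{-s}]\,dt \le T\sup_{t\in[0,T]}\EE[\widetilde{v}_t^{-s}] < \infty$.
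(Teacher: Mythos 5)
Your reduction to the auxiliary CIR process $\widetilde{v}$ via Proposition \ref{p1} and the noncentral chi-square representation of $\widetilde{v}_t$ matches the paper's strategy, and your bound $\EE[U^{-s}] \le \EE[V^{-s}]$ (central $\chi^2$ dominating the noncentral from above for negative moments) is correct and does give $\EE[v_t^{-s}]<\infty$ for each fixed $t>0$. However, for the second assertion, $\int_0^T\EE[v_t^{-s}]\,dt<\infty$, your argument is not complete: you correctly observe that the central-$\chi^2$ bound blows up like $t^{-s}$ near $0$ and is therefore useless there, and you correctly diagnose that the noncentrality parameter $\zeta_t\to\infty$ must rescue the estimate, but you then only gesture at a fix ("one can get a uniform-in-$t$ bound \ldots by a direct estimate \ldots or alternatively by the Laplace-transform/Gronwall argument") without actually carrying it out. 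Establishing $\sup_{t\in(0,T]}\EE[\widetilde v_t^{-s}]<\infty$ is precisely the nontrivial content of the lemma, so leaving it as a sketch leaves a genuine gap.

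The paper closes this gap by \emph{not} discarding the noncentrality. It writes the negative moment exactly, as
\begin{align*}
\EE\!\left[\widetilde v_t^{-s}\right] = \left(\frac{k(t)}{e^{-\kappa t}v_0}\right)^{s}\frac{1}{2^s e^{k(t)/2}}\,\frac{\Gamma(\tfrac{\delta}{2}-s)}{\Gamma(\tfrac{\delta}{2})}\;{}_1F_1\!\left(\tfrac{\delta}{2}-s,\tfrac{\delta}{2};\tfrac{k(t)}{2}\right),
\end{align*}
with $k(t)$ the noncentrality parameter and $\delta=4\kappa\bar v/\sigma^2$. This function is clearly continuous on $(0,T]$, and the point is to check continuity at $t=0^+$: since $k(t)\to\infty$, one invokes the classical large-argument asymptotic ${}_1F_1(a,b;z)\sim\tfrac{\Gamma(b)}{\Gamma(a)}e^z z^{a-b}$, which shows that the exponential and power factors cancel exactly and $\EE[\widetilde v_t^{-s}]\to v_0^{-s}$. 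Continuity on the compact interval $[0,T]$ then gives the uniform bound and hence integrability, with no need for a separate Poisson-mixture or Gr\"onwall argument. If you want to complete your version along the lines you sketched, the cleanest route is essentially to redo this: keep the Poisson weights $e^{-\zeta_t/2}(\zeta_t/2)^k/k!$ in the mixture, use Stirling-type asymptotics for $\Gamma((d+2k)/2-s)/\Gamma((d+2k)/2)\sim k^{-s}$, and show that the resulting sum stays bounded as $\zeta_t\to\infty$ — but this is effectively re-deriving the ${}_1F_1$ asymptotics by hand.
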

   \begin{proof}
       See Lemma \ref{lemAinver} in the Appendix. 
   \end{proof}

Finally, we study the existence of positive moments of the Radon-Nikodym derivative $\frac{d\QQ(a)}{d\PP}=X_T^{(a)}$, given in Theorem \ref{risk}. The proof boils down to check that expectations of the type
\begin{align*}
    \EE\left[\exp\left(A\int_0^T\frac{1}{v_u}du\right)\right] \hspace{0.3cm}\text{and}\hspace{0.3cm}  \EE\left[\exp\left(B(a)\int_0^Tv_udu\right)\right]
\end{align*}
are finite, where $A$ is a constant independent of $a$ and $B(a)$ is a constant depending on $a$. To ensure that the first expectation is finite we require the two following conditions on the model parameters $2\kappa\Bar{v}>\sigma^2$ and $\frac{1-\rho^2}{D(s^2-s)}\left(\frac{2\kappa\Bar{v}-\sigma^2}{2\sigma}\right)^2>1$ (see Lemma \ref{lemAinvexp} in the Appendix) where $s>1$ is the moment that we want to study. To check that the second expectation is finite we use Proposition \ref{novikov}.

\begin{lem}\label{nmoment} Let $\QQ(a)\in\mathcal{E}$, $s>1$, $D:=\sup_{t\in[0,T]}(\mu_t-r)^2<\infty$ and $X^{(a)}$ defined in Theorem \ref{risk}. Assume that $2\kappa\Bar{v}>\sigma^2$ and $\frac{1-\rho^2}{D(s^2-s)}\left(\frac{2\kappa\Bar{v}-\sigma^2}{2\sigma}\right)^2>1$. Consider $q_2$ such that $1<q_2<\frac{1-\rho^2}{D(s^2-s)}\left(\frac{2\kappa\Bar{v}-\sigma^2}{2\sigma}\right)^2$ and define $q_1:=\frac{q_2}{q_2-1}>1$. If
\begin{align}\label{acond}
    |a|<\min\Bigg\{\frac{1}{q_1s}\sqrt{\frac{c_l}{2}},\sqrt{\frac{(1-\rho^2)c_l}{q_1s\left[2q_1s(1-\rho^2)+\rho^2s-1\right]}}\Bigg\},
\end{align}
then
\begin{align*}
    \EE\left[\left(X_T^{(a)}\right)^s\right]<\infty \hspace{0.5cm}\text{and}\hspace{0.5cm} \EE\left[\left(X_t^{(a)}\right)^s\right]\leq \left(\frac{s}{s-1}\right)^s\EE\left[\left(X_T^{(a)}\right)^s\right]<\infty,
\end{align*}
for all $t\in[0,T]$.
\end{lem}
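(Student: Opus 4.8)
The plan is to reduce everything to the single finiteness statement $\EE[(X_T^{(a)})^s]<\infty$. Indeed, the bounds in \eqref{acond} imply $|a|<\sqrt{2c_l}$, so by Theorem \ref{risk}(1) the process $X^{(a)}$ is a nonnegative $(\mathcal{F},\PP)$-martingale; once $\EE[(X_T^{(a)})^s]<\infty$ is known, conditional Jensen gives $(X_t^{(a)})^s\le\EE[(X_T^{(a)})^s\mid\mathcal{F}_t]$, so $t\mapsto(X_t^{(a)})^s$ is a nonnegative submartingale and Doob's $L^s$-inequality yields $\EE[\sup_{t\in[0,T]}(X_t^{(a)})^s]\le(\tfrac{s}{s-1})^s\EE[(X_T^{(a)})^s]$; since $(X_t^{(a)})^s\le\sup_{u\in[0,T]}(X_u^{(a)})^s$, the second displayed inequality follows. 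Hence only the bound on $\EE[(X_T^{(a)})^s]$ has to be proved.

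The first step is to remove the $B$-driven factor \emph{exactly} by conditioning, which is what keeps the subsequent estimates from cascading. Set $\mathcal{G}:=\sigma\{W_u,L_u:u\in[0,T]\}$. Since $v$ solves an SDE driven only by $W$ and $L$ (Proposition \ref{p2}), the processes $v$ and $\theta^{(a)}$ are $\mathcal{G}$-adapted and $(Z_T^{(a)})^s$ is $\mathcal{G}$-measurable, whereas $B$ is independent of $\mathcal{G}$ (the components of the planar Brownian motion $(B,W)$ are independent and $(B,W)$ is independent of $L$), and $\int_0^T(\theta^{(a)}_u)^2\,du<\infty$ a.s.\ because $v$ is continuous and strictly positive on $[0,T]$ (Proposition \ref{p1}) and $\mu$ is bounded. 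A standard conditioning argument then gives that the conditional law of $\int_0^T\theta^{(a)}_u\,dB_u$ given $\mathcal{G}$ is centered Gaussian with variance $\int_0^T(\theta^{(a)}_u)^2\,du$, so $\EE[(Y_T^{(a)})^s\mid\mathcal{G}]=\exp\!\big(\tfrac{s^2-s}{2}\int_0^T(\theta^{(a)}_u)^2\,du\big)$ and
\begin{equation*}
\EE\big[(X_T^{(a)})^s\big]=\EE\Big[(Z_T^{(a)})^s\exp\Big(\tfrac{s^2-s}{2}\int_0^T(\theta^{(a)}_u)^2\,du\Big)\Big].
\end{equation*}
Substituting $(\theta^{(a)}_u)^2=\tfrac{1}{1-\rho^2}\big(\tfrac{(\mu_u-r)^2}{v_u}-2a\rho(\mu_u-r)+a^2\rho^2v_u\big)$, absorbing the deterministic bounded term into a finite constant $c_1$, writing $(Z_T^{(a)})^s$ as the stochastic exponential $\mathcal{Z}_T:=\exp\!\big(-sa\int_0^T\sqrt{v_u}\,dW_u-\tfrac{s^2a^2}{2}\int_0^Tv_u\,du\big)$ times $\exp\!\big(\tfrac{(s^2-s)a^2}{2}\int_0^Tv_u\,du\big)$, collecting the $\int_0^Tv_u\,du$-exponents and bounding $(\mu_u-r)^2\le D$, one reaches
\begin{equation*}
\EE\big[(X_T^{(a)})^s\big]\le c_1\,\EE\Big[\mathcal{Z}_T\,\exp\!\Big(B(a)\!\int_0^T\!v_u\,du\Big)\exp\!\Big(A\!\int_0^T\!\tfrac{du}{v_u}\Big)\Big],\qquad A:=\tfrac{(s^2-s)D}{2(1-\rho^2)},\quad B(a):=\tfrac{(s^2-s)a^2}{2(1-\rho^2)},
\end{equation*}
with $\mathcal{Z}_T\ge0$ and $A$ independent of $a$.

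Now split this by Hölder's inequality. The assumption $\tfrac{1-\rho^2}{D(s^2-s)}\big(\tfrac{2\kappa\bar v-\sigma^2}{2\sigma}\big)^2>1$ is exactly what makes the admissible range of $q_2$ nonempty, and for such $q_2$ one has $q_2A<\tfrac12\big(\tfrac{2\kappa\bar v-\sigma^2}{2\sigma}\big)^2$. Applying Hölder with conjugate exponents $q_1,q_2$ to peel off $\exp(A\int_0^Tv_u^{-1}\,du)$, the factor $\EE[\exp(q_2A\int_0^Tv_u^{-1}\,du)]$ is finite by Lemma \ref{lemAinvexp} (which also uses $2\kappa\bar v>\sigma^2$). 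For the remaining factor $\EE[(\mathcal{Z}_T\exp(B(a)\int_0^Tv_u\,du))^{q_1}]$, expanding the $q_1$-th power of $\mathcal{Z}_T$ produces a further $\exp(\cdot\int_0^Tv_u\,du)$ correction, so that this equals $\EE[\,\text{(stochastic exponential of }-q_1sa\int_0^\cdot\sqrt{v_u}\,dW_u)\cdot\exp(\tilde c\int_0^Tv_u\,du)]$ for an explicit $\tilde c$; writing that stochastic exponential as $(\mathcal{Z}^{(2)}_T)^{1/2}\exp(\tfrac{q_1^2s^2a^2}{2}\int_0^Tv_u\,du)$ with $\mathcal{Z}^{(2)}_T:=\exp(-2q_1sa\int_0^T\sqrt{v_u}\,dW_u-2q_1^2s^2a^2\int_0^Tv_u\,du)$ and using Cauchy--Schwarz, the first bound in \eqref{acond} is precisely $2q_1^2s^2a^2<c_l$, which by Proposition \ref{novikov} makes $\EE[\exp(2q_1^2s^2a^2\int_0^Tv_u\,du)]<\infty$ (equivalently $\mathcal{Z}^{(2)}$ a genuine $\PP$-martingale), while the second bound in \eqref{acond} is precisely the condition under which the surviving $\int_0^Tv_u\,du$-exponent, whose total coefficient works out to $\tfrac{q_1sa^2\,(2q_1s(1-\rho^2)+\rho^2s-1)}{1-\rho^2}$, stays strictly below $c_l$, so that factor is finite by Proposition \ref{novikov} as well. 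Assembling the pieces gives $\EE[(X_T^{(a)})^s]<\infty$.

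The delicate part is the exponent bookkeeping: the Hölder exponents and the powers to which the stochastic exponentials are raised must be chosen so that, simultaneously, the $\int_0^Tv_u^{-1}\,du$-exponent remains below the threshold of Lemma \ref{lemAinvexp} — which is what pins down the role of $q_2$ and forces the parameter conditions $2\kappa\bar v>\sigma^2$ and $\tfrac{1-\rho^2}{D(s^2-s)}\big(\tfrac{2\kappa\bar v-\sigma^2}{2\sigma}\big)^2>1$ — while both $\int_0^Tv_u\,du$-exponents remain below $c_l$, which is what produces the two bounds on $|a|$ in \eqref{acond}. Disposing of the $B$-part exactly by the conditioning step is essential here: without it, each new power of a stochastic exponential would regenerate a $v^{-1}$-correction and the estimate would not close.
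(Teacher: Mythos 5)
Your proposal is correct and follows essentially the same route as the paper's proof of Lemma~\ref{Anmoment}: condition out the $B$-driven lognormal factor exactly, reduce to an exponential functional of $\int_0^T v_u\,du$, $\int_0^T\sqrt{v_u}\,dW_u$ and $\int_0^T v_u^{-1}\,du$, apply H\"older with $q_1,q_2$ to separate the inverse-variance term (controlled by Lemma~\ref{lemAinvexp}), then Cauchy--Schwarz and Proposition~\ref{novikov} on the remaining factor, yielding precisely the two bounds in~\eqref{acond}, and finish with Doob's $L^s$-inequality. The only cosmetic difference is that you bound $(\mu_u-r)^2\le D$ and identify the submartingale before invoking Doob, whereas the paper works with explicit constants $O^{(a)},P^{(a)},Q^{(a)},R$ and cites Doob's inequality for positive martingales directly; the exponent bookkeeping you report matches the paper's computation of $2(q_1Q^{(a)}+q_1^2(P^{(a)})^2)$ exactly.
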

\begin{proof}
    See Lemma \ref{Anmoment} in the Appendix.
\end{proof}
\begin{obs}\label{obspositivity}
Using that $q_1,s>1$ and that $\rho^2<1$ one can check that the expression $q_1 s\left[2q_1 s(1-\rho^2)+\rho^2s-1\right]$ appearing in the second expression inside the minimum \eqref{acond} is strictly positive. Indeed,
\begin{align*}
    2q_1 s(1-\rho^2)+\rho^2s-1&>2(1-\rho^2)+\rho^2-1=1-\rho^2>0.
\end{align*}
\end{obs}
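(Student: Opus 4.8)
The plan is to verify the displayed chain of inequalities, which is an elementary lower bound obtained by replacing each factor involving $q_1$ and $s$ by its infimum over the admissible range. Since $q_1>1$ and $s>1$ (the former by construction in Lemma \ref{nmoment}, the latter by hypothesis there), the outer factor $q_1 s$ is strictly positive, so the sign of the whole expression is governed by the bracketed term $2q_1 s(1-\rho^2)+\rho^2 s-1$, and it suffices to show that this bracket is strictly positive.

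To bound the bracket from below I would treat its two nonconstant summands separately. For the first, note $q_1 s>1$ and $1-\rho^2>0$ (because $\rho\in(-1,1)$), hence $2q_1 s(1-\rho^2)>2(1-\rho^2)$. For the second, $s>1$ and $\rho^2\geq 0$ give $\rho^2 s\geq\rho^2$. Adding these and subtracting $1$ yields
\begin{align*}
2q_1 s(1-\rho^2)+\rho^2 s-1>2(1-\rho^2)+\rho^2-1=1-\rho^2>0,
\end{align*}
which is exactly the claimed estimate. Multiplying through by $q_1 s>0$ then gives $q_1 s\left[2q_1 s(1-\rho^2)+\rho^2 s-1\right]>0$, so in particular the quantity under the square root in the second term of the minimum in \eqref{acond} is positive and the expression is well defined.

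There is no genuine obstacle here: the argument is a one-line monotonicity estimate, and the only thing to be careful about is that the inequalities $q_1 s>1$, $s\geq 1$, and $1-\rho^2>0$ are all available from the standing hypotheses of Lemma \ref{nmoment} and the model assumptions, so no additional input is needed.
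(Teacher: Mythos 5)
Your argument is correct and is exactly the paper's: lower-bound $2q_1 s(1-\rho^2)$ by $2(1-\rho^2)$ and $\rho^2 s$ by $\rho^2$ using $q_1,s>1$ and $\rho^2<1$, then observe the bracket exceeds $1-\rho^2>0$ and the outer factor $q_1 s$ is positive. No differences of substance.
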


To ensure that the expectations in \eqref{expectations} are finite we assume some conditions on the model parameters and define a suitable family of risk neutral probability measures.
\begin{ass}\label{as3}
We fix $\varepsilon_1,\varepsilon_2>0$ and assume that
\begin{enumerate}
\item $\rho^2<c_l$. 
    \item $\frac{1-\rho^2}{D\left[(2+\varepsilon_1)^2-(2+\varepsilon_1)\right]}\left(\frac{2\kappa\Bar{v}-\sigma^2}{2\sigma}\right)^2>1$.
    \item $2\kappa\Bar{v}>(1+\varepsilon_2)\sigma^2$.
\end{enumerate}
\end{ass}
\begin{definition}
  Let $q,s>1$, we define a subset of $\mathcal{E}_m$ by
    \begin{align*}
        \mathcal{E}_{m}(q,s):=\left\{\QQ(a)\in\mathcal{E}_m:|a|<\min\Bigg\{\frac{1}{qs}\sqrt{\frac{c_l}{2}},\sqrt{\frac{(1-\rho^2)c_l}{qs\left[2qs(1-\rho^2)+\rho^2s-1\right]}}\Bigg\}\right\}.
    \end{align*}
\end{definition}
From now on, we fix $Q_2$ such that $1<Q_2<\frac{1-\rho^2}{D\left[(2+\varepsilon_1)^2-(2+\varepsilon_1)\right]}\left(\frac{2\kappa\Bar{v}-\sigma^2}{2\sigma}\right)^2$ and $Q_1:=\frac{Q_2}{Q_2-1}>1$. 
\begin{obs}\label{momentsobs}
Let $\QQ(a)\in\mathcal{E}_{m}(Q_1,2+\varepsilon_1)$. By Lemma \ref{leminver}, Lemma \ref{nmoment} and Assumption \ref{as3} the following holds
\begin{align*}
    \EE[(X_T^{(a)})^{2+\varepsilon_1}]<\infty,\hspace{0.3cm}\EE\left[\left(\frac{1}{v_t}\right)^{1+\varepsilon_2}\right]<\infty\hspace{0.3cm}\text{and}\hspace{0.3cm}\int_0^T\EE\left[\left(\frac{1}{v_t}\right)^{1+\varepsilon_2}\right]dt<\infty.
\end{align*}
\end{obs}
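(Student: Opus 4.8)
The plan is to verify that, for $\QQ(a)\in\mathcal{E}_m(Q_1,2+\varepsilon_1)$, the hypotheses of Lemma \ref{nmoment} and of Lemma \ref{leminver} are all satisfied, so that the three claimed bounds follow directly from those two lemmas. In other words, this is a compatibility check: Assumption \ref{as3} together with the restricted range of $a$ built into the definition of $\mathcal{E}_m(Q_1,2+\varepsilon_1)$ has been tailored precisely so that both lemmas apply simultaneously.

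First I would treat $\EE[(X_T^{(a)})^{2+\varepsilon_1}]<\infty$ by invoking Lemma \ref{nmoment} with $s=2+\varepsilon_1>1$ and with the auxiliary exponents $q_2=Q_2$, $q_1=Q_1$. The quantity $D=\sup_{t\in[0,T]}(\mu_t-r)^2$ is finite because $\mu$ is bounded and $r$ is constant. The two structural conditions of Lemma \ref{nmoment} read $2\kappa\Bar{v}>\sigma^2$ and $\frac{1-\rho^2}{D[(2+\varepsilon_1)^2-(2+\varepsilon_1)]}\left(\frac{2\kappa\Bar{v}-\sigma^2}{2\sigma}\right)^2>1$; the second is exactly item 2 of Assumption \ref{as3}, and the first is implied by item 3 of the same assumption since $2\kappa\Bar{v}>(1+\varepsilon_2)\sigma^2>\sigma^2$ (as $\varepsilon_2>0$). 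The window $1<q_2<\frac{1-\rho^2}{D[(2+\varepsilon_1)^2-(2+\varepsilon_1)]}\left(\frac{2\kappa\Bar{v}-\sigma^2}{2\sigma}\right)^2$ required in Lemma \ref{nmoment} is precisely the constraint under which $Q_2$ was fixed right after the definition of $\mathcal{E}_m(q,s)$, and $q_1=Q_1=\frac{Q_2}{Q_2-1}$ is the corresponding conjugate. Finally, the bound \eqref{acond} on $|a|$ with $q_1=Q_1$, $s=2+\varepsilon_1$ is, term for term, the inequality defining membership in $\mathcal{E}_m(Q_1,2+\varepsilon_1)$, hence holds by hypothesis; and $\mathcal{E}_m(Q_1,2+\varepsilon_1)\subseteq\mathcal{E}_m\subseteq\mathcal{E}$, with $\mathcal{E}_m$ well defined because $\rho^2<c_l$ by item 1 of Assumption \ref{as3}. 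Thus Lemma \ref{nmoment} applies and gives $\EE[(X_T^{(a)})^{2+\varepsilon_1}]<\infty$, together with the (here superfluous) pointwise-in-$t$ estimate. I would also note in passing that $Q_1(2+\varepsilon_1)\left[2Q_1(2+\varepsilon_1)(1-\rho^2)+\rho^2(2+\varepsilon_1)-1\right]$ under the second square root is strictly positive by Observation \ref{obspositivity} and $c_l>0$ by Proposition \ref{novikov}, so $\mathcal{E}_m(Q_1,2+\varepsilon_1)$ is nonempty (it contains $\QQ(0)$) and the statement is not vacuous.

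For the remaining two bounds I would apply Lemma \ref{leminver} with $s=1+\varepsilon_2\geq1$: its sole hypothesis $2\kappa\Bar{v}>s\sigma^2=(1+\varepsilon_2)\sigma^2$ is item 3 of Assumption \ref{as3}, and the lemma then delivers both $\EE[(1/v_t)^{1+\varepsilon_2}]<\infty$ for every $t\in[0,T]$ and $\int_0^T\EE[(1/v_t)^{1+\varepsilon_2}]\,dt<\infty$.

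I do not expect any genuine obstacle; the argument is essentially bookkeeping. The one point deserving care is aligning the fixed constant $Q_1$ and the auxiliary exponent $q_2=Q_2$ with the roles of $q_1$ and $q_2$ in Lemma \ref{nmoment}, and observing that the hypothesis $2\kappa\Bar{v}>\sigma^2$ of that lemma is subsumed by the stronger $2\kappa\Bar{v}>(1+\varepsilon_2)\sigma^2$ of Assumption \ref{as3}.
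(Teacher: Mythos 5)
Your proposal is correct and matches the paper's approach: the observation is stated without a separate proof precisely because it is the bookkeeping you carried out, namely checking that $\mathcal{E}_m(Q_1,2+\varepsilon_1)$ together with Assumption \ref{as3} satisfies every hypothesis of Lemma \ref{nmoment} (with $s=2+\varepsilon_1$, $q_2=Q_2$, $q_1=Q_1$) and of Lemma \ref{leminver} (with $s=1+\varepsilon_2$). Your additional remarks — that $2\kappa\Bar{v}>(1+\varepsilon_2)\sigma^2$ subsumes the weaker $2\kappa\Bar{v}>\sigma^2$, and that the set is nonempty since it contains $\QQ(0)$ — are accurate and harmless.
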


\subsection{Compensator of $N$ under the historic and the risk neutral measures}
First, we give the compensators of $N$ and $L$ under $\PP$. By definition, the compensator $\Lambda^N$ is a $\mathcal{F}^N$-predictable process such that $N-\Lambda^N$ is a $(\mathcal{F}^N,\PP)$-local martingale. Using the independence between $(B,W)$ and $N$ one can prove that martingale property of $N-\Lambda^N$ still holds under the joint filtration $\mathcal{F}=\{\mathcal{F}_t=\mathcal{F}_t^{(B,W)}\vee\mathcal{F}_t^L, t\in[0,T]\}$.
\begin{lem}\label{mainlemcomp}The following holds \begin{enumerate}
    \item Define $\Lambda^N_t:=\int_0^t\lambda_udu$, then $N-\Lambda^N$ is a square integrable $(\mathcal{F},\PP)$-martingale. 
\item Define $\Lambda^L_t:=\EE[J_1]\int_0^t\lambda_udu$, then $L-\Lambda^L$ is a square integrable $(\mathcal{F},\PP)$-martingale. 
\end{enumerate}
\end{lem}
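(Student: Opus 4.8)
The plan is to prove both statements by first establishing the martingale property with respect to the smaller filtration $\mathcal{F}^N$ (resp. $\mathcal{F}^L$) and then lifting it to the joint filtration $\mathcal{F}$ using independence. For the first item, recall that it is a classical fact (see e.g. \cite{Hawkesfinance,article1}) that for a Hawkes process $N$ with intensity $\lambda$ the process $N_t-\int_0^t\lambda_u\,du$ is an $(\mathcal{F}^N,\PP)$-local martingale, and that in fact it is square integrable on the finite horizon $[0,T]$ because $\lambda$ has finite moments of all orders (the stability condition $\alpha/\beta<1$ and Assumption~\ref{as} guarantee $\EE[N_T^2]<\infty$ and $\EE[(\int_0^T\lambda_u\,du)^2]<\infty$). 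So the real content is the filtration enlargement: one must show that $N-\Lambda^N$ remains an $(\mathcal{F},\PP)$-martingale, where $\mathcal{F}_t=\mathcal{F}_t^{(B,W)}\vee\mathcal{F}_t^L$.

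For the enlargement step I would argue as follows. Fix $0\le s\le t\le T$ and let $M_t:=N_t-\Lambda^N_t$. We want $\EE[M_t\mid\mathcal{F}_s]=M_s$. Since $M$ is adapted to $\mathcal{F}^N\subseteq\mathcal{F}^L\subseteq\mathcal{F}$, it suffices to show $\EE[M_t\mid\mathcal{F}_s]=\EE[M_t\mid\mathcal{F}_s^N]$. Here one uses that $(B,W)$ is independent of the pair $(N,\{J_i\}_{i\ge1})$, hence independent of $\mathcal{F}_\infty^L$ and of $M$; a standard lemma on conditional expectations with independent enlargements (if $\mathcal{G}_1\vee\mathcal{G}_2$ and $\mathcal{G}_3$ are such that $\mathcal{G}_3$ is independent of $\mathcal{G}_1\vee\sigma(Z)$ and $Z$ is $\mathcal{G}_1\vee\mathcal{G}_2$-measurable... ) gives $\EE[M_t\mid\mathcal{F}_s^{(B,W)}\vee\mathcal{F}_s^L]=\EE[M_t\mid\mathcal{F}_s^L]$. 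One then needs a second reduction from $\mathcal{F}^L$ to $\mathcal{F}^N$: since $L_t=\sum_{i=1}^{N_t}J_i$ and the $J_i$ are i.i.d.\ and independent of $N$, the filtration $\mathcal{F}^L$ carries strictly more information than $\mathcal{F}^N$ only through the jump sizes, which are independent of the counting structure; because $M_t$ is a function of the jump times of $N$ alone, conditioning on $\mathcal{F}_s^L$ is the same as conditioning on $\mathcal{F}_s^N$. Assembling these two reductions yields $\EE[M_t\mid\mathcal{F}_s]=\EE[M_t\mid\mathcal{F}_s^N]=M_s$. Square integrability under $\mathcal{F}$ is automatic since the random variables are unchanged.

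For the second item the structure is identical. One first checks that $L_t-\EE[J_1]\int_0^t\lambda_u\,du$ is an $(\mathcal{F}^L,\PP)$-martingale: writing $L_t-\EE[J_1]N_t=\sum_{i=1}^{N_t}(J_i-\EE[J_1])$ one sees this is a martingale by the independence of the $J_i$ from $N$ and from each other (it is the value at the random time $N_t$ of a random walk with mean-zero increments, which is an $\mathcal{F}^L$-martingale after a standard optional-stopping / Wald-type argument), and then $L_t-\EE[J_1]\Lambda^N_t=(L_t-\EE[J_1]N_t)+\EE[J_1](N_t-\Lambda^N_t)$ is a sum of two $\mathcal{F}^L$-martingales, hence an $\mathcal{F}^L$-martingale. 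Square integrability follows from $\EE[J_1^2]<\infty$ (Assumption~\ref{as}) together with $\EE[N_T^2]<\infty$. The passage to the joint filtration $\mathcal{F}$ uses exactly the same independence argument as before, now only needing the single reduction from $\mathcal{F}^{(B,W)}\vee\mathcal{F}^L$ down to $\mathcal{F}^L$, which is immediate from the independence of $(B,W)$ and $L$.

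The main obstacle is the filtration-enlargement step, and specifically making rigorous the claim that conditioning $M_t=N_t-\Lambda^N_t$ on $\mathcal{F}_s^L$ reduces to conditioning on $\mathcal{F}_s^N$. This is intuitively clear but requires care: one must argue that although $\mathcal{F}^L$ is generated by $L$ alone and does not literally contain $\mathcal{F}^N$, the information about the jump \emph{times} of $N$ up to time $s$ that is relevant for $M_s$ is recoverable, while the extra randomness in $\mathcal{F}_s^L$ (the specific jump sizes) is independent of the future jump configuration of $N$ on $(s,t]$. The cleanest route is probably to use the explicit representation of $\mathcal{F}^L$, condition first on the entire trajectory of $N$ (using independence of $\{J_i\}$ from $N$), and then invoke a Doob-type lemma; alternatively one can verify the martingale property directly against a generating class of bounded $\mathcal{F}_s^L$-measurable test functions. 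This is presumably the technical point the authors allude to when they write that the result ``needs to be proven requiring the aforementioned integrability results,'' and is why the full proof is deferred to the Appendix.
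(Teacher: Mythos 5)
Your overall plan — establish the martingale property on the small filtration and then lift it to $\mathcal{F}$ via independence — is exactly the paper's strategy, and for item (1) the chain of conditional expectations you sketch (reduce from $\mathcal{F}_s=\mathcal{F}_s^{(B,W)}\vee\mathcal{F}_s^L$ to $\mathcal{F}_s^L$, then to $\mathcal{F}_s^N$) is precisely what appears in the Appendix. One thing you can simplify: you worry that $\mathcal{F}^L$ ``does not literally contain $\mathcal{F}^N$,'' but since the jumps $J_i$ are strictly positive a.s., the jump times of $L$ coincide with those of $N$, so $N$ is $\mathcal{F}^L$-adapted and $\mathcal{F}_s^N\subset\mathcal{F}_s^L$ does hold (and is used by the paper). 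The paper makes the second reduction rigorous via the sandwich $\mathcal{F}_s^N\subset\mathcal{F}_s^L\subset\mathcal{F}_s^N\vee\sigma(\{J_i\}_{i\geq 1})$ together with the independence of $\sigma(N_t-\Lambda_t^N)\vee\mathcal{F}_s^N$ from $\sigma(\{J_i\}_{i\geq 1})$, which is the precise form of the Doob-type lemma you gesture at. For item (2) your route is slightly different and arguably cleaner: you decompose $L-\Lambda^L=(L-\EE[J_1]N)+\EE[J_1](N-\Lambda^N)$ and appeal to a Wald-type argument for the first summand plus item (1) for the second, whereas the paper proves the martingale property of $L-\Lambda^L$ directly by computing $\EE[L_t\mid\mathcal{F}_s]$ via a tower-property step conditioning on $\mathcal{F}_s\vee\mathcal{F}_t^N$. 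Both hinge on the same independence of $J_{N_s+1},\dots,J_{N_t}$ from $\mathcal{F}_s\vee\mathcal{F}_t^N$; your decomposition has the advantage of isolating item (1), while the paper's direct computation avoids having to re-verify the martingale property of the compensated jump sum. Square integrability is handled the same way in both: $\EE[N_T^2]<\infty$, $\EE[J_1^2]<\infty$ and Jensen on $\int_0^T\lambda_u\,du$.
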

\begin{proof}
(1) follows from the fact that $N-\Lambda^N$ is a $(\mathcal{F}^N,\PP)$-martingale, see \cite[Theorem 3]{Hawkesfinance}, the independence between $(B,W)$ and $N$, and $\EE[N_t^2]<\infty$, $\EE[\lambda_t^2]<\infty$, see \cite[Theorem 1]{momentshawkes} and \cite[Section 3.1]{momentshawkes}. An analogous proof can be employed to prove (2). 
\end{proof}
To conclude this section, we prove that the compensated (compound) Hawkes process is a martingale under  $\QQ(a)\in\mathcal{E}_{m}(Q_1,2+\varepsilon_1)$ using the existence of the moments in Observation \ref{momentsobs}.

\begin{prop}\label{qacomhawkesmain} Let $\QQ(a)\in\mathcal{E}_{m}(Q_1,2+\varepsilon_1)$, then
\begin{enumerate}
    \item $N-\Lambda^N$ is a $(\mathcal{F},\QQ(a))$-martingale.
    \item $L-\Lambda^L$ is a $(\mathcal{F},\QQ(a))$-martingale.
\end{enumerate}
\end{prop}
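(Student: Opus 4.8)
The plan is to deduce both statements at once by combining the change-of-measure (Bayes) characterisation of martingales with the structural fact that the Radon--Nikodym density process $X^{(a)}$ is \emph{continuous}. Fix $\QQ(a)\in\mathcal{E}_{m}(Q_1,2+\varepsilon_1)$. Since $\mathcal{E}_{m}(Q_1,2+\varepsilon_1)\subset\mathcal{E}$, Theorem \ref{risk} tells us that $X^{(a)}=Y^{(a)}Z^{(a)}$ is a strictly positive $(\mathcal{F},\PP)$-martingale with $X_T^{(a)}=d\QQ(a)/d\PP$. By the generalised Bayes formula, a \cadlag{} adapted process $M$ with $M_t X_t^{(a)}\in L^1(\PP)$ for every $t\in[0,T]$ is an $(\mathcal{F},\QQ(a))$-martingale if and only if $M X^{(a)}$ is an $(\mathcal{F},\PP)$-martingale. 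Hence it suffices to show that $(N-\Lambda^N)X^{(a)}$ and $(L-\Lambda^L)X^{(a)}$ are $(\mathcal{F},\PP)$-martingales.

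First I would note that $Y^{(a)}$ and $Z^{(a)}$ are exponentials of continuous processes --- It\^o integrals against $B$ and $W$ plus absolutely continuous drifts --- so, although $v$, and hence $\theta^{(a)}$, jumps as a function of time, the corresponding integrals are continuous in $t$ and $X^{(a)}$ is a \emph{continuous} $(\mathcal{F},\PP)$-martingale. On the other hand, by Lemma \ref{mainlemcomp}, $M:=N-\Lambda^N$ (respectively $M:=L-\Lambda^L$) is a square-integrable $(\mathcal{F},\PP)$-martingale whose continuous martingale part vanishes, because $N$ (resp.\ $L$) is a pure-jump process while $\Lambda^N$ (resp.\ $\Lambda^L$) is continuous of finite variation. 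Consequently $[X^{(a)},M]\equiv 0$: the continuous co-variation $\langle(X^{(a)})^{c},M^{c}\rangle$ vanishes since $M^{c}=0$, and the jump part $\sum_{0<s\leq t}\Delta X^{(a)}_s\Delta M_s$ vanishes since $X^{(a)}$ is continuous. Integration by parts then yields $X_t^{(a)}M_t=\int_0^t X_s^{(a)}\,dM_s+\int_0^t M_{s-}\,dX_s^{(a)}$, and as $X^{(a)}$ and $M_{-}$ are predictable and locally bounded, both terms are $(\mathcal{F},\PP)$-local martingales; hence $X^{(a)}M$ is an $(\mathcal{F},\PP)$-local martingale.

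The remaining step --- which I expect to be the only delicate one, and which is precisely where the integrability results of the previous subsection enter --- is to upgrade this to a genuine martingale. It is enough to show $\EE_{\PP}\big[\sup_{t\in[0,T]}X_t^{(a)}|M_t|\big]<\infty$, since a local martingale on $[0,T]$ whose running supremum is integrable is a (uniformly integrable) martingale. By Cauchy--Schwarz this expectation is at most $\big(\EE_{\PP}[\sup_{t\in[0,T]}(X_t^{(a)})^2]\big)^{1/2}\big(\EE_{\PP}[\sup_{t\in[0,T]}M_t^2]\big)^{1/2}$. The second factor is finite by Doob's maximal inequality and the square-integrability of $M$ from Lemma \ref{mainlemcomp}; for the first, Doob's $L^2$ maximal inequality applied to the non-negative submartingale $X^{(a)}$ gives $\EE_{\PP}[\sup_t(X_t^{(a)})^2]\leq 4\,\EE_{\PP}[(X_T^{(a)})^2]$, and $\EE_{\PP}[(X_T^{(a)})^2]\leq\EE_{\PP}[(X_T^{(a)})^{2+\varepsilon_1}]^{2/(2+\varepsilon_1)}<\infty$ by Jensen's inequality and Observation \ref{momentsobs}. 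It is exactly here that the hypothesis $\QQ(a)\in\mathcal{E}_{m}(Q_1,2+\varepsilon_1)$, together with Assumption \ref{as3} and, through Observation \ref{momentsobs}, the positive and negative moment bounds on $v$, are used. Having established that $X^{(a)}M$ is a true $(\mathcal{F},\PP)$-martingale --- so in particular $M_t X_t^{(a)}\in L^1(\PP)$ for all $t$ --- the Bayes reduction of the first paragraph completes the proof for both $N-\Lambda^N$ and $L-\Lambda^L$.

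For perspective, this is the concrete incarnation of the general Girsanov principle that a change of measure with \emph{continuous} density process leaves the compensator of the jump part of a semimartingale unchanged; the integration-by-parts argument above is a self-contained version of that fact, in which the moment estimates of Section 3 provide exactly the integrability needed to pass from a local to a genuine martingale. This resolves the apparent difficulty noted earlier, where the naive independence heuristic fails because $v$, and therefore $X^{(a)}$, depends on $N$.
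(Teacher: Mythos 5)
Your proof is correct, and it takes a genuinely more streamlined route than the paper's. The paper also starts from the integration-by-parts identity $L_tX_t^{(a)}=\int_0^t L_{u-}\,dX_u^{(a)}+\int_0^t X_u^{(a)}\,dL_u$ (valid because $X^{(a)}$ is continuous and $L$ is of finite variation), but then proves in Lemma \ref{lemmarting} that the two stochastic integrals $\int_0^\cdot X_u^{(a)}\,d(L-\Lambda^L)_u$ and $\int_0^\cdot L_{u-}\,dX_u^{(a)}$ are true $(\mathcal{F},\PP)$-martingales by checking the $L^2$ conditions $\EE\big[\int_0^T (X_u^{(a)})^2\,d[L]_u\big]<\infty$ and $\EE\big[\int_0^T L_{u-}^2\,d[X^{(a)}]_u\big]<\infty$; the second of these needs H\"older with three exponents, the negative moments of $v$ from Lemma \ref{leminver} (hence Assumption \ref{as3}(3)), and the positive moments from Lemma \ref{lemintegr}. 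Proposition \ref{qacomhawkes} then concludes via an explicit computation of the conditional expectation under Bayes. You instead apply the product rule directly to $M X^{(a)}$ with $M=N-\Lambda^N$ or $L-\Lambda^L$, observe that $[X^{(a)},M]\equiv 0$ because $X^{(a)}$ is continuous while $M$ is purely discontinuous, and upgrade the resulting local martingale to a true martingale by the single estimate $\EE\big[\sup_{t\le T}X_t^{(a)}|M_t|\big]\le\big(4\,\EE[(X_T^{(a)})^2]\big)^{1/2}\big(4\,\EE[M_T^2]\big)^{1/2}<\infty$, using Cauchy--Schwarz and Doob's $L^2$ maximal inequality twice. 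This handles $N$ and $L$ uniformly in one stroke, sidesteps the exponent bookkeeping of Lemma \ref{lemmarting}, and in fact only invokes $\EE[(X_T^{(a)})^{2+\varepsilon_1}]<\infty$ and the square-integrability of $M$ from Lemma \ref{mainlemcomp} --- so it does not even need the $\EE[(1/v_t)^{1+\varepsilon_2}]$ bound from Assumption \ref{as3}(3) at this stage. The trade-off is that the paper's Lemma \ref{lemmarting}(3) produces the explicit formula $\EE[L_tX_t^{(a)}\mid\mathcal{F}_s]=L_sX_s^{(a)}+\EE[J_1]\int_s^t\EE[\lambda_uX_u^{(a)}\mid\mathcal{F}_s]\,du$, which is slightly more informative, whereas your argument establishes the martingale property without that intermediate identity.
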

\begin{proof}
    See Proposition \ref{qacomhawkes} in the Appendix. 
\end{proof}
\begin{remark}
    Regarding Proposition \ref{qacomhawkesmain}, it is important to mention that the fact that $N-\Lambda^N$ is a $(\mathcal{F},\QQ(a))$-local martingale can be straightforwardly deduced by the Girsanov-Meyer Theorem \cite[Section III.8, Theorem 35]{Protter}. However, the martingale property of the process $N-\Lambda^N$ under $\QQ(a)\in\mathcal{E}_{m}(Q_1,2+\varepsilon_1)$ proven in Proposition \ref{qacomhawkesmain} is interesting on its own. This property is not obvious and the proof is rather technical, requiring the existence of moments of the variance $v$ and the Radon-Nikodym derivative $\frac{d\QQ(a)}{d\PP}=X_T^{(a)}$. Such existence of moments is an important issue in asset pricing theory, in relation to the existence of bubbles.  
\end{remark}

\section{Thiele's PIDE for unit-linked policies}\label{main}
The objective of this section is to derive Thiele's differential equation for unit-linked policies under the Heston-Hawkes stochastic volatility model. First, we find the drift of a process of the form $t\to Z(t,S_t,v_t,\lambda_t)$ under a risk neutral probability measure, where $Z$ is a regular enough function. The compensator of $N$ under $\QQ(a)\in\mathcal{E}_{m}(Q_1,2+\varepsilon_1)$ is needed to find such drift. In order to lighten the notation, we first define some space of functions. From now on, $\RR_+:=(0,\infty)$.
\begin{definition}
    We define $\mathcal{D}:=\RR_+^2\times[\lambda_0,\infty)$ and $\mathcal{C}^{1,2}:=\mathcal{C}^{1,2}\left([0,T]\times\mathcal{D}\right)$ the space of functions $Y\colon[0,T]\times\mathcal{D}\to\RR_+$ that are jointly continuous, continuously differentiable on the first variable, twice continuously differentiable on the last three variables and all derivatives are jointly continuous. 
\newline
We define $\mathcal{C}^{0,1,2}:=\mathcal{C}^{0,1,2}\left([0,T]^2\times\mathcal{D}\right)$ the space of functions $Z\colon[0,T]^2\times\mathcal{D}\to\RR_+$ that are jointly continuous, continuous on the first variable, continuously differentiable on the second variable, twice continuously differentiable on the last three variables and all derivatives are jointly continuous. 
\end{definition}

Let $\QQ(a)\in\mathcal{E}_m$ and $\varphi\colon[0,T]\times\RR_+\to\RR_+$ a payoff such that $\EE^{\QQ(a)}[|\varphi(s,S_s)|]<\infty$ for all $s\in[0,T]$. Recall that the price at time $t\in[0,T]$ of the payoff function $\varphi(s,S_s)$ with maturity $s\in[t,T]$ is given by $e^{-r(s-t)}\EE^{\QQ(a)}[\varphi(s,S_s)| \mathcal{F}_t]$. One example of such payoff is $\varphi(s,S_s)=\max\{G,S_s\}$, where $G$ is called the guaranteed amount. This means that at time $s$ the insured is paid the maximum between the guaranteed amount and the stock.

As a consequence of the Markov property of the process $(N,\lambda)$, see \cite[Remark 1.22]{ETH}, we prove in the next lemma that $\EE^{\QQ(a)}[\varphi(s,S_s)| \mathcal{F}_t]$ is a deterministic function of the joint process $(t,S_t,v_t,\lambda_t)$. Due to the presence of jumps in $v$ and $\lambda$, such function is the solution of a partial integro-differential equation, PIDE from now on. By applying Itô's formula and using the compensators of $N$ and $L$ under $\QQ(a)\in\mathcal{E}_{m}(Q_1,2+\varepsilon_1)$ we obtain that PIDE.

\begin{lem}\label{lema1} Let $\QQ(a)\in\mathcal{E}_{m}(Q_1,2+\varepsilon_1)$, $\varphi\colon[0,T]\times\RR_+\to\RR_+$  such that $\EE^{\QQ(a)}[|\varphi(s,S_s)|]<\infty$ for all $s\in[0,T]$. Then, there exists a function $Z^{\varphi,a}\colon[0,T]^2\times\mathcal{D}\to\RR_+$ such that 
\begin{align*}
    \EE^{\QQ(a)}[\varphi(s,S_s) | \mathcal{F}_t ]=Z_{s}^{\varphi,a}(t,S_t,v_t,\lambda_t),
\end{align*}
where $s,t\in[0,T]$. Note that $Z^{\varphi,a}_s(t,x,y,z)=\varphi(s,x)$ for $t\in[s,T]$.

Furthermore, fix $s\in[0,T]$, if $Z_s^{\varphi,a}\in\mathcal{C}^{1,2}$, it satisfies the following PIDE
\begin{gather}
    \partial_t Z^{\varphi,a}_s(t,x,y,z)+rx\partial_xZ^{\varphi,a}_s(t,x,y,z)-\kappa^{(a)}(y-\Bar{v}^{(a)})\partial_yZ^{\varphi,a}_s(t,x,y,z)\nonumber\\
    -\beta(z-\lambda_0)\partial_zZ^{\varphi,a}_s(t,x,y,z)+\frac{1}{2}x^2y\partial_{xx}^2Z^{\varphi,a}_s(t,x,y,z)+\frac{1}{2}\sigma^2y\partial_{yy}^2Z^{\varphi,a}_s(t,x,y,z)\nonumber\\
    +\sigma\rho xy\partial_{xy}^2Z^{\varphi,a}_s(t,x,y,z)+z\int_{(0,\infty)}\left[Z^{\varphi,a}_s(t,x,y+\eta u, z+\alpha)-Z^{\varphi,a}_s(t,x,y,z)\right]P_{J_1}(du)=0, \label{eq:pid}
\end{gather}
for $(t,x,y,z)\in[0,s]\times\mathcal{D}$, final condition $Z^{\varphi,a}_s(s,x,y,z)=\varphi(s,x)$ and $P_{J_1}$ is the law of $J_1$. 
\end{lem}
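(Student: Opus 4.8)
The plan is to establish the two assertions separately: first the representation of the conditional expectation as a deterministic function of the Markov state $(t,S_t,v_t,\lambda_t)$, and then, under the extra smoothness hypothesis, the PIDE via Itô's formula.

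\textbf{Step 1: Markov representation.} First I would note that the triple $(S,v,\lambda)$ together with the driving noises forms a time-homogeneous Markov process. The pair $(N,\lambda)$ is Markov by \cite[Remark 1.22]{ETH}; adjoining the diffusion part $(S,v)$ driven by the independent Brownian motion $(B^{\QQ(a)},W^{\QQ(a)})$ and by $L$ (whose jumps are governed by $N$ through the marks $J_i$) preserves the Markov property, since the coefficients in \eqref{stockqa}–\eqref{volqa} depend only on the current values of $S$, $v$, $\lambda$. Hence for $s\ge t$ the quantity $\EE^{\QQ(a)}[\varphi(s,S_s)\mid\mathcal{F}_t]$ depends on $\mathcal{F}_t$ only through $(S_t,v_t,\lambda_t)$, and one writes it as $Z_s^{\varphi,a}(t,S_t,v_t,\lambda_t)$; measurability of this function follows from a monotone-class / regular-conditional-probability argument, and the integrability $\EE^{\QQ(a)}[|\varphi(s,S_s)|]<\infty$ guarantees the conditional expectation is well defined. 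The boundary identity $Z_s^{\varphi,a}(t,x,y,z)=\varphi(s,x)$ for $t\in[s,T]$ is immediate, since then $\varphi(s,S_s)$ is $\mathcal{F}_t$-measurable.

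\textbf{Step 2: From the martingale property to the PIDE.} Fix $s$ and assume $Z_s^{\varphi,a}\in\mathcal{C}^{1,2}$. The process $M_t:=\EE^{\QQ(a)}[\varphi(s,S_s)\mid\mathcal{F}_t]=Z_s^{\varphi,a}(t,S_t,v_t,\lambda_t)$ is a $(\mathcal{F},\QQ(a))$-martingale for $t\in[0,s]$. I would apply the Itô formula for jump-diffusions to $Z_s^{\varphi,a}(t,S_t,v_t,\lambda_t)$: the continuous part produces the terms $\partial_t Z + rx\,\partial_x Z - \kappa^{(a)}(y-\bar v^{(a)})\partial_y Z - \beta(z-\lambda_0)\partial_z Z + \tfrac12 x^2 y\,\partial_{xx}^2 Z + \tfrac12\sigma^2 y\,\partial_{yy}^2 Z + \sigma\rho x y\,\partial_{xy}^2 Z$ (using $d\langle S\rangle_t = S_t^2 v_t\,dt$, $d\langle v\rangle_t=\sigma^2 v_t\,dt$, $d\langle S,v\rangle_t=\sigma\rho S_t v_t\,dt$, and the drift of $\lambda$ from \eqref{dynlambda}), plus stochastic integrals against $B^{\QQ(a)}$ and $W^{\QQ(a)}$ which are local martingales; the jump part, coming from the simultaneous jumps of $v$ and $\lambda$ at the jump times of $N$ (with $\Delta v = \eta J_i$, $\Delta\lambda=\alpha$), gives a sum $\sum_{0<u\le t}[Z_s^{\varphi,a}(u,S_u,v_{u-}+\eta\Delta L_u,\lambda_{u-}+\alpha) - Z_s^{\varphi,a}(u,S_u,v_{u-},\lambda_{u-})]$. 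Compensating this sum using Lemma \ref{mainlemcomp} and Proposition \ref{qacomhawkesmain} — the compensator of the jump part being $\int_0^t\lambda_u\int_{(0,\infty)}[Z_s^{\varphi,a}(u,S_u,v_u+\eta w,\lambda_u+\alpha)-Z_s^{\varphi,a}(u,S_u,v_u,\lambda_u)]P_{J_1}(dw)\,du$ — we split $M$ into a finite-variation (drift) part and a local martingale part. Since $M$ is a martingale (in particular a special semimartingale with a unique Doob–Meyer-type decomposition), the finite-variation part must vanish identically; the integrand of the $dt$-integral being a.s. continuous in $t$, it vanishes for every $(t,S_t,v_t,\lambda_t)$, and since for $t<s$ the state $(S_t,v_t,\lambda_t)$ ranges over (a dense subset of) $\mathcal{D}$, the PIDE \eqref{eq:pid} holds pointwise on $[0,s]\times\mathcal{D}$. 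The final condition $Z_s^{\varphi,a}(s,x,y,z)=\varphi(s,x)$ is the $t=s$ case of the boundary identity already noted.

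\textbf{Main obstacle.} The delicate point is the justification that the jump term is genuinely compensable and that the resulting stochastic integrals are true (not just local) martingales, so that the drift-vanishing argument is legitimate: one must control $\int_0^t\lambda_u\,\EE^{\QQ(a)}\big[|Z_s^{\varphi,a}(u,S_u,v_u+\eta w,\lambda_u+\alpha)-Z_s^{\varphi,a}(u,S_u,v_u,\lambda_u)|\big]\,du$ and the quadratic variations of the Brownian integrals. This is exactly where the change-of-measure bookkeeping matters: the compensator of $N$ under $\QQ(a)$ equals the one under $\PP$ only because $\QQ(a)\in\mathcal{E}_m(Q_1,2+\varepsilon_1)$ (Proposition \ref{qacomhawkesmain}), which in turn rests on the moment estimates of Observation \ref{momentsobs}. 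A clean way to sidestep integrability subtleties is to localize by a sequence of stopping times $\tau_n\uparrow s$ (e.g. exit times of $(S,v,\lambda)$ from compact subsets of $\mathcal{D}$), derive the identity with the bounded stopped integrands, and then let $n\to\infty$; on each compact set the $\mathcal{C}^{1,2}$ regularity makes all coefficients and the jump integrand bounded, so the stopped drift part is identically zero, and passing to the limit yields the PIDE on the interior, hence on all of $[0,s]\times\mathcal{D}$ by continuity.
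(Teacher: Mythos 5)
Your proof follows essentially the same route as the paper: establish the Markov representation of the conditional expectation as a function of $(t,S_t,v_t,\lambda_t)$ (the paper makes this precise by first showing $\mathcal{F}_t=\mathcal{F}_t^S\vee\mathcal{F}_t^v\vee\mathcal{F}_t^\lambda$ and then invoking the Markov property of $(N,\lambda)$), apply Itô's formula, compensate the jump sum using Proposition \ref{qacomhawkesmain}, and force the drift to vanish from the martingale property. The localization scheme you propose to handle integrability is a valid alternative, but the paper dispenses with it by observing that moving the local-martingale terms to the left-hand side exhibits a local martingale equal to a continuous finite-variation process, which must therefore be identically zero.
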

\begin{proof}
Since $(N,\lambda)$ is a Markov process, see \cite[Remark 1.22]{ETH}, $(S,v,\lambda)$ is also a Markov process and we conclude that there exists a function $Z^{\varphi,a}\colon[0,T]^2\times\mathcal{D}\to\RR_+$ such that
\begin{align*}
    \EE^{\QQ(a)}[\varphi(s,S_s) | \mathcal{F}_t ]=Z^{\varphi,a}_s(t,S_t,v_t,\lambda_t),
\end{align*}
where $s,t\in[0,T]$. 

Moreover, fix $s\in[0,T]$, if $Z_s^{\varphi,a}\in\mathcal{C}^{1,2}$ we can apply Itô formula to the process $t\to Z^{\varphi,a}_s(t,S_t,v_t,\lambda_t)$ for $t\in[0,s]$. For convenience, we define $Y_t:=(t,S_t,v_t,\lambda_t)$. Applying Itô formula to $Z^{\varphi,a}_s$ we get 
\begin{align}\label{ito2}
    Z^{\varphi,a}_s(Y_t)= \ & Z^{\varphi,a}_s(Y_0)+\int_0^t\Big[\partial_tZ^{\varphi,a}_s(Y_{u})+rS_u\partial_xZ^{\varphi,a}_s(Y_{u})-\kappa^{(a)}(v_u-\Bar{v}^{(a)})\partial_yZ^{\varphi,a}_s(Y_{u})\notag\\
    &-\beta(\lambda_u-\lambda_0)\partial_zZ^{\varphi,a}_s(Y_{u}) +\frac{1}{2}S_u^2v_u\partial_{xx}^2Z^{\varphi,a}_s(Y_u)+\frac{1}{2}\sigma^2v_u\partial_{yy}^2Z^{\varphi,a}_s(Y_u) \notag\\
    &+\sigma\rho S_uv_u\partial_{xy}^2Z_{s}^{\varphi,a}(Y_u)\Big]du+\sqrt{1-\rho^2}\int_0^tS_{u}\sqrt{v_{u}}\partial_xZ^{\varphi,a}_s(Y_{u-})dB_u^{\QQ(a)}\notag\\
    &+\int_0^t\left[\rho S_{u}\sqrt{v_{u}}\partial_xZ^{\varphi,a}_s(Y_{u-})+\sigma\sqrt{v_{u}}\partial_yZ^{\varphi,a}_s(Y_{u-})\right]dW_u^{\QQ(a)} \notag\\
    &+\sum_{0<u\leq t}\left[Z^{\varphi,a}_s(Y_u)-Z^{\varphi,a}_s(Y_{u-})\right].
\end{align}
Next, we can write 
\begin{align*}
    \sum_{0<u\leq t}\left[Z^{\varphi,a}_s(Y_u)-Z^{\varphi,a}_s(Y_{u-})\right] & = \sum_{0<u\leq t} g^{\varphi,a}_s(u,\Delta L_u,\Delta N_u),
\end{align*}
where 
\begin{align*}
    g^{\varphi,a}_s(u,b_1,b_2):=Z^{\varphi,a}_s(u,S_u,v_{u-}+\eta b_1, \lambda_{u-}+\alpha b_2)-Z^{\varphi,a}_s(u,S_u,v_{u-},\lambda_{u-}).
\end{align*} 
We now define $M_u=(L_u,N_u)$ and for $t\in[0,T]$, $B\in\mathcal{B} (\R^2\setminus\{0,0\})$ 
\begin{align*}
    N^M(t,A)=\#\{0<s\leq t, \Delta M_s\in A\}.
\end{align*}
We add and subtract the compensator of the counting measure $N^M$ to split the expression into a $(\mathcal{F},\QQ(a))$-local martingale plus a predictable process of finite variation. 
By Proposition \ref{qacomhawkesmain}, the compensators of $N$ and $L$ under $\QQ(a)\in\mathcal{E}_{m}(Q_1,2+\varepsilon_1)$ are $\Lambda_t^N=\int_0^t\lambda_udu$ and $\Lambda_t^L=\EE[J_1]\int_0^t\lambda_udu$ respectively. Thus,
\begin{align*}
    \sum_{0<u\leq t}\left[Z^{\varphi,a}_s(Y_u)-Z^{\varphi,a}_s(Y_{u-})\right]  = \ & \int_0^t\int_{(0,\infty)^2}g^{\varphi,a}_s(u,b_1,b_2)N^M(du,db) \\
     = \ & \int_0^t\int_{(0,\infty)^2} g^{\varphi,a}_s(u,b_1,b_2)\left(N^M(du,db)-\lambda_uP_{J_1}(db_1)\delta_1(db_2)du\right) \\ &+\int_0^t\lambda_u\int_{(0,\infty)} g^{\varphi,a}_s(u,b_1,1)P_{J_1}(db_1)du.
\end{align*}
Replacing everything in \eqref{ito2} we finally get 
\begin{align}\label{martingala}
    Z^{\varphi,a}_s(Y_t)= \ & Z^{\varphi,a}_s(Y_0)+\int_0^t\Big[\partial_tZ^{\varphi,a}_s(Y_{u})+rS_u\partial_xZ^{\varphi,a}_s(Y_{u})-\kappa^{(a)}(v_u-\Bar{v}^{(a)})\partial_yZ^{\varphi,a}_s(Y_{u}) \notag\\
    &-\beta(\lambda_u-\lambda_0)\partial_zZ^{\varphi,a}_s(Y_{u})+\frac{1}{2}S_u^2v_u\partial_{xx}^2Z^{\varphi,a}_s(Y_u)+\frac{1}{2}\sigma^2v_u\partial_{yy}^2Z^{\varphi,a}_s(Y_u)\notag \\
    &+\sigma\rho S_uv_u\partial_{xy}^2Z_{s}^{\varphi,a}(Y_u)+\lambda_u\int_{(0,\infty)} g_s^{\varphi,a}(u,b_1,1)P_{J_1}(db_1)\Big]du \notag\\
    &+\sqrt{1-\rho^2}\int_0^tS_{u}\sqrt{v_{u}}\partial_xZ^{\varphi,a}_s(Y_{u-})dB_u^{\QQ(a)}\notag\\
    &+\int_0^t\left[\rho S_{u}\sqrt{v_{u}}\partial_xZ^{\varphi,a}_s(Y_{u-})+\sigma\sqrt{v_{u}}\partial_yZ^{\varphi,a}_s(Y_{u-})\right]dW_u^{\QQ(a)} \notag\\
    &+\int_0^t\int_{(0,\infty)^2} g^{\varphi,a}_s(u,b_1,b_2)\left(N^M(du,db)-\lambda_uP_{J_1}(db_1)\delta_1(db_2)du\right).
\end{align}
Recall that $t\to Z^{\varphi,a}_s(Y_t)$ is a $(\mathcal{F},\QQ(a))$-martingale and note that the last three terms in \eqref{martingala} are $(\mathcal{F},\QQ(a))$-local martingales. Moving this three terms to the left hand side we see that a local martingale is equal to a continuous process of finite variation. This implies that the integral of the drift is 0 on every interval $[0,t]\subset[0,T]$ and that the sum of the last three terms in \eqref{martingala} is a $(\mathcal{F},\QQ(a))$-martingale. As a consequence, the drift is constant equal to $0$ and $Z^{\varphi,a}_s$ satisfies the following PIDE
\begin{gather*}
    \partial_t Z^{\varphi,a}_s(t,x,y,z)+rx\partial_xZ^{\varphi,a}_s(t,x,y,z)-\kappa^{(a)}(y-\Bar{v}^{(a)})\partial_yZ^{\varphi,a}_s(t,x,y,z)\notag\\
    -\beta(z-\lambda_0)\partial_zZ^{\varphi,a}_s(t,x,y,z)+\frac{1}{2}x^2y\partial_{xx}^2Z^{\varphi,a}_s(t,x,y,z)+\frac{1}{2}\sigma^2y\partial_{yy}^2Z^{\varphi,a}_s(t,x,y,z)\notag\\
    +\sigma\rho xy\partial_{xy}^2Z^{\varphi,a}_s(t,x,y,z)+z\int_{(0,\infty)}\left[Z^{\varphi,a}_s(t,x,y+\eta u, z+\alpha)-Z^{\varphi,a}_s(t,x,y,z)\right]P_{J_1}(du)=0,
\end{gather*}
for $(t,x,y,z)\in[0,s]\times\mathcal{D}$ and final condition $Z^{\varphi,a}_s(s,x,y,z)=\varphi(s,x)$. Note that $\mathcal{D}$ is the support of the process $(S_t, v_t, \lambda_t)$. 
\end{proof}

\begin{definition}\label{loperator} Let $\QQ(a)\in\mathcal{E}_{m}(Q_1,2+\varepsilon_1)$, $f\in\mathcal{C}^{1,2}$ satisfying
\begin{align}\label{intJ1}
    \int_{(0,\infty)}|f(t,x,y+\eta u, z+\alpha)|P_{J_1}(du)<\infty.
\end{align}
We define the following partial integro-differential operator $\mathcal{L}^a$ by
\begin{align*}
    \mathcal{L}^af(t,x,y,z):= \ & rx\partial_xf(t,x,y,z)-\kappa^{(a)}(y-\Bar{v}^{(a)})\partial_yf(t,x,y,z)-\beta(z-\lambda_0)\partial_zf(t,x,y,z) \\
    &+\frac{1}{2}x^2y\partial_{xx}^2f(t,x,y,z)+\frac{1}{2}\sigma^2y\partial_{yy}^2f(t,x,y,z)+\sigma\rho xy\partial_{xy}^2f(t,x,y,z) \\
    &+z\int_{(0,\infty)}\left[f(t,x,y+\eta u, z+\alpha)-f(t,x,y,z)\right]P_{J_1}(du),
\end{align*}
where $P_{J_1}$ is the law of $J_1$. 
\end{definition}
\begin{obs}
Since all positive moments of $J_1$ exists, condition \eqref{intJ1} automatically holds if $f$ has polynomial growth in the third variable, that is, if $f(t,x,y,z)\leq \sum_{i=0}^nC_i(t,x,z)y^i$ where the constants $C_i(t,x,z)$ do not depend on $y$. Additionally, if $|f(t,x,y+\eta u,z+\alpha)|\leq C(t,x,y,z+\alpha)e^{cu}$ with $c<\epsilon_J$, then condition \eqref{intJ1} holds. 
\end{obs}
\begin{obs}\label{obs} Let $\QQ(a)\in\mathcal{E}_{m}(Q_1,2+\varepsilon_1)$ and $f\in\mathcal{C}^{1,2}$. Note that we have proved in Lemma \ref{lema1} that the drift of the Itô differential of $f(t,S_t,v_t,\lambda_t)$ is $\partial_tf(t,S_t,v_t,\lambda_t)+\mathcal{L}^af(t,S_t,v_t,\lambda_t)$. Furthermore, the PIDE in \eqref{eq:pid} is just 
\begin{align}\label{pidcompact}
    \partial_tZ^{\varphi,a}_{s}(t,x,y,z)+\mathcal{L}^aZ^{\varphi,a}_s(t,x,y,z)=0,
\end{align}
for $(t,x,y,z)\in[0,s]\times\mathcal{D}$ and final condition $Z^{\varphi,a}_s(s,x,y,z)=\varphi(s,x)$
\end{obs}

From the PIDE obtained in Lemma \ref{lema1} we derive the PIDE that satisfies $e^{-r(s-t)}\EE^{\QQ(a)}[\varphi(s,S_s)| \mathcal{F}_t]$, which is the price at time $t$ of the payoff  $\varphi(s,S_s)$ where $0\leq t\leq s \leq T$.

\begin{lem}\label{lem2} Let $\QQ(a)\in\mathcal{E}_{m}(Q_1,2+\varepsilon_1)$, $\varphi\colon[0,T]\times\RR_+\to\RR_+$ such that $\EE^{\QQ(a)}[|\varphi(s,S_s)|]<\infty$ for all $s\in[0,T]$. Then, there exists a function $U^{\varphi,a}\colon[0,T]^2\times\mathcal{D}\to\RR_+$ such that
\begin{align}\label{relation}
    e^{-r(s-t)}\EE^{\QQ(a)}[\varphi(s,S_s) | \mathcal{F}_t ]=U_s^{\varphi,a}(t,S_t,v_t,\lambda_t)
\end{align}
where $s,t\in[0,T]$. Note that $U_s^{\varphi,a}(t,x,y,z)=e^{-r(s-t)}\varphi(s,x)$ for $t\in[s,T]$. 
\newline
Furthermore, fix  $s\in[0,T]$, if $U_s^{\varphi,a}\in\mathcal{C}^{1,2}$, it satisfies the following PIDE
\begin{align}\label{pid2}
    \partial_t U_s^{\varphi,a}(t,x,y,z)+\mathcal{L}^aU_s^{\varphi,a}(t,x,y,z)=rU_s^{\varphi,a}(t,x,y,z),
\end{align}
where $\mathcal{L}^a$ is defined in Definition \ref{loperator}, $(t,x,y,z)\in[0,s]\times\mathcal{D}$ and final condition $U_s^{\varphi,a}(s,x,y,z)=\varphi(s,x)$.
\end{lem}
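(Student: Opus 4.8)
The plan is to reduce Lemma \ref{lem2} to Lemma \ref{lema1} by the obvious discounting relation $U_s^{\varphi,a}(t,x,y,z) = e^{-r(s-t)} Z_s^{\varphi,a}(t,x,y,z)$, and then differentiate. First I would establish the existence of $U^{\varphi,a}$ together with identity \eqref{relation}: this is immediate from Lemma \ref{lema1}, since $e^{-r(s-t)}\EE^{\QQ(a)}[\varphi(s,S_s)\mid\mathcal{F}_t] = e^{-r(s-t)} Z_s^{\varphi,a}(t,S_t,v_t,\lambda_t)$, so we may simply define $U_s^{\varphi,a}(t,x,y,z) := e^{-r(s-t)} Z_s^{\varphi,a}(t,x,y,z)$, which maps into $\RR_+$ because $\varphi\geq 0$ and hence $Z_s^{\varphi,a}\geq 0$. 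The boundary behaviour $U_s^{\varphi,a}(t,x,y,z)=e^{-r(s-t)}\varphi(s,x)$ for $t\in[s,T]$ follows from the corresponding property of $Z_s^{\varphi,a}$ stated in Lemma \ref{lema1}, and likewise $U_s^{\varphi,a}(s,x,y,z)=\varphi(s,x)$.

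Next, fixing $s\in[0,T]$ and assuming $U_s^{\varphi,a}\in\mathcal{C}^{1,2}$, I would note that this is equivalent to $Z_s^{\varphi,a}\in\mathcal{C}^{1,2}$ (the factor $e^{-r(s-t)}$ is smooth and nonvanishing and does not depend on $x,y,z$), so the PIDE \eqref{pidcompact} for $Z_s^{\varphi,a}$ from Lemma \ref{lema1} is available. Then I would substitute $Z_s^{\varphi,a}(t,x,y,z) = e^{r(s-t)} U_s^{\varphi,a}(t,x,y,z)$ into \eqref{pidcompact}. The key observation is that $\mathcal{L}^a$ acts only on the variables $x,y,z$ and involves no multiplication by a function of $t$ alone, so $\mathcal{L}^a\big(e^{r(s-t)}U_s^{\varphi,a}\big)(t,x,y,z) = e^{r(s-t)}\,\mathcal{L}^a U_s^{\varphi,a}(t,x,y,z)$; in particular the integral term transforms correctly because the factor $e^{r(s-t)}$ pulls out of the $P_{J_1}$-integral. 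For the time derivative, $\partial_t\big(e^{r(s-t)}U_s^{\varphi,a}\big) = e^{r(s-t)}\big(\partial_t U_s^{\varphi,a} - r\,U_s^{\varphi,a}\big)$ by the product rule. Dividing the resulting identity by the nonzero factor $e^{r(s-t)}$ yields exactly \eqref{pid2}.

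I should also check the side condition \eqref{intJ1} needed for $\mathcal{L}^a U_s^{\varphi,a}$ to be well defined: since $U_s^{\varphi,a}(t,x,y+\eta u,z+\alpha) = e^{-r(s-t)} Z_s^{\varphi,a}(t,x,y+\eta u,z+\alpha)$ and the discount factor does not depend on $u$, integrability of $u\mapsto Z_s^{\varphi,a}(t,x,y+\eta u,z+\alpha)$ against $P_{J_1}$ — which holds because the PIDE \eqref{pidcompact} for $Z_s^{\varphi,a}$ is assumed to make sense — transfers immediately to $U_s^{\varphi,a}$. There is essentially no obstacle here: the whole argument is a one-line change of variables in the PIDE, and the only points requiring a word of care are the commutation of $\mathcal{L}^a$ with multiplication by the purely time-dependent factor $e^{r(s-t)}$ and the bookkeeping of the boundary/final conditions. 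An alternative, equally short route would bypass the discounting identity and instead apply Itô's formula directly to $t\mapsto e^{-r(s-t)}Z_s^{\varphi,a}(t,S_t,v_t,\lambda_t) = U_s^{\varphi,a}(t,S_t,v_t,\lambda_t)$, using Observation \ref{obs} together with the martingale property of $e^{-rt}\EE^{\QQ(a)}[\varphi(s,S_s)\mid\mathcal{F}_t]$ under $\QQ(a)$; but the algebraic substitution is cleaner and I would present that.
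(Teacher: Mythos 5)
Your proof is correct and takes essentially the same route as the paper's: define $U_s^{\varphi,a}(t,x,y,z) := e^{-r(s-t)} Z_s^{\varphi,a}(t,x,y,z)$, observe that $\mathcal{C}^{1,2}$-regularity transfers, and substitute into the PIDE for $Z_s^{\varphi,a}$ from Lemma \ref{lema1}, using that $\mathcal{L}^a$ commutes with multiplication by the purely time-dependent discount factor while the product rule produces the extra $rU_s^{\varphi,a}$ term. The remarks on \eqref{intJ1} and the alternative Itô route are sound but not needed; the core computation matches the paper's proof of Lemma \ref{lem2A}.
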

\begin{proof}
    See Lemma \ref{lem2A} in the Appendix. 
\end{proof}

We now have all the preliminary results needed to obtain Thiele's PIDE for unit-linked policies under the Heston-Hawkes stochastic volatility model. First, we introduce the insurance model. Let $\mathcal{X}=\{\mathcal{X}_t,t\in[0,T]\}$ be a regular Markov chain with finite state space $\mathcal{J}$ that describes the insured's state. We write $p_{ij}(s,t)$, $i,j\in\mathcal{J}$, $0\leq s\leq t\leq T$ for the transition probabilities and $\mu_{ij}(t),\mu_i(t)$ for the transition rates, $i,j\in\mathcal{J}$, $t\in[0,T]$. Let $\QQ(a)\in\mathcal{E}_{m}(Q_1,2+\varepsilon_1)$ and $f_j,g_j,h_{jk}\colon[0,T]\times\RR_+\to\RR_+$ be policy functions where $j,k\in\mathcal{J}$, $j\neq k$,  satisfying $\EE^{\QQ(a)}[f_j(t,S_t)]<\infty$, $\EE^{\QQ(a)}[g_j(t,S_t)]<\infty$, $\EE^{\QQ(a)}[h_{jk}(t,S_t)]<\infty$ for all $t\in[0,T]$, 
\begin{align*}
    \int_0^Te^{-rs}p_{ij}(t,s)\EE^{\QQ(a)}[g_j(s,S_s) | \mathcal{F}_t ]ds <\infty,
\end{align*}
and
\begin{align*}
    \int_0^Te^{-rs}p_{ij}(t,s)\mu_{jk}(s)\EE^{\QQ(a)}[h_{jk}(s,S_s) | \mathcal{F}_t ]ds<\infty,
\end{align*}
a.s. for all $j,k\in\mathcal{J}$, $j\neq k$. The mathematical reserve $V_{i,\mathcal{F}}^{+,a}(t)$ of a contract with policy functions $f_j$, $g_j$ and $h_{jk}$ given that the insured is in state $i\in\mathcal{J}$ at time $t$ and the information $\mathcal{F}_t$, is given by
\begin{align}\label{reserve}
    V_{i,\mathcal{F}}^{+,a}(t)= \ & e^{rt}\Bigg[\sum_{j\in \mathcal{J}}e^{-rT}p_{ij}(t,T)\EE^{\QQ(a)}[f_j(T,S_T) | \mathcal{F}_t ] \notag\\ 
    &+\sum_{j\in\mathcal{J}}\int_t^Te^{-rs}p_{ij}(t,s)\EE^{\QQ(a)}[g_j(s,S_s) | \mathcal{F}_t ]ds \notag\\
    &+\sum_{\substack{j,k\in \mathcal{J}\\ j\neq k}}\int_t^Te^{-rs}p_{ij}(t,s)\mu_{jk}(s)\EE^{\QQ(a)}[h_{jk}(s,S_s) | \mathcal{F}_t ] ds\Bigg].
\end{align}
See \cite{Ragnar} for a reference on mathematical reserves in life insurance. In the following result we derive Thiele's PIDE. 
\begin{prop}\label{thiele}(Thiele's PIDE) Let $\QQ(a)\in\mathcal{E}_{m}(Q_1,2+\varepsilon_1)$ and $i\in\mathcal{J}$, then, there exists a function $V_i^a\colon[0,T]\times\mathcal{D}\to\RR_+$ such that
    \begin{align*}
        V_{i,\mathcal{F}}^{+,a}(t)=V_i^a(t,S_t,v_t,\lambda_t).
    \end{align*}
    Furthermore, assume that the functions $U_T^{f_j,a},U^{g_j,a},U^{h_{jk},a}$ given by Lemma \ref{lem2} satisfy the following $U_T^{f_j,a}\in\mathcal{C}^{1,2}$ and  $U^{g_j,a},U^{h_{jk},a}\in\mathcal{C}^{0,1,2}$ for all $j,k\in\mathcal{J}$, $j\neq k$. Then, $V_i^a$ satisfies the following PIDE
    \begin{align*}
        \partial_tV_i^a(t,x,y,z) = \ &  rV_i^a(t,x,y,z)-g_i(t,x)-\sum_{\substack{k\in \mathcal{J}\\ k\neq i}}\mu_{ik}(t)\left(h_{ik}(t,x)+V_k^a(t,x,y,z)-V_i^a(t,x,y,z)\right) \\
        & -\mathcal{L}^aV_i^a(t,x,y,z),
    \end{align*}
    where $\mathcal{L}^a$ is the operator defined in Definition \ref{loperator}, $(t,x,y,z)\in[0,T]\times\mathcal{D}$ and final condition $V_i^a(T,x,y,z)=f_i(T,x)$.
    \end{prop}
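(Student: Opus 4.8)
The plan is to reduce the claim to the PIDE \eqref{pid2} of Lemma \ref{lem2} combined with Kolmogorov's backward equations for the transition probabilities $p_{ij}$. First, Lemma \ref{lem2} gives $e^{-r(s-t)}\EE^{\QQ(a)}[\varphi(s,S_s)\mid\mathcal{F}_t]=U_s^{\varphi,a}(t,S_t,v_t,\lambda_t)$, hence $e^{rt}e^{-rs}\EE^{\QQ(a)}[\varphi(s,S_s)\mid\mathcal{F}_t]=U_s^{\varphi,a}(t,S_t,v_t,\lambda_t)$; inserting this into \eqref{reserve} with $\varphi$ ranging over $f_j$ (at $s=T$), $g_j$ and $h_{jk}$ yields $V_{i,\mathcal{F}}^{+,a}(t)=V_i^a(t,S_t,v_t,\lambda_t)$, where
\begin{align*}
V_i^a(t,x,y,z):=\sum_{j\in\mathcal{J}}p_{ij}(t,T)U_T^{f_j,a}(t,x,y,z)&+\sum_{j\in\mathcal{J}}\int_t^Tp_{ij}(t,s)U_s^{g_j,a}(t,x,y,z)\,ds\\
&+\sum_{\substack{j,k\in\mathcal{J}\\ j\neq k}}\int_t^Tp_{ij}(t,s)\mu_{jk}(s)U_s^{h_{jk},a}(t,x,y,z)\,ds .
\end{align*}
Since $\mathcal{J}$ is finite, $U_T^{f_j,a}\in\mathcal{C}^{1,2}$ and $U^{g_j,a},U^{h_{jk},a}\in\mathcal{C}^{0,1,2}$, and the policy functions satisfy the stated integrability against $p_{ij}$ and $\mu_{jk}$, one checks by dominated convergence, differentiation under the integral sign, and Fubini for the integral part of $\mathcal{L}^a$ that $V_i^a\in\mathcal{C}^{1,2}$, that $\partial_t$ and $\mathcal{L}^a$ act on $V_i^a$ termwise, and that \eqref{intJ1} holds for each $U^{\bullet,a}$ and hence for $V_i^a$, so $\mathcal{L}^aV_i^a$ is well defined.

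Next, I differentiate $V_i^a$ in $t$: the first sum by the product rule, the two integral terms by the Leibniz rule. The lower-limit contributions at $s=t$ use $p_{ij}(t,t)=\delta_{ij}$ together with $U_t^{g_i,a}(t,x,y,z)=g_i(t,x)$ and $U_t^{h_{ik},a}(t,x,y,z)=h_{ik}(t,x)$ (the $t\in[s,T]$ branch of Lemma \ref{lem2}), and they produce exactly $-g_i(t,x)$ and $-\sum_{k\neq i}\mu_{ik}(t)h_{ik}(t,x)$. What remains splits into the terms carrying $\partial_tU_s^{\bullet,a}$ and the terms carrying $\partial_tp_{ij}(t,s)$.

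In the $\partial_tU_s^{\bullet,a}$ terms I substitute $\partial_tU_s^{\varphi,a}=rU_s^{\varphi,a}-\mathcal{L}^aU_s^{\varphi,a}$ from \eqref{pid2} (legitimate since the relevant arguments satisfy $t\le s\le T$), and summing over the three blocks reproduces $rV_i^a(t,x,y,z)-\mathcal{L}^aV_i^a(t,x,y,z)$ by the representation above. In the $\partial_tp_{ij}(t,s)$ terms I substitute the backward equation $\partial_tp_{ij}(t,s)=\sum_{l\neq i}\mu_{il}(t)\bigl(p_{ij}(t,s)-p_{lj}(t,s)\bigr)$; applying $\sum_j p_{ij}(t,\cdot)$ (resp.\ $\sum_j p_{lj}(t,\cdot)$) to each block reconstructs the corresponding block of $V_i^a$ (resp.\ $V_l^a$), so these terms collapse to $\sum_{l\neq i}\mu_{il}(t)\bigl(V_i^a(t,x,y,z)-V_l^a(t,x,y,z)\bigr)$. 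Combining this with the Leibniz boundary terms and rearranging (moving $\mathcal{L}^aV_i^a$ and grouping the $\mu$-terms) gives the asserted PIDE, while the final condition follows from $V_i^a(T,x,y,z)=\sum_jp_{ij}(T,T)U_T^{f_j,a}(T,x,y,z)=U_T^{f_i,a}(T,x,y,z)=f_i(T,x)$.

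The substantive point is the termwise differentiation/operator interchange in the first step: one must dominate the integrands $p_{ij}(t,s)U_s^{\bullet,a}(t,x,y,z)$ and their $t$- and $(x,y,z)$-derivatives locally uniformly in $s\in[t,T]$, and here the continuity-in-$s$ part of $\mathcal{C}^{0,1,2}$, the $\mathcal{C}^{1,2}$ regularity, the integrability of $g_j,h_{jk}$ against $p_{ij}$ and $\mu_{jk}$, and (through Lemma \ref{lem2}) the moment estimates of Section 3 are what is used. The remaining work — the Leibniz boundary terms and, above all, the index bookkeeping when the backward equation is applied to the double sum in the $h_{jk}$-block (the summation index $k$ there must be kept distinct from the $l$ in the backward equation) — is routine.
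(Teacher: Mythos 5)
Your proof is correct and, at the level of substance, matches the paper's: both rest on Lemma \ref{lem2}'s PIDE $\partial_tU_s+\mathcal{L}^aU_s=rU_s$, Kolmogorov's backward equation for $\partial_tp_{ij}$, and the Leibniz boundary terms at $s=t$ that produce $-g_i(t,x)-\sum_{k\neq i}\mu_{ik}(t)h_{ik}(t,x)$. The organizational difference is minor: the paper first bundles the running and transition payments into $\theta_j=g_j+\sum_{k\neq j}\mu_{jk}h_{jk}$, introduces auxiliary functions $G_{i,T}^a$ and $F_{i,s}^a=\sum_jp_{ij}(t,s)U_s^{\theta_j,a}$, and derives the intermediate identity $\partial_tF_{i,s}^a+\mathcal{L}^aF_{i,s}^a=\sum_{k\neq i}\mu_{ik}(F_{i,s}^a-F_{k,s}^a)+rF_{i,s}^a$ by computing the drift of $t\mapsto F_{i,s}^a(t,S_t,v_t,\lambda_t)$ two ways via Itô (Observation \ref{obs}), then integrates in $s$ and shows the $G$-terms cancel; you instead differentiate $V_i^a$ directly in $t$ and substitute \eqref{pid2} and Kolmogorov term by term, which is the same computation stripped of the Itô-drift wrapper and arguably a bit more transparent since it works purely at the PDE level on $[0,s]\times\mathcal{D}$. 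Both routes rely on the same interchange-of-operations justifications (differentiation under the integral sign, Fubini/Tonelli for the jump integral in $\mathcal{L}^a$), which you flag correctly.
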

\begin{proof}
Applying Lemma \ref{lem2} there exist functions $U^{f_j,a},U^{g_j,a},U^{h_{jk},a}\colon[0,T]^2\times\mathcal{D}\to\RR_+$ for all $j,k\in\mathcal{J}$, $j\neq k$ such that
the mathematical reserve $V_{i,\mathcal{F}}^{+,a}(t)$ in \eqref{reserve} can be rewritten as
\begin{align}\label{res2}
     V_{i,\mathcal{F}}^{+,a}(t)= \ & \sum_{j\in \mathcal{J}}p_{ij}(t,T)U_T^{f_j,a}(t,S_t,v_t,\lambda_t) +\sum_{j\in\mathcal{J}}\int_t^Tp_{ij}(t,s)U_s^{g_j,a}(t,S_t,v_t,\lambda_t)ds \notag\\
    &+\sum_{\substack{j,k\in \mathcal{J}\\ j\neq k}}\int_t^Tp_{ij}(t,s)\mu_{jk}(s)U_s^{h_{jk},a}(t,S_t,v_t,\lambda_t) ds.
\end{align}
Defining $V_i^{(a)}\colon[0,T]\times\mathcal{D}\to\RR_+$ by
\begin{align*}
     V_i^{(a)}(t,x,y,z):= \ & \sum_{j\in \mathcal{J}}p_{ij}(t,T)U_T^{f_j,a}(t,x,y,z) +\sum_{j\in\mathcal{J}}\int_t^Tp_{ij}(t,s)U_s^{g_j,a}(t,x,y,z)ds \notag\\
    &+\sum_{\substack{j,k\in \mathcal{J}\\ j\neq k}}\int_t^Tp_{ij}(t,s)\mu_{jk}(s)U_s^{h_{jk},a}(t,x,y,z) ds.
\end{align*}
we see that $V_{i,\mathcal{F}}^{+,a}(t)=V_i^a(t,S_t,v_t,\lambda_t)$ and the first part is proved. 

Assume now that $U_T^{f_j,a}\in\mathcal{C}^{1,2}$ and  $U^{g_j,a},U^{h_{jk},a}\in\mathcal{C}^{0,1,2}$ for all $j,k\in\mathcal{J}$, $j\neq k$. Since $\mathcal{X}$ is regular, we see that $\partial_tV_i^a$ and $\mathcal{L}^aV_i^a$ are well defined by applying several times differentiation under the integral sign. For the sake of clarity, we define
    \begin{align}\label{defV}
        V_{i,\mathcal{F}}^{+,a}(t)=V_i^a(t,S_t,v_t,\lambda_t)=G_{i,T}^{a}(t,S_t,v_t,\lambda_t)+\int_t^TF_{i,s}^{a}(t,S_t,v_t,\lambda_t)ds,
    \end{align}
    where
    \begin{align}
        G_{i,T}^{a}(t,x,y,z):&=\sum_{j\in \mathcal{J}}p_{ij}(t,T)U_T^{f_j,a}(t,x,y,z), \notag\\
        F_{i,s}^{a}(t,x,y,z):&=\sum_{j\in \mathcal{J}}p_{ij}(t,s)U_s^{\theta_j,a}(t,x,y,z),\label{defF} \\
        \theta_j(s,x):&=g_j(s,x)+\sum_{\substack{k\in \mathcal{J}\\ k\neq j}}\mu_{jk}(s)h_{jk}(s,x). \notag
    \end{align}
Since $\mathcal{X}$ is regular, note that $U^{\theta_j,a},F_i^{a}\in\mathcal{C}^{0,1,2}$. Fix $s\in[0,T]$, we can apply Itô's formula to the processes $t\to U_s^{\theta_j,a}(t,S_t,v_t,\lambda_t)$ and $t\to F_{i,s}^{a}(t,S_t,v_t,\lambda_t)$. Now, we compute the drift of the Itô differential of $t\to F_{i,s}^{a}(t,S_t,v_t,\lambda_t)$ in two ways. First, by direct definition. By Observation \ref{obs}, the drift of the Itô differential of $t\to F_{i,s}^{a}(t,S_t,v_t,\lambda_t)$ is
\begin{align}\label{drift1}
    \partial_tF_{i,s}^{a}(t,S_t,v_t,\lambda_t)+\mathcal{L}^aF_{i,s}^{a}(t,S_t,v_t,\lambda_t).
\end{align}
On the other hand
\begin{align*}
    dF_{i,s}^{a}(t,S_t,v_t,\lambda_t)=\sum_{j\in\mathcal{J}}\partial_tp_{ij}(t,s)U_s^{\theta_j,a}(t,S_t,v_t,\lambda_t)dt+\sum_{j\in\mathcal{J}}p_{ij}(t,s)dU_s^{\theta_j,a}(t,S_t,v_t,\lambda_t).
\end{align*}
Using Kolmogorov's backward equation in the first term and then the definition of $F_{i,s}^{a}(t,x,y,z)$ given in \eqref{defF} we get
\begin{align*}
     dF_{i,s}^{a}(t,S_t,v_t,\lambda_t) = \ & \sum_{j\in\mathcal{J}}\sum_{\substack{k\in \mathcal{J}\\ k\neq i}}\mu_{ik}(t)\left(p_{ij}(t,s)-p_{kj}(t,s)\right)U_s^{\theta_j,a}(t,S_t,v_t,\lambda_t)dt \\ 
     & +\sum_{j\in\mathcal{J}}p_{ij}(t,s)dU_s^{\theta_j,a}(t,S_t,v_t,\lambda_t) \\
     = \ & \sum_{\substack{k\in \mathcal{J}\\ k\neq i}}\mu_{ik}(t)\left(F_{i,s}^{a}(t,S_t,v_t,\lambda_t)-F_{k,s}^{a}(t,S_t,v_t,\lambda_t)\right)dt \\ 
     & +\sum_{j\in\mathcal{J}}p_{ij}(t,s)dU_s^{\theta_j,a}(t,S_t,v_t,\lambda_t).
\end{align*}
We know by Observation \ref{obs} that the drift part of the Itô differential of $t\to U_s^{\theta_j,a}(t,S_t,v_t,\lambda_t)$ is $\partial_tU_s^{\theta_j,a}(t,S_t,v_t,\lambda_t)+\mathcal{L}^aU_s^{\theta_j,a}(t,S_t,v_t,\lambda_t)$. Moreover, $U_s^{\theta_j,a}(t,S_t,v_t,\lambda_t)$ satisfies the PIDE in \eqref{pid2}. Thus, the drift part of the Itô differential of $t\to F_{i,s}^{a}(t,S_t,v_t,\lambda_t)$ can also be written as 
\begin{align}\label{drift2}
     \ \ & \sum_{\substack{k\in \mathcal{J}\\ k\neq i}}\mu_{ik}(t)\left(F_{i,s}^{a}(t,S_t,v_t,\lambda_t)-F_{k,s}^{a}(t,S_t,v_t,\lambda_t)\right) \notag\\ 
     &+\sum_{j\in\mathcal{J}}p_{ij}(t,s)\left(\partial_tU_s^{\theta_j,a}(t,S_t,v_t,\lambda_t)+\mathcal{L}^aU_s^{\theta_j,a}(t,S_t,v_t,\lambda_t)\right) \notag\\ 
     = \ & \sum_{\substack{k\in \mathcal{J}\\ k\neq i}}\mu_{ik}(t)\left(F_{i,s}^{a}(t,S_t,v_t,\lambda_t)-F_{k,s}^{a}(t,S_t,v_t,\lambda_t)\right) +\sum_{j\in\mathcal{J}}p_{ij}(t,s)rU_s^{\theta_j,a}(t,S_t,v_t,\lambda_t) \notag\\
     = \ & \sum_{\substack{k\in \mathcal{J}\\ k\neq i}}\mu_{ik}(t)\left(F_{i,s}^{a}(t,S_t,v_t,\lambda_t)-F_{k,s}^{a}(t,S_t,v_t,\lambda_t)\right) +rF_{i,s}^{a}(t,S_t,v_t,\lambda_t).
\end{align}
In the last step we have used again the definition of $F_{i,s}^{a}$ in \eqref{defF}. Equating the two equivalent expressions of the drift of the Itô differential in \eqref{drift1} and \eqref{drift2} we get
\begin{gather*}
    \partial_tF_{i,s}^{a}(t,S_t,v_t,\lambda_t)+\mathcal{L}^aF_{i,s}^{a}(t,S_t,v_t,\lambda_t) \\
    = \sum_{\substack{k\in \mathcal{J}\\ k\neq i}}\mu_{ik}(t)\left(F_{i,s}^{a}(t,S_t,v_t,\lambda_t)-F_{k,s}^{a}(t,S_t,v_t,\lambda_t)\right) +rF_{i,s}^{a}(t,S_t,v_t,\lambda_t).
\end{gather*}
We deduce that 
\begin{gather}
    \partial_tF_{i,s}^{a}(t,x,y,z)+\mathcal{L}^aF_{i,s}^{a}(t,x,y,z) \nonumber\\
    = \sum_{\substack{k\in \mathcal{J}\\ k\neq i}}\mu_{ik}(t)\left(F_{i,s}^{a}(t,x,y,z)-F_{k,s}^{a}(t,x,y,z)\right) +rF_{i,s}^{a}(t,x,y,z), \label{eq1}
\end{gather}
for $(t,x,y,z)\in[0,s]\times\mathcal{D}$. Since $F_{i}^{a}\in\mathcal{C}^{0,1,2}$ we can apply differentiation under the integral sign to get 
\begin{align}\label{derV}
    \partial_t\left(\int_t^TF_{i,s}^{a}(t,x,y,z)ds\right)=\int_t^T\partial_tF_{i,s}^{a}(t,x,y,z)ds-F_{i,t}^{a}(t,x,y,z).
\end{align}
Taking the derivative with respect to $t$ in \eqref{defV} and using \eqref{derV} we get
\begin{align*}
    \partial_tV_i^a(t,x,y,z)=\partial_tG_{i,T}^{a}(t,x,y,z) + \int_t^T\partial_tF_{i,s}^{a}(t,x,y,z)ds-F_{i,t}^{a}(t,x,y,z).
\end{align*}
Writing the explicit expression of $F_{i,t}^{a}(t,x,y,z)$ we obtain
\begin{align}\label{expr1}
    \int_t^T\partial_tF_{i,s}^{a}(t,x,y,z)ds=\partial_tV_i^a(t,x,y,z)-\partial_tG_{i,T}^{a}(t,x,y,z)+g_i(t,x)+\sum_{\substack{k\in \mathcal{J}\\ k\neq i}} \mu_{ik}(t)h_{ik}(t,x).
\end{align}
We now integrate \eqref{eq1} with respect to $s$ on the region $[t,T]$ to obtain
\begin{gather*}
    \int_t^T\partial_t^TF_{i,s}^{a}(t,x,y,z)ds+\int_t^T\mathcal{L}^aF_{i,s}^{a}(t,x,y,z)ds  \\
    = \sum_{\substack{k\in \mathcal{J}\\ k\neq i}}\mu_{ik}(t)\left(\int_t^TF_{i,s}^{a}(t,x,y,z)ds-\int_t^TF_{k,s}^{a}(t,x,y,z)\right)+r\int_t^TF_{i,s}^{a}(t,x,y,z)ds.
\end{gather*}
Since $F_{i,s}^{a}\in\mathcal{C}^{1,2}$ and it is positive we can apply differentiation under the integral sign several times and Tonelli's theorem to conclude that $\int_t^T\mathcal{L}^aF_{i,s}^{a}(t,x,y,z)ds=\mathcal{L}^a\int_t^TF_{i,s}^{a}(t,x,y,z)ds$. Now, we write the expression of $\int_t^T\partial_t^TF_{i,s}^{a}(t,x,y,z)$ in \eqref{expr1}, we use that $\int_t^T\mathcal{L}^aF_{i,s}^{a}(t,x,y,z)ds=\mathcal{L}^a\int_t^TF_{i,s}^{a}(t,x,y,z)ds$ and that $\int_t^TF_{i,s}^{a}(t,x,y,z)ds=V_i^a(t,x,y,z)-G_{i,T}^{a}(t,x,y,z)$ to get
\begin{gather}
 \partial_t\left(V_i^a(t,x,y,z)-G_{i,T}^{a}(t,x,y,z)\right)+g_i(t,x)+\sum_{\substack{k\in \mathcal{J}\\ k\neq i}} \mu_{ik}(t)h_{ik}(t,x)\nonumber \\ 
 +\mathcal{L}^a\left(V_i^a(t,x,y,z)-G_{i,T}^{a}(t,x,y,z)\right)  \nonumber \\
    = \ \sum_{\substack{k\in \mathcal{J}\\ k\neq i}}\mu_{ik}(t)\left(V_i^a(t,x,y,z)- V_k^a(t,x,y,z)+G_{k,T}^a(t,x,y,z)-G_{i,T}^{a}(t,x,y,z)\right) \nonumber \\
    +r\left(V_i^a(t,x,y,z)-G_{i,T}^{a}(t,x,y,z)\right). \label{eq:prethiele}
\end{gather}
Now we prove that the terms involving $G_T^{a}$ will cancel each other. Indeed, observe that using Kolmogorov's backward equation we have
\begin{align*}
    \partial_tG_{i,T}^{a}(t,x,y,z)  = \ & \sum_{j\in\mathcal{J}}\partial_tp_{ij}(t,T)U_T^{f_j,a}(t,x,y,z)+\sum_{j\in\mathcal{J}}p_{ij}(t,T)\partial_tU_T^{f_j,a}(t,x,y,z) \\
     = \ & \sum_{j\in\mathcal{J}} \sum_{\substack{k\in \mathcal{J}\\ k\neq i}}\mu_{ik}(t)\left(p_{ij}(t,s)-p_{kj}(t,s)\right)U_T^{f_j,a}(t,x,y,z) \\
    &+\sum_{j\in\mathcal{J}}p_{ij}(t,T)\partial_tU_T^{f_j,a}(t,x,y,z) \\
     = \ &\sum_{\substack{k\in \mathcal{J}\\ k\neq i}}\mu_{ik}(t)\left(G_{i,T}^{a}(t,x,y,z)-G_{k,T}^a(t,x,y,z)\right)+\sum_{j\in\mathcal{J}}p_{ij}(t,T)\partial_tU_T^{f_j,a}(t,x,y,z).
\end{align*}
Moreover, using that $\mathcal{L}^aG_{i,T}^{a}(t,x,y,z)=\sum_{j\in\mathcal{J}}p_{ij}(t,T)\mathcal{L}^aU_T^{f_j,a}(t,x,y,z)$ and that $U_T^{f_j,a}$ satisfies the PIDE in \eqref{pid2} we have
\begin{align*}
    \partial_tG_{i,T}^{a}(t,x,y,z) + \mathcal{L}^aG_{i,T}^{a}(t,x,y,z) = \ & \sum_{\substack{k\in \mathcal{J}\\ k\neq i}}\mu_{ik}(t)\left(G_{i,T}^{a}(t,x,y,z)-G_{k,T}^a(t,x,y,z)\right) \\
    &+\sum_{j\in\mathcal{J}}p_{ij}(t,T)\left(\partial_tU_T^{f_j,a}(t,x,y,z)+\mathcal{L}^aU_T^{f_j,a}(t,x,y,z)\right) \\
    = \ & \sum_{\substack{k\in \mathcal{J}\\ k\neq i}}\mu_{ik}(t)\left(G_{i,T}^{a}(t,x,y,z)-G_{k,T}^a(t,x,y,z)\right) \\
    &+r\sum_{j\in\mathcal{J}}p_{ij}(t,T)U_T^{f_j,a}(t,x,y,z) \\
    = \ & \sum_{\substack{k\in \mathcal{J}\\ k\neq i}}\mu_{ik}(t)\left(G_{i,T}^{a}(t,x,y,z)-G_{k,T}^a(t,x,y,z)\right) \\
    &+rG_{i,T}^{a}(t,x,y,z). 
\end{align*}
Replacing this last equality in \eqref{eq:prethiele} we obtain
\begin{align*}
    \partial_tV_i^a(t,x,y,z) = \ & rV_i^a(t,x,y,z)-g_i(t,x)-\sum_{\substack{k\in \mathcal{J}\\ k\neq i}}\mu_{ik}(t)\left(h_{ik}(t,x)+V_k^a(t,x,y,z)-V_i^a(t,x,y,z)\right)\\
    &-\mathcal{L}^aV_i^a(t,x,y,z),
\end{align*}
finishing the proof.
\end{proof}

\section*{\center Acknowledgments}

The authors would like to acknowledge financial support by the Research Council of Norway under the SCROLLER project, project number 299897.

\noindent

\appendix
\section{Appendix: Technical results}\label{sec: appendix}
We give all the proofs that were postponed.
\subsection{Moments of the variance and the Radon-Nikodym derivative}
\begin{lem}\label{Anmomentcompound}
Let $s\geq1$. Then, $\EE\left[L_t^s\right]<\infty$ for all $t\in[0,T]$ and $\int_0^T\EE\left[L_t^s\right]dt<\infty$.

Moreover, $\EE\left[[L]_t^s\right]<\infty$ for all $t\in[0,T]$.
\end{lem}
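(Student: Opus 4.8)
The plan is to reduce everything to the single statement that the linear Hawkes process $N$ has all positive moments finite, with a bound uniform in $t\in[0,T]$, and to obtain that from an exponential supermartingale estimate. First, since the jumps of $L$ occur exactly at the jump times of $N$, with sizes $J_1,J_2,\dots$, its quadratic variation is $[L]_t=\sum_{0<u\le t}(\Delta L_u)^2=\sum_{i=1}^{N_t}J_i^2$; by Assumption \ref{as} the variables $\{J_i^2\}_{i\ge1}$ are again i.i.d., strictly positive and have all positive moments finite. Hence it is enough to prove, for a generic compound Hawkes process $\bar L_t=\sum_{i=1}^{N_t}\bar J_i$ with $\{\bar J_i\}$ i.i.d., positive, independent of $N$ and with all positive moments finite, that $\sup_{t\in[0,T]}\EE[\bar L_t^m]<\infty$ for every $m\in\N$; applying this with $\bar J_i=J_i$ and with $\bar J_i=J_i^2$, and using $x^s\le 1+x^{\lceil s\rceil}$ for $x\ge 0$, gives $\EE[L_t^s]<\infty$ and $\EE[[L]_t^s]<\infty$ for all $t$; and since $t\mapsto L_t$ is nondecreasing, $t\mapsto\EE[L_t^s]$ is nondecreasing, hence Borel measurable, so $\int_0^T\EE[L_t^s]\,dt\le T\sup_{t\in[0,T]}\EE[L_t^s]<\infty$.

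Conditioning on $\mathcal{F}^N_t$, using that $\{\bar J_i\}$ is independent of $N$, and the pointwise inequality $(\sum_{i=1}^n a_i)^m\le n^{m-1}\sum_{i=1}^n a_i^m$ for $a_i\ge 0$ (Jensen for $x\mapsto x^m$), one gets
\begin{align*}
    \EE\big[\bar L_t^m\,\big|\,\mathcal{F}^N_t\big]=\EE\Big[\Big(\sum_{i=1}^{N_t}\bar J_i\Big)^m\,\Big|\,\mathcal{F}^N_t\Big]\le N_t^{m-1}\,\EE\Big[\sum_{i=1}^{N_t}\bar J_i^m\,\Big|\,\mathcal{F}^N_t\Big]=N_t^m\,\EE[\bar J_1^m],
\end{align*}
so that $\EE[\bar L_t^m]\le\EE[\bar J_1^m]\,\EE[N_t^m]$, and the problem is reduced to $\sup_{t\in[0,T]}\EE[N_t^m]<\infty$ for every $m\in\N$.

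For this I would use the Dol\'eans--Dade exponential of the compensated intensity. For $u>0$ set $M_t^u:=\exp\!\big(uN_t-(e^u-1)\int_0^t\lambda_s\,ds\big)$; this is the stochastic exponential of $(e^u-1)\big(N-\int_0^{\cdot}\lambda_s\,ds\big)$, which is an $\mathcal{F}^N$-local martingale by definition of the stochastic intensity $\lambda$, so $M^u$ is a nonnegative local martingale, hence a supermartingale with $\EE[M_t^u]\le M_0^u=1$. By Tonelli,
\begin{align*}
    \int_0^t\lambda_s\,ds=\lambda_0 t+\alpha\int_0^t\!\!\int_0^s e^{-\beta(s-r)}dN_r\,ds=\lambda_0 t+\frac{\alpha}{\beta}\int_0^t\!\big(1-e^{-\beta(t-r)}\big)dN_r\le\lambda_0 T+\frac{\alpha}{\beta}N_t,
\end{align*}
for $t\in[0,T]$; since $e^u-1>0$ this yields $M_t^u\ge e^{-(e^u-1)\lambda_0 T}\exp\!\big((u-(e^u-1)\tfrac{\alpha}{\beta})N_t\big)$. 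Because $\alpha/\beta<1$ we have $u-(e^u-1)\tfrac{\alpha}{\beta}=u(1-\tfrac{\alpha}{\beta})+O(u^2)>0$ for $u$ small; fixing such a $u_0$ and putting $\gamma:=u_0-(e^{u_0}-1)\tfrac{\alpha}{\beta}>0$ one obtains
\begin{align*}
    \EE\!\left[e^{\gamma N_t}\right]\le e^{(e^{u_0}-1)\lambda_0 T}\,\EE[M_t^{u_0}]\le e^{(e^{u_0}-1)\lambda_0 T},\qquad t\in[0,T],
\end{align*}
and therefore $\sup_{t\in[0,T]}\EE[N_t^m]\le m!\,\gamma^{-m}e^{(e^{u_0}-1)\lambda_0 T}<\infty$ for all $m\in\N$, which closes the argument.

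The main obstacle is precisely this last uniform moment bound for the self-exciting process $N$: one has to convert the stability condition $\alpha/\beta<1$ into control of the (exponential, hence all polynomial) moments of $N_t$, uniformly on $[0,T]$. The decisive observation making the exponential supermartingale estimate work is that the cumulative intensity $\int_0^t\lambda_s\,ds$ is itself dominated by the affine function $\lambda_0 T+\tfrac{\alpha}{\beta}N_t$ of $N_t$; and it matters that $M^u$ is used only as a supermartingale (a nonnegative local martingale), so no Novikov-type condition is needed. Everything else — the conditioning on $\mathcal{F}^N_t$, the Jensen bound, the reduction from real $s$ to the integer $\lceil s\rceil$, and the passage from $L$ to $[L]$ — is routine.
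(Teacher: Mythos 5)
Your proof is correct, and it takes a genuinely different route from the paper on the central point. The outer reduction is essentially the same as the paper's: both condition on $\mathcal{F}_t^N$, apply a power-mean bound ($L_t^s\le N_t^{s-1}\sum_{i\le N_t}J_i^s$, the paper via Hölder for sums directly in the real exponent $s\ge1$, you via Jensen for an integer $m$ together with $x^s\le 1+x^{\lceil s\rceil}$), use independence of $\{J_i\}$ from $N$ to factor out $\EE[J_1^s]$, exploit monotonicity of $t\mapsto L_t$ for the time-integrability, and treat $[L]_t=\sum_{i\le N_t}J_i^2$ by the same argument. Where you differ is the key input $\EE[N_t^s]<\infty$ uniformly on $[0,T]$: the paper simply cites the Hawkes-moment literature (its references \cite{momentshawkes}, \cite{CONTAGION}), whereas you derive it from scratch by the Doléans--Dade exponential $M_t^u=\exp\bigl(uN_t-(e^u-1)\int_0^t\lambda_s\,ds\bigr)$, using that it is a nonnegative local martingale (hence a supermartingale, with no integrability hypothesis required), together with the pathwise bound $\int_0^t\lambda_s\,ds\le\lambda_0 T+\tfrac{\alpha}{\beta}N_t$ obtained by Tonelli, and the fact that $u-(e^u-1)\tfrac{\alpha}{\beta}>0$ for small $u>0$ because $\alpha<\beta$. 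I checked the algebra: $M^u$ is indeed the stochastic exponential of $(e^u-1)(N-\int_0^\cdot\lambda_s\,ds)$, the lower bound on $M^u_t$ and the resulting bound $\EE[e^{\gamma N_t}]\le e^{(e^{u_0}-1)\lambda_0 T}$ are right, and $N_t^m\le m!\gamma^{-m}e^{\gamma N_t}$ closes it. Your version is self-contained and in fact stronger (it yields a uniform exponential moment of $N_t$, not only the polynomial moments used here); the paper's version is shorter by outsourcing that step to the literature. Both are valid.
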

\begin{proof}
Applying Hölder's inequality for sums we get 
    \begin{align*}
        L_t^s\leq N_t^{s-1}\sum_{i=1}^{N_t}J_i^s.
    \end{align*}
    Using that $N$ and $\{J_i\}_{i\geq1}$ are independent we obtain 
    \begin{align*}
        \EE[L_t^s]&\leq\EE\left[N_t^{s-1}\sum_{i=1}^{N_t}J_i^s\right]=\EE\left[\EE\left[N_t^{s-1}\sum_{i=1}^{N_t}J_i^s\Big|\mathcal{F}_t^N\right]\right]=\EE\left[N_t^{s-1}\EE\left[\sum_{i=1}^{N_t}J_i^s\Big|\mathcal{F}_t^N\right]\right] \\
        &=\EE\left[N_t^{s-1}\sum_{i=1}^{N_t}\EE[J_i^s|\mathcal{F}_t^N]\right]=\EE\left[N_t^{s-1}\sum_{i=1}^{N_t}\EE[J_i^s]\right]=\EE\left[N_s^sJ_1^s\right]=\EE\left[N_s^s\right]\EE\left[J_1^s\right]<\infty,
    \end{align*}
    where we have used that $\EE[N_t^s]<\infty$ and $\EE[J_1^s]<\infty$. For a reference of $\EE[N_t^s]<\infty$ see \cite[Theorem 1]{momentshawkes} or \cite[Corollary 3.2]{CONTAGION} where their condition $\delta>\mu_{1_G}$ is our stability condition $\beta>\alpha$. For $\EE[J_1^s]<\infty$ see Assumption \ref{as}. Note that $\int_0^T\EE\left[L_t^s\right]dt<\infty$ is a just a consequence of $\EE[L_t^s]\leq\EE[L_T^s]<\infty$ for all $t\in[0,T]$

     Finally, in order to prove that $\EE\left[[L]_t^s\right]<\infty$ one can repeat the same argument with $[L]_t=\sum_{i=1}^{N_t}J_i^2$. 
    \end{proof}
 \begin{manuallemma}{\ref{lemintegr}}\label{lemAintegrnpower} Let $s\geq1$. Then, $\EE\left[v_t^s\right]<\infty$ for all $t\in[0,T]$ and $\int_0^T\EE\left[v_t^s\right]dt<\infty$. \end{manuallemma}
   \begin{proof}
       Recall that 
\begin{align*}
    v_t=v_0-\kappa\int_0^t(v_u-\Bar{v})du+\sigma\int_0^t\sqrt{v_u}dW_u+\eta L_u.
\end{align*}
Applying Hölder's inequality for sums we get
\begin{align}\label{vn}
    v_t^s\leq 4^{s-1}&\left[v_0^s+\kappa^s\left|\int_0^t(v_u-\Bar{v})du\right|^s+\sigma^s\left|\int_0^t\sqrt{v_u}dW_u\right|^s+\eta^sL_t^s\right].
\end{align}
By applying Jensen's inequality and again Hölder's inequality for sums we obtain
\begin{align*}
    \left|\int_0^t(v_u-\Bar{v})du\right|^s&\leq t^s\left(\frac{1}{t}\int_0^t|v_u-\Bar{v}|du\right)^s \\
    &\leq t^{s-1}\int_0^t|v_u-\Bar{v}|^sdu \\
    &\leq (2t)^{s-1}\int_0^t\left(v_u^s+\Bar{v}^s\right)du \\
    & \leq (2T)^{s-1}\int_0^Tv_u^sdu+2^{s-1}(T\Bar{v})^{s}.
\end{align*}
Replacing the last inequality in \eqref{vn} we get 
\begin{align*}
    v_t^s\leq 4^{s-1}\left[v_0^s+(2T)^{s-1}\kappa^s\int_0^Tv_u^sdu+2^{s-1}(T\Bar{v}\kappa)^s+\sigma^s\left|\int_0^t\sqrt{v_u}dW_u\right|^s+\eta^sL_T^s\right].
\end{align*}
Define now $A(t):=4^{s-1}\left[v_0^s+2^{s-1}(T\Bar{v}\kappa)^s+\sigma^s\left|\int_0^t\sqrt{v_u}dW_u\right|^s+\eta^sL_T^s\right]$, $B:=(8T)^{s-1}\kappa^s$. Then,
\begin{align*}
    v_t^s\leq A(t)+B\int_0^tv_u^sdu,
\end{align*}
for all $t\in[0,T]$. For each $\omega\in\Omega$, the functions $t\to v^s_t(\omega)$ and $t\to A(t,\omega)$ are measurable in $t$. Moreover, $\int_0^Tv_u^sdu<\infty$ a.s. because $v$ is a continuous process except a finite number of finite jumps. Note that $A(t)>0$ for all $t\in[0,T]$. Then, all the conditions to apply Grönwall's inequality hold and we get
\begin{align}\label{Agronwalln}
    v_t^s \leq A(t)+B\int_0^tA(u)e^{B(t-u)}du\leq A(t)+B e^{B T}\int_0^TA(u)du,
\end{align}
for all $t\in[0,T]$. 

By Proposition \ref{novikov}, there exists $c_l>0$ such that for $c<c_l$, $\EE\left[\exp\left(c\int_0^Tv_udu\right)\right]<\infty$. In particular, $\EE\left[\int_0^Tv_udu\right]<\infty$ and the process $t\to\int_0^t\sqrt{v_u}dW_u$ is a $(\mathcal{F},\PP)$-martingale. By the Burkholder–Davis–Gundy inequality, there exists a constant $C_s$ independent of the martingale and $t$ such that
\begin{align*}
    \EE\left[\left|\int_0^t\sqrt{v_u}dW_u\right|^s\right]\leq C_s\EE\left[\left(\int_0^tv_udu\right)^{\frac{s}{2}}\right]\leq C_s\EE\left[\left(\int_0^Tv_udu\right)^{\frac{s}{2}}\right]<\infty,
\end{align*}
where the last expectation is finite because all positive moments of $\int_0^Tv_udu$ exist. By Lemma \ref{Anmomentcompound} we have that $\EE[L_T^s]<\infty$ and, therefore
\begin{align*}
    \EE\left[A(t)\right]\leq 4^{s-1}\left[v_0^s+2^{s-1}(T\Bar{v}\kappa)^s+\sigma^sC_s\EE\left[\left(\int_0^Tv_udu\right)^{\frac{s}{2}}\right]+\eta^s\EE[L_T^s]\right]=:E<\infty,
\end{align*}
for all $t\in[0,T]$. Finally, by taking expectations in \eqref{Agronwalln} we obtain 
\begin{align*}
    \EE[v_t^s]\leq E+B e^{B T}TE<\infty,
\end{align*}
for all $t\in[0,T]$ and it follows that $\int_0^T\EE[v_t^s]dt<\infty$. 
   \end{proof}
\begin{definition}\label{def1f1}
For $a,b,z\in\RR$, we define the confluent hypergeometric function of the first kind ${}_1F_1(a,b;z)$ in the following way
\begin{align*}
 {}_1F_1(a,b;z):=\sum_{n=0}^\infty\frac{a^{(n)}}{b^{(n)}}\frac{z^n}{n!}
\end{align*}
where $q^{(0)}=1$ and $q^{(n)}=q\cdot (q+1)\cdot ... \cdot (q+n-1)$ for $n\geq1$ is the rising factorial. See \cite[Section 5.3]{paolella_2007} and \cite[Section 13]{asimptotic1f1} for more details about the confluent hypergeometric function of the first kind. 
\end{definition}
    \begin{manuallemma}{\ref{leminver}}\label{lemAinver} Let $s\geq1$. If $2\kappa \Bar{v}>s\sigma^2$, $\EE\left[\frac{1}{v_t^s}\right]<\infty$ for all $t\in[0,T]$ and $\int_0^T\EE\left[\frac{1}{v_t^s}\right]dt<\infty$. 
   \end{manuallemma}
\begin{proof}
    Recall that $\widetilde{v}=\{\widetilde{v}_t, t\in[0,T]\}$ is the pathwise unique strong solution of 
\begin{align*}
    \widetilde{v}_t=\widetilde{v}_0-\kappa\int_0^t\left(\widetilde{v}_s-\Bar{v}\right)ds+\sigma\int_0^t\sqrt{\widetilde{v}_s}dW_s.
\end{align*}    
    In \cite[Section 2, Equation (2.3)]{inverseheston}, we see that for $t\in (0,T]$
\begin{align*}
    \widetilde{v}_t \sim \frac{e^{-\kappa t}v_0}{k(t)}\chi_\delta^{'2}\left(k(t)\right) \ \ \text{with} \ \ k(t):=\frac{4\kappa v_0 e^{-\kappa t}}{\sigma^2(1-e^{-\kappa t})} \ \ \text{and} \ \ \delta:=\frac{4\kappa\Bar{v}}{\sigma^2},
\end{align*}
where $\chi_\delta^{'2}(k(t))$ denotes a noncentral chi-square random variable with $\delta$ degrees of freedom and noncentrality parameter $k(t)$. Since $2\kappa\Bar{v}>s\sigma^2$, $-s>-\frac{2\kappa\Bar{v}}{\sigma^2}=-\frac{\delta}{2}$. Then, we can use \cite[Section 10.1, Equation (10.9)]{paolella_2007} to get that for $t>0$
\begin{align*}
    \EE\left[\left(\frac{1}{\widetilde{v}_t}\right)^s\right]=\left(\frac{k(t)}{e^{-\kappa t}v_0}\right)^s\frac{1}{2^se^{k(t)/2}}\frac{\Gamma\left(\frac{\delta}{2}-s\right)}{\Gamma\left(\frac{\delta}{2}\right)}{}_1F_1\left(\frac{\delta}{2}-s,\frac{\delta}{2};\frac{k(t)}{2}\right)<\infty,
\end{align*}
where ${}_1F_1$ is the confluent hypergeometric function of the first kind given in Definition \ref{def1f1}. By Proposition \ref{p1}, this proves that we have $\EE\left[\frac{1}{v_t^s}\right]<\infty$ for all $t\in[0,T]$. 

To prove that the $s$th negative moment is integrable we check that $t\to \EE\left[\left(\frac{1}{\widetilde{v}_t}\right)^s\right]$ is a continuous function for $t\in[0,T]$. Since $\frac{\delta}{2}$ is strictly positive, and $k\colon(0,T]\to\RR$ is continuous, the function $t\to {}_1F_1\left(\frac{\delta}{2}-s,\frac{\delta}{2};\frac{k(t)}{2}\right)$ is continuous for $t\in(0,T]$. Therefore, $t\to \EE\left[\left(\frac{1}{\widetilde{v}_t}\right)^s\right]$ is continuous at least for $t\in(0,T]$. To check continuity at $t=0$ observe that $\lim_{t\rightarrow0^+}k(t)=\infty$ and by \cite[Section 13, Equation 13.1.4]{asimptotic1f1} it is known that
\begin{align*}
    \lim_{t\rightarrow0^+}\frac{{}_1F_1\left(\frac{\delta}{2}-s,\frac{\delta}{2};\frac{k(t)}{2}\right)}{\frac{\Gamma\left(\frac{\delta}{2}\right)}{\Gamma\left(\frac{\delta}{2}-s\right)}e^{k(t)/2}\left(\frac{k(t)}{2}\right)^{-s}}=1.
\end{align*}
Then, 
\begin{align*}
    \lim_{t\rightarrow0^+}\EE\left[\left(\frac{1}{\widetilde{v}_t}\right)^s\right]&=\lim_{t\rightarrow0^+}\left(\frac{k(t)}{e^{-\kappa t}v_0}\right)^s\frac{1}{2^se^{k(t)/2}}\frac{\Gamma\left(\frac{\delta}{2}-s\right)}{\Gamma\left(\frac{\delta}{2}\right)} {}_1F_1\left(\frac{\delta}{2}-s,\frac{\delta}{2};\frac{k(t)}{2}\right) \\
    & = \lim_{t\rightarrow0^+} \left(\frac{k(t)}{e^{-\kappa t}v_0}\right)^s\frac{1}{2^se^{k(t)/2}}\frac{\Gamma\left(\frac{\delta}{2}-s\right)}{\Gamma\left(\frac{\delta}{2}\right)}\frac{\Gamma\left(\frac{\delta}{2}\right)}{\Gamma\left(\frac{\delta}{2}-s\right)}e^{k(t)/2}\left(\frac{k(t)}{2}\right)^{-s} \\
    & = \frac{1}{v_0^s}=\EE\left[\left(\frac{1}{\widetilde{v}_0}\right)^s\right].
\end{align*}
This proves that $t\to \EE\left[\left(\frac{1}{\widetilde{v}_t}\right)^s\right]$ is a continuous function for $t\in[0,T]$. Thus, $\int_0^T\EE\left[\left(\frac{1}{\widetilde{v}_t}\right)^s\right]dt<\infty$, which by Proposition \ref{p1} implies that $\int_0^T\EE\left[\frac{1}{v_t^s}\right]dt<\infty$. 
\end{proof}

\begin{lem}\label{lemAinvexp}
If $2\kappa\Bar{v}>\sigma^2$ and $c\leq\frac{1}{2}\left(\frac{2\kappa\Bar{v}-\sigma^2}{2\sigma}\right)^2$, then
\begin{align*}
    \EE\left[\exp\left(c\int_0^T\frac{1}{v_u}du\right)\right]<\infty.
\end{align*}
\end{lem}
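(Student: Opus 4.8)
The plan is to compare the original variance with the standard Heston variance $\widetilde{v}$ of Proposition~\ref{p1} and then exploit the closed form of the Cox--Ingersoll--Ross dynamics through a logarithmic change of variables. If $c\leq 0$ the integrand is bounded by $1$ and there is nothing to prove, so assume $c>0$. Since $\widetilde{v}_u\leq v_u$ for all $u\in[0,T]$ a.s.\ by Proposition~\ref{p1}, we have $v_u^{-1}\leq\widetilde{v}_u^{-1}$, hence $\exp\!\big(c\int_0^T v_u^{-1}du\big)\leq\exp\!\big(c\int_0^T\widetilde{v}_u^{-1}du\big)$ and it suffices to bound $\EE[\exp(cI_T)]$ with $I_T:=\int_0^T\widetilde{v}_u^{-1}du$. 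The computation in the proof of Lemma~\ref{lemAinver} (with $s=1$, which is where $2\kappa\Bar{v}>\sigma^2$ is used) shows $\EE[I_T]<\infty$; in particular $I_T<\infty$ a.s.\ and $M_t:=\int_0^t\widetilde{v}_u^{-1/2}dW_u$ is a well-defined $(\mathcal{F},\PP)$-martingale with $[M]_T=I_T$.

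I would then apply Itô's formula to $t\mapsto\log\widetilde{v}_t$; writing $\mu:=\kappa\Bar{v}-\tfrac12\sigma^2>0$, this gives $\log\widetilde{v}_T=\log v_0+\mu I_T-\kappa T+\sigma M_T$. The hypothesis is exactly $c\leq\tfrac12(\mu/\sigma)^2$, i.e.\ $\mu^2-2\sigma^2 c\geq 0$, which is precisely the condition for the quadratic $\tfrac12\sigma^2\beta^2-\mu\beta+c=0$ to have real roots; let $\beta:=\sigma^{-2}\big(\mu-\sqrt{\mu^2-2\sigma^2 c}\big)$ be the smaller one, so that $\beta\in(0,\mu/\sigma^2]$ and $\mu\beta-\tfrac12\sigma^2\beta^2=c$. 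Substituting $\sigma M_T=\log(\widetilde{v}_T/v_0)-\mu I_T+\kappa T$ into the stochastic exponential $\mathcal{E}(-\beta\sigma M)_T=\exp\!\big(-\beta\sigma M_T-\tfrac12\beta^2\sigma^2 I_T\big)$ and using that relation yields the identity
\begin{align*}
  e^{cI_T}=v_0^{-\beta}\,e^{\beta\kappa T}\,\widetilde{v}_T^{\,\beta}\,\mathcal{E}(-\beta\sigma M)_T,
\end{align*}
which moreover holds with $T$ replaced by any stopping time $\leq T$.

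Granting that $\mathcal{E}(-\beta\sigma M)$ is a genuine $(\mathcal{F},\PP)$-martingale on $[0,T]$, let $\QQ$ be given by $\frac{d\QQ}{d\PP}\big|_{\mathcal{F}_T}=\mathcal{E}(-\beta\sigma M)_T$. By Girsanov, $W_t+\beta\sigma\int_0^t\widetilde{v}_u^{-1/2}du$ is a $\QQ$-Brownian motion and under $\QQ$ the process $\widetilde{v}$ solves a CIR equation with mean-reversion speed $\kappa$, volatility $\sigma$ and long-run mean $\Bar{v}^{\QQ}=(\kappa\Bar{v}-\beta\sigma^2)/\kappa$; as $\beta\leq\mu/\sigma^2$ one has $\Bar{v}^{\QQ}\geq\sigma^2/(2\kappa)>0$, so $\widetilde{v}$ is a genuine CIR process under $\QQ$ and all of its positive moments are finite (equivalently, $\widetilde{v}_T$ has finite exponential moments near $0$). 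Taking $\PP$-expectation in the identity above then gives $\EE[e^{cI_T}]=v_0^{-\beta}e^{\beta\kappa T}\,\EE^{\QQ}[\widetilde{v}_T^{\,\beta}]<\infty$, and combining with the first paragraph completes the proof.

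The one substantial point is the martingale property of $\mathcal{E}(-\beta\sigma M)$. Novikov's criterion is of no help here, because $[M]_T=I_T$ and $\EE[\exp(\tfrac12\beta^2\sigma^2 I_T)]$ is an exponential moment of exactly the type being estimated. I would instead argue as is standard for CIR processes: the candidate post-change dynamics is the CIR process above, whose parameters satisfy $2\kappa\Bar{v}^{\QQ}\geq\sigma^2$ so that the boundary $0$ is not attainable, whence the Girsanov density does not lose mass on the finite horizon $[0,T]$ and the local martingale is a true martingale (cf.\ the CIR change-of-measure arguments used in \cite{arxiv}). Alternatively, and in a self-contained way, one can localise along $\tau_n:=\inf\{t\geq 0:I_t\geq n\}$: on $[0,\tau_n]$ the quadratic variation of $M$ is bounded, so $\mathcal{E}(-\beta\sigma M)_{\cdot\wedge\tau_n}$ is a true martingale, the associated law of $(\widetilde{v}_t)_{t\leq\tau_n}$ is that of the CIR process above run up to the relevant time, and the stopped identity yields $\EE[e^{cI_{T\wedge\tau_n}}]\leq v_0^{-\beta}e^{\beta\kappa T}\,\EE\big[\sup_{t\leq T}\widehat{v}_t^{\,\beta}\big]$ uniformly in $n$, where $\widehat{v}$ denotes that CIR process (whose running maximum over $[0,T]$ has finite moments). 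Letting $n\to\infty$ with $I_{T\wedge\tau_n}\uparrow I_T$, monotone convergence then delivers the bound.
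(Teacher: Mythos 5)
Your proof is correct, and it follows a genuinely different route from the paper's. Both proofs open the same way: they reduce to the CIR process $\widetilde{v}$ via Proposition~\ref{p1} and then have to control $\EE[\exp(c\int_0^T\widetilde{v}_u^{-1}du)]$. From there the paper applies It\^{o} to $Z_t=1/\widetilde{v}_t$, observes that $Z$ is a quadratic-drift $3/2$ process, and invokes an explicit closed-form Laplace transform for $\int_0^T Z_u\,du$ from the $3/2$-model literature (expressed through ${}_1F_1$), with the hypothesis $c\leq\tfrac12(\mu/\sigma)^2$ entering as the condition for that formula to apply. You instead apply It\^{o} to $\log\widetilde{v}_t$ to get $\log\widetilde{v}_T=\log v_0+\mu I_T-\kappa T+\sigma M_T$, pick $\beta$ as the smaller root of $\tfrac12\sigma^2\beta^2-\mu\beta+c=0$ (which exists exactly under the stated hypothesis), and obtain the pathwise identity $e^{cI_T}=v_0^{-\beta}e^{\beta\kappa T}\widetilde{v}_T^{\beta}\,\mathcal{E}(-\beta\sigma M)_T$; after a Girsanov change of measure the variance becomes a CIR with long-run mean $\Bar v-\beta\sigma^2/\kappa$ still satisfying Feller (precisely because $\beta\leq\mu/\sigma^2$), and the finiteness follows from a positive moment of a CIR. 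This is more self-contained: it avoids the external $3/2$ Laplace-transform formula and makes the role of the threshold $c=\tfrac12(\mu/\sigma)^2$ transparent (it is exactly where the discriminant vanishes, i.e.\ where the target measure's Feller condition becomes an equality). The price is that you must justify that $\mathcal{E}(-\beta\sigma M)$ is a true martingale, which you correctly flag as the crux and which cannot be handled by Novikov. Your localisation argument along $\tau_n=\inf\{t: I_t\geq n\}$ is the right fix and does close the gap, though it needs the (routine but worth spelling out) identification of the law of $(\widetilde v_{t\wedge\tau_n})_{t\leq T}$ under $\QQ_n$ with that of the stopped target CIR; also note the parenthetical about ``finite exponential moments near $0$'' is a slip --- all that is needed is $\EE^{\QQ}[\widetilde v_T^{\beta}]<\infty$, which holds because a CIR marginal has all positive polynomial moments finite.
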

\begin{proof}
We define the process $Z=\{Z_t=\frac{1}{\widetilde{v}_t}, t\in[0,T]\}$ where $\widetilde{v}=\{\widetilde{v}_t, t\in[0,T]\}$ is the pathwise unique strong solution of 
\begin{align*}
    \widetilde{v}_t=\widetilde{v}_0-\kappa\int_0^t\left(\widetilde{v}_u-\Bar{v}\right)du+\sigma\int_0^t\sqrt{\widetilde{v}_u}dW_u.
\end{align*}
Note that the Feller condition $2\kappa\Bar{v}\geq\sigma^2$ ensures that the process $\widetilde{v}$ is strictly positive. Therefore, the process $Z$ is well defined. Applying Itô formula we get
\begin{align}\label{invheston}
    dZ_t=\left[\kappa Z_t+\left(\sigma^2-\kappa\Bar{v}\right)Z_t^2\right]dt-\sigma Z_t^{3/2}dW_t.
\end{align}
Following the notation in \cite[Theorem 3]{3over2}, $Z$ is a quadratic drift $3/2$ process with parameters $p(t)=\kappa$, $q=\sigma^2-\kappa\Bar{v}$ and $\epsilon=-\sigma$. The condition $q<\frac{\epsilon^2}{2}$ is satisfied because $2\kappa\Bar{v}>\sigma^2$. Applying \cite[Theorem 3]{3over2} with $u=0$ and $s=-c$ we get that for $c\leq\frac{1}{2}\left(\frac{2\kappa\Bar{v}-\sigma^2}{2\sigma}\right)^2$
\begin{align*}
    \EE\left[\exp\left(c\int_0^T\frac{1}{\widetilde{v}_u}du\right)\right]=\frac{\Gamma(\gamma-\widetilde{\alpha})}{\Gamma(\widetilde{\alpha})}\left(\frac{2}{\sigma^2y(0,1/v_0)}\right)^{\widetilde{\alpha}} {}_1F_1\left(\widetilde{\alpha},\gamma;-\frac{2}{\sigma^2y(0,1/v_0)}\right)<\infty,
\end{align*}
where
\begin{align*}
    \widetilde{\alpha} & = -\left(\frac{\kappa\Bar{v}}{\sigma^2}-\frac{1}{2}\right)+\sqrt{\left(\frac{\kappa\Bar{v}}{\sigma^2}-\frac{1}{2}\right)^2-2\frac{c}{\sigma^2}}, \\
    \gamma & = 2\left(\widetilde{\alpha}+\frac{k\Bar{v}}{\sigma^2}\right), \\
    y(0,1/v_0) & =\frac{e^{\kappa T}-1}{v_0\kappa}.
\end{align*}
and ${}_1F_1$ is the confluent hypergeometric function defined in Definition \ref{def1f1}. Note that $\widetilde{\alpha}\in\RR$ because $c\leq\frac{1}{2}\left(\frac{2\kappa\Bar{v}-\sigma^2}{2\sigma}\right)^2$ by hypothesis. Finally, by Proposition \ref{p1}, we conclude that 
\begin{align*}
    \EE\left[\exp\left(c\int_0^T\frac{1}{v_u}du\right)\right]<\infty.
\end{align*}
See also \cite[Footnote 10 on page 136]{frontiers} for another reference about the inverse CIR. Note that \cite[Equation (24) on page 137]{frontiers} is the same condition we have on $c$.
\end{proof}
\begin{manuallemma}{\ref{nmoment}}\label{Anmoment} Let $\QQ(a)\in\mathcal{E}$, $s>1$, $D:=\sup_{t\in[0,T]}(\mu_t-r)^2<\infty$ and $X^{(a)}$ defined in Theorem \ref{risk}. Assume that $2\kappa\Bar{v}>\sigma^2$ and $\frac{1-\rho^2}{D(s^2-s)}\left(\frac{2\kappa\Bar{v}-\sigma^2}{2\sigma}\right)^2>1$. Consider $q_2$ such that $1<q_2<\frac{1-\rho^2}{D(s^2-s)}\left(\frac{2\kappa\Bar{v}-\sigma^2}{2\sigma}\right)^2$ and define $q_1:=\frac{q_2}{q_2-1}>1$. If
\begin{align}\label{Aconda}
    |a|<\min\Bigg\{\frac{1}{q_1s}\sqrt{\frac{c_l}{2}},\sqrt{\frac{(1-\rho^2)c_l}{q_1s\left[2q_1s(1-\rho^2)+\rho^2s-1\right]}}\Bigg\},
\end{align}
then
\begin{align*}
    \EE\left[\left(X_T^{(a)}\right)^s\right]<\infty \hspace{0.5cm}\text{and}\hspace{0.5cm} \EE\left[\left(X_t^{(a)}\right)^s\right]\leq \left(\frac{s}{s-1}\right)^s\EE\left[\left(X_T^{(a)}\right)^s\right]<\infty,
\end{align*}
for all $t\in[0,T]$.
\end{manuallemma}
\begin{proof}
Recall that we have defined $\theta_t^{(a)}:=\frac{1}{\sqrt{1-\rho^2}}\left(\frac{\mu_t-r}{\sqrt{v_t}}-a\rho\sqrt{v_t}\right)$, 
\begin{align*}
    Y_t^{(a)} &:=\exp\left(-\int_0^t\theta_u^{(a)}dB_u-\frac{1}{2}\int_0^t(\theta_u^{(a)})^2du\right), \\
    Z_t^{(a)} &:=\exp\left(-a\int_0^t\sqrt{v_u}dW_u-\frac{1}{2}a^2\int_0^tv_udu\right)
\end{align*}
and $X_t^{(a)}:=Y_t^{(a)}Z_t^{(a)}$ in Theorem \ref{risk}. By Proposition \ref{p1}, the variance process $v$ is strictly positive. This implies that $\int_0^T(\theta_u^{(a)})^2ds<\infty$, $\PP$-a.s, and since  $\theta^{(a)}$ is $\{\mathcal{F}_t^{W}\vee\mathcal{F}_t^L\}_{t\in[0,T]}$-adapted,
\begin{align*}
    Y^{(a)}_T|\mathcal{F}_T^{W}\vee\mathcal{F}_T^L\sim\text{Lognormal}\left(-\frac{1}{2}\int_0^T(\theta_u^{(a)})^2du,\int_0^T(\theta_u^{(a)})^2du\right).
\end{align*}
Using that $Z_T^{(a)}$ is $\mathcal{F}_T^{W}\vee\mathcal{F}_T^L$-measurable we have
\begin{align*}
    \EE\left[\left(X_T^{(a)}\right)^s\right] & =\EE\left[\left(Z_T^{(a)}\right)^s\EE\left[\left(Y_T^{(a)}\right)^s|\mathcal{F}_T^{W}\vee\mathcal{F}_T^L\right]\right] \\
    & = \EE\left[\left(Z_T^{(a)}\right)^s\exp\left(\left(\frac{s^2-s}{2}\right)\int_0^T(\theta_u^{(a)})^2du\right)\right].
\end{align*}
Using that $(\theta_u^{(a)})^2=\frac{1}{1-\rho^2}\left(\frac{(\mu_u-r)^2}{v_u}+a^2\rho^2v_u-2a\rho(\mu_u-r)\right)$ we obtain 
\begin{align*}
   \EE\left[\left(X_T^{(a)}\right)^s\right]=e^{O^{(a)}\int_0^T(\mu_u-r)du}\EE\left[\exp\left(P^{(a)}\int_0^T\sqrt{v_u}dW_u+Q^{(a)}\int_0^Tv_udu+R\int_0^T\frac{(\mu_u-r)^2}{v_u}du\right)\right],
\end{align*}
where
\begin{align}\label{C}
O^{(a)} & := -\frac{(s^2-s)a\rho}{1-\rho^2} \notag\\ 
P^{(a)} & := -as \notag\\ 
    Q^{(a)} & :=\frac{(s^2-s)a^2\rho^2}{2(1-\rho^2)}-\frac{1}{2}sa^2=\frac{sa^2(\rho^2s-1)}{2(1-\rho^2)} \\
    R & := \frac{s^2-s}{2(1-\rho^2)} \label{D},
\end{align}
We apply Hölder's inequality with $q_1=\frac{q_2}{q_2-1}>1$ and $q_2>1$ to get
\begin{align}\label{in1}
    & \EE\left[\exp\left(P^{(a)}\int_0^T\sqrt{v_u}dW_u+Q^{(a)}\int_0^Tv_udu+R\int_0^T\frac{(\mu_u-r)^2}{v_u}du\right)\right] \notag\\
    &\leq \EE\left[\exp\left(q_1 P^{(a)}\int_0^T\sqrt{v_u}dW_u+q_1Q^{(a)}\int_0^Tv_udu\right)\right]^\frac{1}{q_1}\EE\left[\exp\left(q_2 R\int_0^T\frac{(\mu_u-r)^2}{v_u}du\right)\right]^\frac{1}{q_2}.
\end{align}
Now, we focus on the first term in \eqref{in1}. We add and subtract the constant $q_1^2(P^{(a)})^2$ and we apply Cauchy-Schwarz inequality
\begin{align}\label{in}
    &\EE\left[\exp\left(q_1 P^{(a)}\int_0^T\sqrt{v_u}dW_u+q_1Q^{(a)}\int_0^Tv_udu\right)\right]  \notag\\
    & = \EE\left[\exp\left(q_1 P^{(a)}\int_0^T\sqrt{v_u}dW_u-q_1^2(P^{(a)})^2\int_0^Tv_udu\right)\exp\left((q_1 Q^{(a)}+q_1^2(P^{(a)})^2)\int_0^Tv_udu\right)\right] \notag\\
    & \leq \EE\left[\exp\left(2q_1P^{(a)}\int_0^T\sqrt{v_u}dW_u-2q_1^2(P^{(a)})^2\int_0^Tv_udu\right)\right]^\frac{1}{2}\EE\left[\exp\left(2(q_1 Q^{(a)}+q_1^2(P^{(a)})^2)\int_0^Tv_udu\right)\right]^\frac{1}{2}.
\end{align}
Note that the first term in \eqref{in} is the expectation of a Doléans-Dade exponential. 
By \eqref{Aconda}, $|a|<\frac{1}{q_1s}\sqrt{\frac{c_l}{2}}$ and we have that $2q_1^2(P^{(a)})^2=2q_1^2a^2s^2<c_l$. Then, by Proposition \ref{novikov}, Novikov's condition is satisfied, that is
\begin{align*}
   \EE\left[\exp\left(2q_1^2(P^{(a)})^2\int_0^Tv_udu\right)\right]<\infty.
\end{align*}
Therefore, 
\begin{align*}
    \EE\left[\exp\left(2q_1P^{(a)}\int_0^T\sqrt{v_u}dW_u-2q_1^2(P^{(a)})^2\int_0^Tv_udu\right)\right]<\infty. 
\end{align*}
For the second term in \eqref{in} we need to check again that Proposition \ref{novikov} is satisfied. One can check that
\begin{align*}
    2(q_1 Q^{(a)}+q_1^2(P^{(a)})^2)=\frac{q_1sa^2}{1-\rho^2}\left[2q_1s(1-\rho^2)+\rho^2s-1\right].
\end{align*}
By Observation \ref{obspositivity}, $2q_1s(1-\rho^2)+\rho^2s-1>0$ and by \eqref{Aconda}, $|a|<\sqrt{\frac{(1-\rho^2)c_l}{q_1s\left[2q_1s(1-\rho^2)+\rho^2s-1\right]}}$. Thus, $2(q_1 Q^{(a)}+q_1^2(P^{(a)}))^2<c_l$ and applying again Proposition \ref{novikov} we obtain
\begin{align*}
 \EE\left[\exp\left( 2(q_1 Q^{(a)}+q_1^2(P^{(a)}))^2\int_0^Tv_udu\right)\right]<\infty.
 \end{align*}
We conclude that the two terms in \eqref{in} are finite and, therefore, the first expectation in \eqref{in1} is finite as well, that is, 
\begin{align*}
    \EE\left[\exp\left(q_1 P^{(a)}\int_0^T\sqrt{v_u}dW_u+q_1 Q^{(a)}\int_0^Tv_udu\right)\right] <\infty.
\end{align*}
We check the second term in \eqref{in1}. Recall that $D=\sup_{u\in[0,T]}(\mu_u-r)^2<\infty$. Then
\begin{align*}
    \EE\left[\exp\left(q_2 R\int_0^T\frac{(\mu_u-r)^2}{v_u}du\right)\right] \leq \EE\left[\exp\left(q_2 RD\int_0^T\frac{1}{v_u}du\right)\right].
\end{align*}
Since $2\kappa\Bar{v}>\sigma^2$ and $q_2$ is such that $1<q_2<\frac{1-\rho^2}{D(s^2-s)}\left(\frac{2\kappa\Bar{v}-\sigma^2}{2\sigma}\right)^2$ and $q_2RD=\frac{q_2(s^2-s)D}{2(1-\rho^2)}$, we can apply Lemma \ref{lemAinvexp} to obtain
\begin{align*}
     \EE\left[\exp\left(q_2 R\int_0^T\frac{(\mu_u-r)^2}{v_u}du\right)\right]<\infty.
\end{align*}
This proves that the second term in \eqref{in1} is also finite and we can conclude that  $\EE\left[\left(X_T^{(a)}\right)^s\right]<\infty$. 

\bigskip
\noindent
By Theorem \ref{risk}, $X^{(a)}$ is a positive $(\mathcal{F},\PP)$-martingale. We can apply Doob's martingale inequality, see \cite[Section 2.1.2, Theorem 2.1.5]{applebaum_2009}, to get
\begin{align*}
    \EE\left[\left(X_t^{(a)}\right)^s\right]\leq\EE\left[\sup_{0\leq u\leq T}\left(X_u^{(a)}\right)^s\right]\leq\left(\frac{s}{s-1}\right)^s\EE\left[\left(X_T^{(a)}\right)^s\right]<\infty
\end{align*}
for any $t\in[0,T]$. This finishes the proof of this lemma. 
\end{proof}
\subsection{Compensator of $N$ under the historic and the risk neutral measures}
\begin{lem}\label{lemmarting}
    Let $\QQ(a)\in\mathcal{E}_{m}(Q_1,2+\varepsilon_1)$, the following holds
    \begin{enumerate}
        \item The process $t\to\int_0^tX_u^{(a)}d(L-\Lambda^L)_u$ is a square integrable $(\mathcal{F},\PP)$-martingale.
        \item The process $t\to\int_0^tL_{u-}dX_u^{(a)}$ is a square integrable $(\mathcal{F},\PP)$-martingale.
        \item Let $0\leq s \leq t\leq T$, then $\EE[L_tX_t^{(a)}|\mathcal{F}_s]=L_sX_s^{(a)}+\EE[J_1]\int_s^t\EE[\lambda_uX_u^{(a)}|\mathcal{F}_s]du$.
    \end{enumerate}
\end{lem}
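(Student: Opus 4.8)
The plan is to handle (1) and (2) by the same device and to deduce (3) from an integration-by-parts formula. In both (1) and (2) the relevant process is a stochastic integral of a predictable, locally bounded integrand against a square integrable martingale, hence an $(\mathcal F,\PP)$-local martingale; to upgrade it to a square integrable martingale it suffices to bound the expectation of its quadratic variation at $T$. A preliminary observation I would use throughout: since $Y^{(a)}$ and $Z^{(a)}$ are stochastic exponentials of It\^o integrals against the continuous Brownian motions $B$ and $W$, the process $X^{(a)}=Y^{(a)}Z^{(a)}$ has continuous paths; it is a nonnegative $(\mathcal F,\PP)$-martingale (Theorem \ref{risk}) with $\EE[(X_T^{(a)})^{2+\varepsilon_1}]<\infty$ (Observation \ref{momentsobs}), so Doob's $L^{p}$ maximal inequality gives $\EE[\sup_{0\le u\le T}(X_u^{(a)})^{2+\varepsilon_1}]<\infty$. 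For (1): $M^1_t:=\int_0^t X_u^{(a)}\,d(L-\Lambda^L)_u$ is a local martingale, and since $\Lambda^L$ is continuous of finite variation its quadratic variation equals $[M^1]_T=\int_0^T(X_u^{(a)})^2\,d[L]_u=\sum_{i=1}^{N_T}(X_{T_i}^{(a)})^2J_i^2\le\big(\sup_u(X_u^{(a)})^2\big)[L]_T$, with $T_i$ the jump times of $N$; H\"older with the conjugate exponents $1+\tfrac{\varepsilon_1}{2}$ and $\tfrac{2+\varepsilon_1}{\varepsilon_1}$, using $\sup_u(X_u^{(a)})^2\in L^{1+\varepsilon_1/2}$ and $[L]_T\in L^{(2+\varepsilon_1)/\varepsilon_1}$ (Lemma \ref{Anmomentcompound}), yields $\EE[[M^1]_T]<\infty$, so $M^1$ is a square integrable $(\mathcal F,\PP)$-martingale.

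For (2): $L_{u-}$ is left-continuous adapted, hence predictable, and locally bounded (by $n$ up to $\tau_n:=\inf\{t:L_t\ge n\}$, $\tau_n\uparrow\infty$), so $M^2_t:=\int_0^t L_{u-}\,dX_u^{(a)}$ is a local martingale, continuous because $X^{(a)}$ is, with $[M^2]_t=\int_0^t L_{u-}^2\,d[X^{(a)}]_u$. Using $L_{u-}\le L_T$ and monotonicity of $[X^{(a)}]$ one gets $[M^2]_T\le L_T^2\,[X^{(a)}]_T$. Rather than expanding $[X^{(a)}]_T=\int_0^T(X_u^{(a)})^2((\theta_u^{(a)})^2+a^2v_u)\,du$ and confronting the $1/v_u$ inside $(\theta_u^{(a)})^2$, I would apply the Burkholder--Davis--Gundy inequality to the continuous martingale $X^{(a)}-X_0^{(a)}$ together with Doob's inequality to get $\EE[[X^{(a)}]_T^{1+\varepsilon_1/2}]\le C\,\EE[\sup_u(X_u^{(a)})^{2+\varepsilon_1}]<\infty$; H\"older with $L_T^2\in L^{(2+\varepsilon_1)/\varepsilon_1}$ (Lemma \ref{Anmomentcompound}) then gives $\EE[[M^2]_T]\le\EE[L_T^2[X^{(a)}]_T]<\infty$, so $M^2$ is a square integrable $(\mathcal F,\PP)$-martingale.

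For (3): integration by parts gives $d(L_tX_t^{(a)})=L_{t-}\,dX_t^{(a)}+X_{t-}^{(a)}\,dL_t+d[L,X^{(a)}]_t$, and $X^{(a)}$ continuous forces $[L,X^{(a)}]\equiv0$ and $X_{t-}^{(a)}=X_t^{(a)}$; splitting $\int X_u^{(a)}\,dL_u=\int X_u^{(a)}\,d(L-\Lambda^L)_u+\EE[J_1]\int X_u^{(a)}\lambda_u\,du$ via Lemma \ref{lemcomp}, we obtain for $0\le s\le t\le T$
\begin{align*}
    L_tX_t^{(a)}-L_sX_s^{(a)}=\int_s^t L_{u-}\,dX_u^{(a)}+\int_s^t X_u^{(a)}\,d(L-\Lambda^L)_u+\EE[J_1]\int_s^t X_u^{(a)}\lambda_u\,du.
\end{align*}
Taking $\EE[\cdot\mid\mathcal F_s]$, the first two integrals vanish by (2) and (1), and Fubini --- valid since $\int_s^t\EE[X_u^{(a)}\lambda_u]\,du<\infty$ by H\"older, Observation \ref{momentsobs} and the finiteness of all moments of $\lambda$ --- yields $\EE[L_tX_t^{(a)}\mid\mathcal F_s]=L_sX_s^{(a)}+\EE[J_1]\int_s^t\EE[\lambda_uX_u^{(a)}\mid\mathcal F_s]\,du$.

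I expect the genuine difficulty to be hidden in (2): the naive route bounds $\EE[L_u^2(X_u^{(a)})^2/v_u]$ and runs into the fact that the negative moments of $v$ have already been consumed to control the $(2+\varepsilon_1)$-th moment of $X^{(a)}$ --- the tension between Assumption \ref{as3}(2) and \ref{as3}(3) --- so a plain threefold H\"older split of $L^2$, $(X^{(a)})^2$ and $v^{-1}$ fails when $\varepsilon_1,\varepsilon_2$ are small. Dominating $L_{u-}$ by $L_T$ and then $[X^{(a)}]_T$ by $\sup_u(X_u^{(a)})^{2+\varepsilon_1}$ through BDG is precisely the trick that sidesteps this, reducing the whole lemma to the moment bounds already furnished by Observation \ref{momentsobs} and Lemma \ref{Anmomentcompound}; everything else is routine.
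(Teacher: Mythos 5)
Your proof is correct. Parts (1) and (3) coincide with the paper's own argument: the same $\sup_u(X_u^{(a)})^2\cdot[L]_T$ bound followed by a twofold H\"older and Doob for (1), and the same It\^o-product decomposition $L_tX_t^{(a)}=\int_0^tL_{u-}\,dX_u^{(a)}+\int_0^tX_u^{(a)}\,dL_u$ (with $[L,X^{(a)}]\equiv0$ because $X^{(a)}$ is continuous) for (3). Part (2) is where you diverge, and your route is genuinely different and, in fact, repairs a gap. The paper expands $d[X^{(a)}]_t=\bigl[(\theta_t^{(a)})^2+a^2v_t\bigr](X_t^{(a)})^2\,dt$ and controls $\EE\bigl[L_{u-}^2(\theta_u^{(a)})^2(X_u^{(a)})^2\bigr]$ by a threefold H\"older with exponents $p_2=1+\varepsilon_2$, $p_3=1+\varepsilon_1/2$ and $p_1=p_2p_3/(p_2p_3-p_2-p_3)$, invoking the negative moments $\EE[(1/v_u)^{1+\varepsilon_2}]$ from Observation~\ref{momentsobs}. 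But $p_1\ge1$ requires $1/p_2+1/p_3\le1$, equivalently $(p_2-1)(p_3-1)\ge1$, i.e.\ $\varepsilon_1\varepsilon_2\ge2$ --- a constraint the paper never states and which clashes with the surrounding text presenting $\varepsilon_1,\varepsilon_2$ as arbitrarily small. Your argument --- dominate $L_{u-}\le L_T$, apply BDG to the continuous local martingale $X^{(a)}-1$ together with Doob to obtain $\EE\bigl[[X^{(a)}]_T^{1+\varepsilon_1/2}\bigr]\le C\,\EE\bigl[\sup_u(X_u^{(a)})^{2+\varepsilon_1}\bigr]<\infty$, then a single H\"older against $L_T^2$ with conjugate exponent $(2+\varepsilon_1)/\varepsilon_1$ --- never opens up $[X^{(a)}]$, works for every $\varepsilon_1>0$, and dispenses entirely with the negative moments of $v$ (Lemma~\ref{lemAinver}, Assumption~\ref{as3}(3)) inside this lemma. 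Your closing diagnosis of where the real difficulty hides is exactly right.
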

\begin{proof}
    (1) Recall that we have defined $\theta_t^{(a)}:=\frac{1}{\sqrt{1-\rho^2}}\left(\frac{\mu_t-r}{\sqrt{v_t}}-a\rho\sqrt{v_t}\right)$, 
\begin{align*}
    Y_t^{(a)} &:=\exp\left(-\int_0^t\theta_u^{(a)}dB_u-\frac{1}{2}\int_0^t(\theta_u^{(a)})^2du\right), \\
    Z_t^{(a)} &:=\exp\left(-a\int_0^t\sqrt{v_u}dW_u-\frac{1}{2}a^2\int_0^tv_udu\right)
\end{align*}
and $X_t^{(a)}:=Y_t^{(a)}Z_t^{(a)}$ in Theorem \ref{risk}. By Lemma \ref{mainlemcomp}, $L-\Lambda^L$ is a square integrable $(\mathcal{F},\PP)$-martingale. To prove that $t\to\int_0^tX_u^{(a)}d(L-\Lambda^L)_u$ is a square integrable $(\mathcal{F},\PP)$-martingale we check that 
    \begin{align*}
     \EE\left[\int_0^T\left(X_u^{(a)}\right)^2d[L-\Lambda^L]_u\right]<\infty. 
    \end{align*}
The quadratic variation of the compound Hawkes process is given by $[L]_t=\sum_{i=1}^{N_t}J_i^2$ and $[L-\Lambda^L]_t=[L]_t$. Using Hölder's inequality with $p=1+\frac{\varepsilon_1}{2}>1$ and $q=\frac{p}{p-1}>1$ (recall Assumption \ref{as3} and Observation \ref{momentsobs} for the definition of $\varepsilon_1$) and  Doob's martingale inequality, see \cite[Section 2.1.2, Theorem 2.1.5]{applebaum_2009}, in the last step we have
\begin{align*}
    \EE\left[\int_0^T\left(X_u^{(a)}\right)^2d[L-\Lambda^L]_u\right] &= \EE\left[\int_0^T\left(X_u^{(a)}\right)^2d[L]_u\right] \\
    & \leq \EE\left[\sup_{t\in[0,T]}\left(X_t^{(a)}\right)^2[L]_T\right] \\
    & \leq \EE\left[\sup_{t\in[0,T]}\left(X_t^{(a)}\right)^{2p}\right]^{1/p}\EE\left[[L]_T^q\right]^{1/q}  \\
    & \leq \EE\left[\sup_{t\in[0,T]}\left(X_t^{(a)}\right)^{2+\varepsilon_1}\right]^{1/p}\EE\left[[L]_T^q\right]^{1/q}  \\
    & \leq q\EE\left[\left(X_T^{(a)}\right)^{2+\varepsilon_1}\right]^{1/p}\EE\left[[L]_T^q\right]^{1/q} <\infty,
\end{align*}
where $\EE\left[\left(X_T^{(a)}\right)^{2+\varepsilon_1}\right]<\infty$ by Observation \ref{momentsobs} and $\EE\left[[L]_T^q\right]<\infty$ by Lemma \ref{Anmomentcompound}. We conclude that the process $t\to\int_0^tX_u^{(a)}d(L-\Lambda^L)_u$ is a square integrable $(\mathcal{F},\PP)$-martingale.

(2)  By Theorem \ref{risk} and Observation \ref{momentsobs}, $X^{(a)}$ is a square integrable martingale. To prove that $t\to\int_0^tL_{u-}dX_u^{(a)}$ is a square integrable $(\mathcal{F},\PP)$-martingale we check that
\begin{align*}
    \EE\left[\int_0^TL_{u-}^2d[X^{(a)}]_u\right]<\infty.
\end{align*}
The quadratic variation of $X^{(a)}$ is given by $d[X^{(a)}]_t=\left[\left(\theta_t^{(a)}\right)^2+a^2v_t\right]\left(X_t^{(a)}\right)^2dt$. Then, 
\begin{align}\label{co}
    \EE\left[\int_0^TL_{u-}^2d[X^{(a)}]_u\right] & = \EE\left[\int_0^TL_{u-}^2\left[\left(\theta_u^{(a)}\right)^2+a^2v_u\right]\left(X_u^{(a)}\right)^2du\right]\notag \\ 
    & =\int_0^T\EE\left[L_{u-}^2\left(\theta_u^{(a)}\right)^2\left(X_u^{(a)}\right)^2\right]du+a^2\int_0^T\EE\left[L_{u-}^2v_u\left(X_u^{(a)}\right)^2\right]du.
\end{align}
We focus on the first term in \eqref{co}, applying Hölder's inequality with $p_1=\frac{p_2p_3}{p_2p_3-p_2-p_3}>1$, $p_2=1+\varepsilon_2>1$, $p_3=1+\frac{\varepsilon_1}{2}>1$,  and then Doob's martingale inequality, see \cite[Section 2.1.2, Theorem 2.1.5]{applebaum_2009}, to the last expectation we get
\begin{align}\label{cota}
    \EE\left[L_{u-}^2\left(\theta_u^{(a)}\right)^2\left(X_u^{(a)}\right)^2\right]&\leq\EE[L_u^{2p_1}]^{\frac{1}{p_1}}\EE\left[\left(\theta_u^{(a)}\right)^{2p_2}\right]^{\frac{1}{p_2}}\EE\left[\left(X_u^{(a)}\right)^{2p_3}\right]^{\frac{1}{p_3}} \notag\\
    & = \EE[L_u^{2p_1}]^{\frac{1}{p_1}}\EE\left[\left(\theta_u^{(a)}\right)^{2+2\varepsilon_2}\right]^{\frac{1}{p_2}}\EE\left[\left(X_u^{(a)}\right)^{2+\varepsilon_1}\right]^{\frac{1}{p_3}} \notag\\
    & \leq \left(\frac{2+\varepsilon_1}{1+\varepsilon_1}\right)^{2}\EE[L_T^{2p_1}]^{\frac{1}{p_1}}\EE\left[\left(\theta_u^{(a)}\right)^{2+2\varepsilon_2}\right]^{\frac{1}{p_2}}\EE\left[\left(X_T^{(a)}\right)^{2+\varepsilon_1}\right]^{\frac{1}{p_3}}.
\end{align}
Note that $\EE[L_T^{2p_1}]<\infty$ by Lemma \ref{Anmomentcompound} and $\EE\left[\left(X_T^{(a)}\right)^{2+\varepsilon_1}\right]<\infty$ by Observation \ref{momentsobs}. Applying Hölder's inequality for sums we get
\begin{align*}
    \EE\left[\left(\theta_u^{(a)}\right)^{2+2\varepsilon_2}\right]\leq\frac{2^{1+2\varepsilon_2}}{(1-\rho^2)^{1+\varepsilon_2}}\left[D^{1+\varepsilon_2}\EE\left[\left(\frac{1}{v_u}\right)^{1+\varepsilon_2}\right]+(a\rho)^{2+2\varepsilon_2}\EE\left[v_u^{1+\varepsilon_2}\right]\right]
\end{align*}
where  $D=\sup_{t\in[0,T]}(\mu_t-r)^2<\infty$. By Observation \ref{momentsobs} and Lemma \ref{lemintegr} we have that
\begin{align*}
    \int_0^T\EE\left[\left(\theta_u^{(a)}\right)^{2+2\varepsilon_2}\right]<\infty.
\end{align*}
Applying Hölder's inequality with $p_2>1$ and $q_2=\frac{p_2}{p_2-1}>1$ we obtain
\begin{align*}
    \int_0^T\EE\left[\left(\theta_u^{(a)}\right)^{2+2\varepsilon_2}\right]^{\frac{1}{p_2}}du\leq T^{\frac{p_2-1}{p_2}}\left(\int_0^T\EE\left[\left(\theta_u^{(a)}\right)^{2+2\varepsilon_2}\right]du\right)^{\frac{1}{p_2}}<\infty.
\end{align*}
By \eqref{cota} this implies that
\begin{align*}
    \int_0^T\EE\left[L_{u-}^2\left(\theta_u^{(a)}\right)^2\left(X_u^{(a)}\right)^2\right]du<\infty.
\end{align*}
Similarly, using Lemma \ref{Anmomentcompound}, Lemma \ref{lemAintegrnpower} and Observation \ref{momentsobs} we can show that the second term in \eqref{co} is finite. We conclude that the process $t\to\int_0^tL_{u-}dX_u^{(a)}$ is a square integrable $(\mathcal{F},\PP)$-martingale.

(3) Applying Itô formula, using that $L_0=0$, that $L$ is of finite variation and $X^{(a)}$ is continuous we have
\begin{align*}
    L_tX_t^{(a)}=\int_0^tL_{u-}dX_u^{(a)}+\int_0^tX_u^{(a)}dL_u.
\end{align*}
Using that the processes  $t\to\int_0^tL_{u-}dX_u^{(a)}$ and  $t\to\int_0^tX_u^{(a)}d(L-\Lambda^L)_u$ are square integrable  $(\mathcal{F},\PP)$-martingales and the expression of $\Lambda^L$ given in Lemma \ref{mainlemcomp} we obtain
\begin{align*}
\EE[L_tX_t^{(a)}|\mathcal{F}_s]&=\EE\left[\int_0^tL_{u-}dX_u^{(a)}\Big|\mathcal{F}_s\right]+\EE\left[\int_0^tX_u^{(a)}dL_u\Big|\mathcal{F}_s\right] \\
    & = \int_0^sL_{u-}dX_u^{(a)}+\int_0^sX_u^{(a)}dL_u+\EE\left[\int_s^tX_u^{(a)}dL_u\Big|\mathcal{F}_s\right]\\
     & = L_sX_s^{(a)}+\EE\left[\int_s^tX_u^{(a)}d(L-\Lambda^L)_u\Big|\mathcal{F}_s\right]+\EE\left[\int_s^tX_u^{(a)}d\Lambda^L_u\Big|\mathcal{F}_s\right] \\
     & = L_sX_s^{(a)}+\EE[J_1]\int_s^t\EE[\lambda_uX_u^{(a)}|\mathcal{F}_s]du.
\end{align*}
\end{proof}

\begin{manualprop}{\ref{qacomhawkesmain}}\label{qacomhawkes} Let $\QQ(a)\in\mathcal{E}_{m}(Q_1,2+\varepsilon_1)$, then
\begin{enumerate}
    \item $N-\Lambda^N$ is a $(\mathcal{F},\QQ(a))$-martingale.
    \item $L-\Lambda^L$ is a $(\mathcal{F},\QQ(a))$-martingale.
\end{enumerate}
\end{manualprop}
\begin{proof}
First of all, note that for $0\leq s\leq t\leq T$, 
\begin{align*}
\EE\left[\int_0^t\lambda_uX_t^{(a)}du\Big|\mathcal{F}_s\right]&=\int_0^s\EE[\lambda_uX_t^{(a)}|\mathcal{F}_s]du+\int_s^t\EE[\lambda_uX_t^{(a)}|\mathcal{F}_s]du \\
& = X_s^{(a)}\int_0^s\lambda_udu+\int_s^t\EE[\EE[\lambda_uX_t^{(a)}|\mathcal{F}_u]\mathcal{F}_s]du \\
& = X_s^{(a)}\int_0^s\lambda_udu+\int_s^t\EE[\lambda_uX_u^{(a)}|\mathcal{F}_s]du.
\end{align*}
    By Lemma \ref{lemmarting} we know that $\EE[L_tX_t^{(a)}|\mathcal{F}_s]=L_sX_s^{(a)}+\EE[J_1]\int_s^t\EE[\lambda_uX_u^{(a)}|\mathcal{F}_s]du$. Using the previous equalities we obtain, 
    \begin{align*}
        \EE^{\QQ(a)}\left[L_t-\EE[J_1]\int_0^t\lambda_udu \Big| \mathcal{F}_s\right]  = & \ \frac{1}{X_s^{(a)}}\EE\left[L_tX_t^{(a)}-\EE[J_1]\int_0^t\lambda_uX_t^{(a)}du\Big|\mathcal{F}_s\right] \\
          = \ &  \frac{1}{X_s^{(a)}}\Bigg[L_sX_s^{(a)}+\EE[J_1]\int_s^t\EE[\lambda_uX_u^{(a)}|\mathcal{F}_s]du
          \\ 
          & -\EE[J_1]X_s^{(a)}\int_0^s\lambda_udu-\EE[J_1]\int_s^t\EE[\lambda_uX_u^{(a)}|\mathcal{F}_s]du\Bigg] \\
          = \ & L_s-\EE[J_1]\int_0^s\lambda_udu.
    \end{align*}
    This finishes the proof.
\end{proof}

\subsection{Derivation of Thiele's PIDE for unit-linked policies}
\begin{manuallemma}{\ref{lem2}}\label{lem2A} Let $\QQ(a)\in\mathcal{E}_{m}(Q_1,2+\varepsilon_1)$, $\varphi\colon[0,T]\times\RR_+\to\RR_+$ such that $\EE^{\QQ(a)}[|\varphi(s,S_s)|]<\infty$ for all $s\in[0,T]$. Then, there exists a function $U^{\varphi,a}\colon[0,T]^2\times\mathcal{D}\to\RR_+$ such that
\begin{align}\label{Arelation}
    e^{-r(s-t)}\EE^{\QQ(a)}[\varphi(s,S_s) | \mathcal{F}_t ]=U_s^{\varphi,a}(t,S_t,v_t,\lambda_t)
\end{align}
where $s,t\in[0,T]$. Note that $U_s^{\varphi,a}(t,x,y,z)=e^{-r(s-t)}\varphi(s,x)$ for $t\in[s,T]$. 
\newline
Furthermore, fix  $s\in[0,T]$, if $U_s^{\varphi,a}\in\mathcal{C}^{1,2}$, it satisfies the following PIDE
\begin{align}\label{Apid2}
    \partial_t U_s^{\varphi,a}(t,x,y,z)+\mathcal{L}^aU_s^{\varphi,a}(t,x,y,z)=rU_s^{\varphi,a}(t,x,y,z),
\end{align}
where $\mathcal{L}^a$ is defined in Definition \ref{loperator}, $(t,x,y,z)\in[0,s]\times\mathcal{D}$ and final condition $U_s^{\varphi,a}(s,x,y,z)=\varphi(s,x)$.
\end{manuallemma}
\begin{proof}
    By Lemma \ref{lema1} there exists a function $Z^{\varphi,a}\colon[0,T]^2\times\mathcal{D}\to\RR_+$  such that
    \begin{align*}
        \EE^{\QQ(a)}[\varphi(s,S_s) | \mathcal{F}_t ]=Z_{s}^{\varphi,a}(t,S_t,v_t,\lambda_t),
    \end{align*}
where $s,t\in[0,T]$. Define the function $U^{\varphi,a}\colon[0,T]^2\times\mathcal{D}\to\RR_+$ by 
\begin{align*}
    U_s^{\varphi,a}(t,x,y,z):=e^{-r(s-t)}Z_{s}^{\varphi,a}(t,x,y,z).
\end{align*}
Then, \eqref{Arelation} is satisfied. Moreover, fix $s\in[0,T]$, if $U_s^{\varphi,a}\in\mathcal{C}^{1,2}$, $Z_{s}^{\varphi,a}\in\mathcal{C}^{1,2}$ and it satisfies the following PIDE 
    \begin{align}\label{Apidbefore}
         \partial_tZ^{\varphi,a}_s(t,x,y,z)+\mathcal{L}^aZ^{\varphi,a}_s(t,x,y,z)=0.
    \end{align}
for $(t,x,y,z)\in[0,T]\times\mathcal{D}$ and final condition $Z^{\varphi,a}_s(s,x,y,z)=\varphi(s,x)$. Note that 
     \begin{align*}
        \partial_tZ^{\varphi,a}_s(t,x,y,z)=e^{r(s-t)}\left(-rU^{\varphi,a}_s(t,x,y,z)+\partial_t    U^{\varphi,a}_s(t,x,y,z) \right).
    \end{align*}
    and
    \begin{align*}
        \mathcal{L}^aZ^{\varphi,a}_s(t,x,y,z)=e^{r(s-t)}\mathcal{L}^aU^{\varphi,a}_s(t,x,y,z).
    \end{align*}
    Replacing that in \eqref{Apidbefore} we get that $U_s^{\varphi,a}$ satisfies the following PIDE
    \begin{align*}
        \partial_t U_s^{\varphi,a}(t,x,y,z)+\mathcal{L}^aU_s^{\varphi,a}(t,x,y,z)=rU_s^{\varphi,a}(t,x,y,z).
    \end{align*}
for $(t,x,y,z)\in[0,s]\times\mathcal{D}$ and final condition $U_s^{\varphi,a}(s,x,y,z)=\varphi(s,x)$. 
\end{proof}

\newpage
\printbibliography

\end{document}